\definecolor{DarkRed}{rgb}{0.368,0.097,0.078}
\declaretheoremstyle[
	    spaceabove=\topsep, 
	    spacebelow=\topsep, 
	    headfont=\normalfont\bfseries,
	    bodyfont=\normalfont\itshape,
	    notefont=\normalfont\bfseries,
	    notebraces={(}{)},
	    postheadspace=0.5em, 
	    headpunct={},
    ]{theorem}
\declaretheorem[style=theorem,numberwithin=section]{theorem}
\declaretheoremstyle[
	    spaceabove=\topsep, 
	    spacebelow=\topsep, 
	    headfont=\normalfont\bfseries,
	    bodyfont=\normalfont,
	    notefont=\normalfont\bfseries,
	    notebraces={(}{)},
	    postheadspace=0.5em, 
	    headpunct={},
    ]{definition}
\declaretheoremstyle[
        spaceabove=\topsep, 
        spacebelow=\topsep, 
        headfont=\normalfont\bfseries,
        bodyfont=\normalfont,
        notefont=\normalfont\bfseries,
        notebraces={}{},
        postheadspace=0.5em, 
        qed=$\blacksquare$, 
        headpunct={},
    ]{proofstyle}
\declaretheorem[style=proofstyle,numbered=no,name=Proof]{proof}
\declaretheorem[style=theorem,sibling=theorem,name=Lemma]{lemma}
\declaretheorem[style=theorem,sibling=theorem,name=Corollary]{corollary}
\declaretheorem[style=theorem,sibling=theorem,name=Proposition]{proposition}
\declaretheorem[style=theorem,numbered=no,name=Theorem]{theorem*}
\declaretheorem[style=theorem,numbered=no,name=Lemma]{lemma*}
\declaretheorem[style=theorem,numbered=no,name=Corollary]{corollary*}
\declaretheorem[style=theorem,numbered=no,name=Proposition]{proposition*}
\declaretheorem[style=theorem,numbered=no,name=Claim]{claim*}
\declaretheorem[style=theorem,numbered=no,name=Fact]{fact*}
\declaretheorem[style=theorem,numbered=no,name=Observation]{observation*}
\declaretheorem[style=theorem,numbered=no,name=Conjecture]{conjecture*}
\declaretheorem[style=definition,sibling=theorem,name=Definition]{definition}
\declaretheorem[style=definition,sibling=theorem,name=Remark]{remark}
\declaretheorem[style=definition,numbered=no,name=Definition]{definition*}
\declaretheorem[style=definition,numbered=no,name=Remark]{remark*}
\declaretheorem[style=definition,numbered=no,name=Example]{example*}
\declaretheorem[style=definition,numbered=no,name=Question]{question*}
\DeclareMathAlphabet{\mathbfsf}{\encodingdefault}{\sfdefault}{bx}{n}
\DeclareMathOperator*{\argmax}{arg\!\max}
\let\Pr\relax
\DeclareMathOperator{\Pr}{\mathbb{P}}
\newcommand{\abs}[1]{\left| #1 \right|}
\newcommand{\ind}[1]{\mathbb{I}\set{#1}}
\renewcommand{\O}{\mathcal{O}}
\newcommand{\poly}{\mathrm{poly}}
\newcommand{\eps}{\varepsilon}
\let\oldtfrac\tfrac
\renewcommand{\tfrac}[2]{\smash{\oldtfrac{#1}{#2}}}
\let\nablaold\nabla
\renewcommand{\nabla}{\nablaold\mkern-1mu}
\crefname{observation}{Observation}{Observations}
\newcommand{\paren}[1]{\left( #1 \right)}
\newcommand{\cR}{\mathcal{R}}
\newcommand{\cO}{\mathcal{O}}
\newcommand{\cM}{\mathcal{M}}
\newcommand{\cA}{\mathcal{A}}
\newcommand{\eeq}{\end{aligned}\end{equation}}
\newcommand{\beq}{\begin{equation}\begin{aligned}}
\pgfplotsset{compat=1.17} 
\newcommand{\bi}{\begin{itemize}}
\newcommand{\ei}{\end{itemize}}
\newcommand{\be}{\begin{enumerate}}
\newcommand{\ee}{\end{enumerate}}
\newcommand{\bea}{\begin{enumerate}[label=(\alph*)]}
\newcommand{\eea}{\end{enumerate}}
\newcommand{\remove}[1]{}
\newcommand{\Draft}[1]{\ifnum\draft=1\texttt{ #1} \fi}
\newcommand{\set}[1]{\ens{#1}}
\newcommand{\R}{{\mathbb R}}
\newcommand{\N}{{\mathbb{N}}}
\renewcommand{\cref}{\Cref}
\newcommand{\ens}[1]{\left\{#1\right\}}
\newcommand{\cb}{\mathbf{c}}
\newcommand{\subalign}[1]{%
  \vcenter{%
    \Let@ \restore@math@cr \default@tag
    \baselineskip\fontdimen10 \scriptfont\tw@
    \advance\baselineskip\fontdimen12 \scriptfont\tw@
    \lineskip\thr@@\fontdimen8 \scriptfont\thr@@
    \lineskiplimit\lineskip
    \ialign{\hfil$\m@th\scriptstyle##$&$\m@th\scriptstyle{}##$\hfil\crcr
      #1\crcr
    }%
  }%
}
\newcommand{\Prob}[1]{\Pr\left( #1 \right)}
\newcommand{\Probb}[2]{\Pr_{#1} \left( #2 \right)}
\newcommand{\Probu}[2]{\underset{#1}{\Pr} \left( #2 \right)}
\newcommand{\E}[1]{{\mathbb E}\left[ #1 \right]}
\newcommand{\Eb}[2]{{\mathbb E}_{#1} \left[ #2 \right]}
\newcommand{\Eu}[2]{\underset{#1}{\mathbb{E}} \left[ #2 \right]}
\newcommand{\Eun}[2]{\underset{#1}{\mathbb{E}} #2 }
\newcommand{\D}{\mathcal{D}}
\newcommand{\V}{\mathcal{V}}
\newcommand{\F}{\mathcal{F}}
\newcommand{\W}{\mathcal{W}}
\newcommand{\Pd}{\mathcal{P}}
\newcommand{\Zc}{\mathcal{Z}}
\newcommand{\bbf}{\mathbf{b}}
\newcommand{\pbf}{\mathbf{p}}
\newcommand{\p}{\mathbf{p}}
\newcommand{\pj}[1]{\p_{1:#1}}
\newcommand{\e}{\mathbf{e}}
\newcommand{\sigmab}{\boldsymbol{\sigma}}
\newcommand{\pistP}{\pi^{\star}_{v,w,\Pd}}
\newcommand{\Pdst}{\Pd^{\star}}
\newcommand{\pst}{\p^{\star}}
\newcommand{\ppst}{p^{\star}}
\newcommand{\bb}{\mathbf{b}}
\newcommand{\Ptilde}{\tilde{\Pd}}
\newcommand{\bbtilde}{\tilde{\bb}}
\newcommand{\rtilde}{\tilde{r}}
\newcommand{\alphatilde}{\tilde{\alphabf}}
\newcommand{\alphabf}{\boldsymbol{\alpha}}
\newcommand{\zb}{\mathbf{z}}
\newcommand{\z}{\mathbf{z}}
\newcommand{\cont}{\textit{continue}}
\newcommand{\bbhat}{\hat{\bbf}}
\newcommand{\alphahat}{\hat{\alphabf}}
\newcommand{\phat}{\hat{\p}}
\newcommand{\pitilde}{\tilde{\pi}_{v,w,\Pd}}
\newcommand{\pibar}{\bar{\pi}_{v,w,\Pd}}
\newcommand{\pihat}{\hat{\pi}_{v,w,\Pd}}
\newcommand{\piP}{\pi_{v,w, \Pd}}
\newcommand{\utilP}{\util_{v,w,\Pd}}
\newcommand{\utils}{\util_{v,w}}
\newcommand{\utiltags}{\util_{v',w'}}
\newcommand{\qs}{q_{v,w}}
\newcommand{\mus}{\mu_{v,w}}
\newcommand{\thetahat}{\hat{\theta}_{v,w,\Pd}}
\newcommand{\thetahats}{\hat{\theta}_{v,w}}
\newcommand{\thetahattag}{\hat{\theta}_{v',w',\Pd}}
\newcommand{\thetahattags}{\hat{\theta}_{v',w'}}
\newcommand{\fatgamma}[1]{\fat_{\gamma}{\left( #1 \right)}}
\newcommand{\pds}{\p^{\star}_{\D_{\sigmab}}}
\newcommand{\ov}{\overline{v}}
\newcommand{\uv}{\underline{v}}
\newcommand{\vmax}{\overline{v}}
\newcommand{\vmin}{\underline{v}}
\newcommand{\fat}{\operatorname{fat}}
\newcommand{\pdim}{\operatorname{Pdim}}
\newcommand{\wmax}{\hat{w}}
\newcommand{\Rpn}{\mathcal{R}^{\text{pure}}_{\geq}}
\newcommand{\ERM}{\operatorname{ERM}}
\newcommand{\rev}{r}
\newcommand{\revdec}{\text{rev}}
\newcommand{\util}{u}
\newcommand{\utildec}{\text{util}}
\newcommand{\regret}{\text{Regret}}
\newcommand{\MDv}{M_{\D_v}}
\newcommand{\MP}{M_{\Pd}}
\newcommand{\pureset}{\Omega^{\text{pure}}}
\newcommand{\npureset}{\Omega^{\text{pure}}_{\geq}}
\newcommand{\mixedset}{\Omega^{\text{mixed}}}
\newcommand{\mixedsetV}{\Omega_{\V}^{\text{mixed}}}
\newcommand{\puresethat}{\hat{\Omega}^{\text{pure}}}
\newcommand{\Phat}{\hat{\Pd}}
\newcommand{\Rc}{\mathcal{R}}
\newcommand{\cU}{\mathcal{U}}
\def\@fnsymbol#1{\ensuremath{\ifcase#1\or \dagger\or \ddagger\or
   \mathsection\or \mathparagraph\or \|\or **\or \dagger\dagger
   \or \ddagger\ddagger \else\@ctrerr\fi}}
\title{Learning Revenue Maximization using Posted Prices for Stochastic Strategic Patient Buyers}
\author{
    Eitan-Hai Mashiah $^\star$ \thanks{School of Computer Science, Tel Aviv University; 
    \texttt{eitanhaimashiah@gmail.com}.}
    \and Idan Attias $^\star$ \thanks{Department of Computer Science, Ben-Gurion University; \texttt{idanatti@post.bgu.ac.il}.} 
    \and Yishay Mansour\thanks{School of Computer Science, Tel Aviv University and Google Research, Tel Aviv; \texttt{mansour.yishay@gmail.com}.}
}
\begin{document}
\maketitle
\def\thefootnote{$\star$}\footnotetext{Equal contribution.}\def\thefootnote{\arabic{footnote}}

\begin{abstract}
We consider a seller faced with buyers which have the ability to delay their decision, which we call patience.
Each buyer's type is composed of value and patience, and it is sampled i.i.d. from a distribution.
The seller, using posted prices, would like to maximize her revenue from selling to the buyer. 
In this paper, we formalize this setting and characterize the resulting Stackelberg equilibrium, where the seller first commits to her strategy, and then the buyers best respond. Following this, we show how to compute both the optimal pure and mixed strategies. 
We then consider a learning setting, where the seller does not have access to the distribution over buyer's types. Our main results are the following.
We derive a sample complexity bound for the learning of an approximate optimal pure strategy, by computing the fat-shattering dimension of this setting.     
Moreover, we provide a general sample complexity bound for the approximate optimal mixed strategy. 
We also consider an online setting and derive a vanishing regret bound with respect to both the optimal pure strategy and the optimal mixed strategy. 
\end{abstract}

\section{Introduction}

Pricing is ubiquitous, and it is the primary means by which sellers and buyers interact. It is no surprise that revenue maximization pricing is the topic of a vast amount of literature in economic theory and algorithmic game theory (see \cite{HartlineBook,NisaRougTardVazi07}). In most of the literature the seller and buyer interact instantaneously, and either a transaction occurs (the buyer purchases an item) or not. We are interested in this work in the case where the buyer can potentially delay the purchase decision, depending on his type. We call such buyers \emph{patient buyers}.

There are many examples of patient buyers in the real world. One example is shipping cost, where there are different costs depending on the duration of the shipping. Normally, same day delivery is more expensive than next day delivery, which is more expensive than two-day delivery, and so on. Another example is an online merchant whose production cost varies with the delivery date. Items that have to be shipped immediately cost more to produce than items that need to be shipped after 10 business days. Another scenario is regarding online merchant who observe that a buyer has a shopping bag that was not purchased. The merchant sometimes offers the buyer a limited time discount on the items in the buyer's shopping bag. The buyer has uncertainty regarding future prices after the discount terminates, the prices might return to the original ones or there may be a new offer with an even larger discount. Our model of patient buyers abstracts this phenomena from the buyer perspective, the ability to prolong the time to receive of the desired item.

Patient buyers were introduced in \citet{feldman2016online} and later studied in \cite{koren2017bandits,KorenLM17}. They presented an adversarial online model where the buyers have a valuation and duration for the purchase, namely the sequence of arrivals is controlled by an adversary. They studied the regret compared to the best fixed price. In this work we consider a stochastic setting, and we study the expected revenue from an optimal sequence of prices (rather than a single fixed price).

Our model of patient buyers can be intuitively described as follows. We have a seller that has an unlimited supply from a single item, and would like to maximize her expected revenue.
Each buyer has a type $(v,w)$ where $v$ is his value for the item and $w\in[\wmax]$ is his patience, where $\wmax$ is the maximum patience.
A buyer of type $(v,w)$ has a value $v$ for the item if it is purchased in the first $w$ time steps from his arrival. The types of the buyers are sampled i.i.d. from a distribution $\D$. The seller proposes a price for the buyer at each time step, and observes whether the buyer bought the item or continued to the next time step.
Initially, the seller commits to her pricing strategy and the buyer best respond to it, i.e., this is a \emph{Stackelberg game} where the seller is the leader and the buyer is the follower.
We consider both the case where the distribution $\D$ is known and the case where $\D$ is unknown.

\subsection{Our contributions}
We initially assume that the buyer's type distribution $\D$ is known and derive the following results.
\begin{itemize}[leftmargin=0.5cm]
\item
We show a separation between the best fixed price, the best pure strategy, which is a fixed sequence of prices,  and the best mixed strategy, which is a distribution over price sequences.
\item
We characterize the optimal pure strategy of the seller and show that the sequence of prices are non-increasing and that the buyers will always buy at the end of them patience, if they decide to buy.
\item
For mixed strategies we characterize the buyer's best response strategy.
\item
We show how to compute efficiently the optimal pure strategy.
For the optimal mixed strategy, we give an algorithm which is exponential in the maximum patience and polynomial in the support of the distribution.
\end{itemize}

We then consider a learning setting, where the seller does not know the distribution $\D$ over buyer's types, but can learn it from samples.
\begin{itemize}[leftmargin=0.5cm]
    \item 
    We derive a sample complexity bound for the learning of an approximate optimal pure strategy, by computing the fat-shattering dimension of the setting and showing that it is linear in the maximum patience of a buyer, i.e., $\wmax$. Using the bound on the fat shattering dimension, we derive an upper bound and a lower bound on the sample complexity of
\[
\cO\paren{\min \left\{
\frac{\wmax}{\eps^2}
\;,\;
\frac{\log(\wmax)}{\eps^3}\right\}
+\frac{1}{\eps^2}\log\frac{1}{\delta}},
\qquad
\Omega \paren{
\frac{\wmax}{\eps}+
\frac{1}{\eps^2}\log\frac{1}{\delta}
}.
\]
    \item 
    We give a general sample complexity bound for the approximate optimal mixed strategy. Our sample bound is 
\[
    \cO\paren{
\frac{{\wmax}^4}{\eps^3}+\frac{\wmax^2}{\eps^2}\log\frac{1}{\delta}}.
\]
    \item
    We consider an online setting with $T$ buyers whose type is drawn i.i.d. from an unknown distribution $\D$. We derive a regret bound with respect to the optimal pure strategy of
    $\widetilde{\cO}\paren{\sqrt{T}\sqrt{\wmax}}$.
    We derive a regret bound with respect to the optimal mixed strategy of
    $
    \widetilde{\cO}\paren{ T^{2/3}\wmax^{4/3}}
    $.
\end{itemize}
We deferred the conclusion, discussion and open problems to Appendix~\ref{app:discussion}.

\subsection{Related work}

\noindent\textbf{The FedEx problem.}
The FedEx problem was presented in \cite{fiat2016fedex} and later studied in \cite{saxena2018menu,devanur2020optimal}.
In the FedEx problem the seller is faced with a buyer which has a varying patience for the duration of the delivery date of a package.
The seller offers the buyer a menu, with a lottery for each possible duration.
The buyer selects one of those lotteries and later pays the realized price of the lottery.
The main issue is that this mechanism maximizes the revenue of the seller, over all incentive-compatible mechanisms.

While the two models are clearly related,
there are a few important differences between the two models.
The main difference is regarding \emph{what} the buyer observes and \emph{when} it observes it.
In the Fed-Ex problem the buyer observes only the menu.
Our setting is more interactive. In each day the buyer first observes the realized price (which is potentially drawn from a distribution) and only then decides if to buy or wait. This implies that the buyer has more information, observing the sequence of prices until the current day, before deciding whether to buy or wait. In contrast, in the Fed-Ex the buyer never observes any realization of prices, except for the lottery it selected.
For example, if the FedEx problem has two lotteries, both uniform $[0,1]$, then the buyer will pay an expected revenue of $1/2$ regardless which lottery he picks. In contrast, in our setting if the seller offers in the first two days a uniform price $[0,1]$, the buyer can decide to buy in the first day if the price is less than $1/2$ and otherwise buy in the second day. This would give an expected revenue of $3/8$.
A minor issue is that we focus on posted prices while the FedEx allow for an arbitrary mechanism.

\noindent{\bf Revenue maximization.}
The seminal work of \citet{myerson1981optimal} derives the optimal mechanism for revenue maximization, and shows that for many distributions it coincides with a sealed bid second price auction with a reserve price. That model allows for single parameter buyers, and does not allow to incorporate the dimension of patience.
The main focus of this paper is on pricing strategies for patient buyers which falls outside that framework.

The work of \citet{kleinberg2003value} derives regret bounds for a seller faced with multiple stochastic buyers. The regret is with respect to the best fixed price. In contrast, we compete with the optimal pure and mixed strategies over sequences of prices, due to our patient buyers. 

\noindent{\bf Repeated interaction between single seller and single buyer.}
The works of \cite{mohri2015revenue,mohri2014revenue,AminRS13,AminRS14,VanuntsD2019}  consider a model of repeated interaction between a single buyer and a single seller. The main issue is that due to the repeated interaction, the buyer has an incentive to lower future prices at the cost of sacrificing current utility.
They define strategic regret and derive near optimal strategic regret bounds for various valuation models, using the fact that the buyer's utility is discounted.
First, the buyer has no patience, at each step he needs to decide if to buy or not.
Second, they consider a single fixed buyer while we consider a distribution over buyer's valuation and patience.
Third, they  use discounting to decay the buyer's utility over time, while in our model the buyer's utility depends only on the paid price.
Lastly, they compare to the best fixed price while we compare to either a pure or mixed  price sequence.

\noindent{\bf Patient buyers.} As mentioned in the introduction, patient buyers were introduced in \citet{feldman2016online} and later studied in \cite{koren2017bandits,KorenLM17}. The focus of those works is on regret minimization with respect to the best fixed price.
They consider an adversarial online model where the buyers have a valuation and duration for the purchase, namely the sequence of arrivals is controlled by an adversary. In this work we consider a stochastic setting, namely, the buyers types are sampled i.i.d. from a distribution. We compare the seller expected revenue to the optimal expected revenue from a pure strategy (a fixed sequence of prices) or mixed strategy (a distribution over price sequences). Clearly, our benchmarks allows for a much higher expected revenue.

\noindent{\bf Learning approximate revenue-maximizing mechanisms}
was initiated by \citet{BalcanBHM08}, using samples to design near optimal revenue-maximizing mechanism. 
\citet{HuangMR18} use i.i.d. samples to derive the optimal sell price.
The works of \cite{morgenstern2015pseudo,MorgensternR16,GonczarowskiN17} study the complexity of learning a near optimal revenue maximizing mechanism.
We differ from all this literature due to the patience of our buyers.

\noindent{\bf Stackelberg games.}
\citet{blum2019computing} derive hardness results for large action Stackelberg games.

\section{Model}

We consider a setting of a single seller and multiple buyers, where the seller has unlimited supply of a single item to sell. The seller observes a sequence of $T$ buyers, and with each buyer she interacts for $\wmax$ steps, in each she offers the buyer a (potentially different) price. Each buyer appears only once, and can purchase the item at most once.

The seller's pricing strategy may be either deterministic, $\p=(p_1,\ldots,p_{\wmax}) \in [0,1]^{\wmax}$, or randomized $ \Pd\in \Delta([0,1]^{\wmax})$. We refer to it as $\textit{pure}$ and $\textit{mixed}$ strategies, respectively. When a pure strategy uses only a single price, we refer to it as a \textit{fixed price}.
We assume no price discrimination, the seller plays the same strategy against each of the buyers. 
We denote a shorthand of a pricing vector by $\p_{1:i}=(p_1,\ldots,p_i)$.

Denote by $\e_1, \ldots, \e_{\wmax}$ the unit vectors of size $\wmax$ and by $\e_0$ the zero vector of size $\wmax$.
Namely, $\e_i = (0,\ldots,0,1,$ $ 0,\ldots,0)$ has $1$ in the $i$-th location, and $\e_0 = (0, \ldots,0)$. Define the buyer's decision whether to purchase the item at step $i$ while observing prices $p_1,\ldots,p_i$ by $\pi^i(\p_{1:i})\mapsto \{\e_i, \cont\}$, for  $i \in [\wmax]$.
The buyer's strategy $\pi_{v,w}= (\pi_{v,w}^1,\ldots,\pi_{v,w}^{\wmax})$ of a buyer with value $v$ and patience $w$, is online and defined as $\pi_{v,w}(\p)=\e_i$ if $i \leq w$ is the first step where $\pi^i_{v,w}(\p_{1:i})=\e_i$ or $\pi_{v,w}(\p)=\e_0$ if no such index $i \leq w$ exists, i.e., the buyer does not purchase the item. The \textit{utility} function of a buyer type $(v,w)$ given pricing $\p$ and decision $\e_i$ is defined as $\utildec_{v,w}(\p,\e_i)=\paren{v - \p\cdot\e_i} \ind{0<i\leq w}$ where $\mathbf{x} \cdot \mathbf{y}$ denotes the scalar product of $\mathbf{x}$ and $\mathbf{y}$. Note that $\utildec_{v,w}(\p,\e_0)=0$. The seller's \textit{revenue} for pricing $\p$ and decision $\e_i$ is defined as $\revdec(\p,\e_i) = \p \cdot \e_i=p_i$.

Define the utility of a buying strategy $\pi_{v,w}$ for a buyer type $(v,w)$, given a selling strategy $\Pd$ by
$$
\utilP(\pi_{v,w})=\Eu{\p \sim \Pd}{\utildec_{v,w}(\p, \pi_{v,w}(\p))}.
$$
The buyer would like to maximize his utility, and select
$\pi^{\star}_{v,w,\Pd} =\argmax_{\pi_{v,w}} \utilP(\pi_{v,w}).$
Define the total revenue of a selling strategy $\Pd$ for distribution $\D$ by $$\rev(\Pd;\D)= \Eun{\p\sim \Pd}{\Eu{(v,w)\sim \D}{\revdec(\p, \pi^{\star}_{v,w,\Pd}(\p))}}.$$
The seller would like to maximize her total revenue, and select $\Pd^{\star} =\argmax_{\Pd} \rev(\Pd;\D).$

\paragraph{Learning.}
In the learning setting the seller does not know the distribution $\D$ over buyer types $\Zc = [0,1] \times [\wmax]$, 
instead she receives an i.i.d. samples from $\D$ in order to learn a selling strategy which maximize her expected revenue.
We define the formal model of revenue learning with patient buyers.

\begin{definition}[Revenue PAC-learning]\label{def:revenue-learning}
For any $(\eps,\delta)\in(0,1)$, the sample complexity of $(\eps,\delta)$-PAC revenue learning with respect to a set of strategies $\Omega$, denoted by $\cM(\eps,\delta,\Omega)$, is defined as the smallest $m\in\N \cup \set{0}$, for which there exists
an algorithm $\cA :\Zc^*\rightarrow \Delta([0,1]^{\wmax})$, such that for any distribution $\D$ over 
$\Zc$, upon receiving a random sample $S\sim\D^m$, with probability $1-\delta$ it holds that,
$$\rev(\cA(S);\D) \geq \max_{\Pd\in \Omega}\rev(\Pd;\D)-\eps.$$
\end{definition}
We consider $\Omega$ to be the set of pure strategies, i.e., $\Omega=[0,1]^{\wmax}$, or 
mixed strategies, i.e., $\Omega=\Delta([0,1]^{\wmax})$.

Our second learning model is in the online setting, where the seller gets to see the sample sequentially instead of receiving the whole sample at once.
The seller is facing a sequence of $T$ buyers of types $z_1,\ldots,z_T$ such that each buyer type $z_t$ is drawn i.i.d. from the unknown distribution $\D$. We assume that the seller interacts with a single buyer at a time, that is, each round of the learning consists of one interaction between the seller and a buyer.

    We denote by $r(\Pd_{t} ; z_{t})$, the revenue of an online learner $\cA: (\Pd_{1:t-1}; z_{1:t-1})\mapsto \Pd_{t}$, at round $t$, given a buyer $z_{t}$.
The regret compared to a set of strategies $\Omega$ of a seller $\cA$ for playing strategies $\Pd_1,\ldots,\Pd_T$, given a sequence of buyer types $z_1,\ldots,z_T$, defined by
$$\regret_T^\Omega\big(\cA;\D\big)=\max_{\Pd^{\star}\in \Omega}\sum_{t=1}^T 
\Eu{z_t\sim \D}{r(\Pd^{\star};z_{t}) - r(\Pd_t;z_{t})}.$$

Similar to the offline setting, we consider both the case that $\Omega$ is the set of pure strategies,  and the case that $\Omega$ is the set of mixed strategies.

\paragraph{Notation.} 
Vectors are denoted by bold lower case letters, e.g., $\p$; we denote historical prices until step $i$ by $\pj{i}=(p_1, \ldots, p_i)$, where $\p_{1:0}$ denotes the null vector; $\D_v$ and $\D_w$ denote the marginal distributions of the buyer's value and patience of distribution $\D$, respectively; $\V$ and $\W$ denote the support of $\D_v$ and $\D_w$, respectively; $S_{\D}\subseteq\V\times\W$ denotes the support of $\D$; $[n]$ denotes the set $\set{1, \ldots, n}$; $\lesssim$ and $\gtrsim$ denote inequalities up to an absolute constant factor.

\section{Optimal pricing: characterization and planning}\label{sec:characterization-and-planning}
In this section we derive basic properties of our model and show how to compute both the optimal pure and mixed strategies given the distribution $\D$.

\noindent\textbf{Product distribution.}
We start by showing that when the distribution $\D$ is a product distribution over values and patience, and the distribution over buyers' values is regular, then the seller cannot outperform the single fixed price, as in \citet{myerson1981optimal} (see \cref{app:charact-warmup}). For this reason we focus on a joint distribution $\D$, and show that there is a separation between the best fixed price and best pure strategy, and also between the best pure strategy and the best mixed strategy (see \cref{app:charact-hierarchy}).

\noindent\textbf{Optimal pure selling strategy.}
In this section, we characterize the optimal pure selling strategies, and use it to compute efficiently an optimal pure strategy.

\begin{restatable}{theorem}{thmoptimalpure}
\label{thm:optimal-pure}
Assume the support of the marginal distribution of the buyer’s value, $\V$, is contained in $[\vmin, \vmax] \subseteq [0,1]$. Then, there exists an optimal non-increasing pure selling strategy using only prices from $[\vmin, \vmax]$. Moreover, if $\V$ is a finite set, there exists an optimal non-increasing pure selling strategy using only prices from $\V$.
\end{restatable}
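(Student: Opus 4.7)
The plan is built around the observation that, against a pure strategy $\p$, the buyer's best response depends only on the running minimum of $\p$ inside his patience window. Since $\p$ is deterministic and known to the buyer by Stackelberg commitment, a buyer $(v,w)$ solves $\max_{i\le w,\, p_i \le v}(v-p_i)$ and therefore buys at the cheapest affordable step, paying $m_w \eqdef \min_{i \le w} p_i$ whenever $m_w \le v$, and not at all otherwise. Hence
\[
\rev(\p;\D) = \Eu{(v,w)\sim\D}{m_w \cdot \ind{m_w \le v}},
\]
which depends on $\p$ only through $(m_1,\ldots,m_{\wmax})$. The sequence $\bar{\p} \eqdef (m_1,\ldots,m_{\wmax})$ is itself non-increasing and has the same running-minimum sequence as $\p$, so $\rev(\bar{\p};\D) = \rev(\p;\D)$. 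This immediately yields the existence of an optimal \emph{non-increasing} pure strategy.

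I would then clip such an optimum $\p$ into $[\vmin,\vmax]$. By monotonicity, coordinates with $p_w > \vmax$ form a prefix of $\p$; overwriting this prefix by $\vmax$ keeps the sequence non-increasing and can only raise revenue, since buyers with $v < \vmax$ still cannot accept these steps, buyers with $v = \vmax$ now accept at price $\vmax$ (if any), and buyers of patience outside the prefix are unaffected because their window minimum is unchanged. Symmetrically, coordinates with $p_w < \vmin$ form a suffix; raising each to $\vmin$ preserves non-increasingness, and since every buyer satisfies $v \ge \vmin$, every buyer who used to purchase at such a step still does so but now pays $\vmin \ge p_w$. This proves the first statement.

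For the finite case, I would take the non-increasing $[\vmin,\vmax]$-valued optimum above and round each coordinate upward to $\rho(p_w) \eqdef \min\set{v\in \V : v \ge p_w}$, first lowering any $p_w > \max\V$ to $\max\V$ (by the same prefix argument as above) so that $\rho$ is well-defined. The map $\rho$ is monotone non-decreasing, so the rounded sequence remains non-increasing and its entries lie in $\V$. Moreover, when $p_w \in (v^{(j)}, v^{(j+1)}]$ for consecutive elements of $\V$, the set $\set{v \in \V : v \ge p_w}$ equals $\set{v \in \V : v \ge v^{(j+1)}}$; hence the mass of patience-$w$ buyers who accept is unchanged, while each now pays $v^{(j+1)} \ge p_w$, so the step-$w$ contribution to revenue only grows. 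Summing over $w$ gives the moreover part.

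There is no genuine technical obstacle here: the entire argument is driven by the running-minimum reduction, which collapses the buyer's online decision problem and turns the seller's task into a coordinatewise maximization under a single monotonicity constraint. The only care required is to perform the clipping and rounding in a monotone fashion (clipping above at a prefix, below at a suffix, upward rounding via a non-decreasing map) so that non-increasingness is preserved throughout.
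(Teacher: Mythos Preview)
Your proof is correct and follows essentially the same three-step outline as the paper: reduce to non-increasing sequences, clip into $[\vmin,\vmax]$, then round upward into $\V$. Your running-minimum formulation is a slightly more direct version of the paper's iterative ``flatten the first increase'' argument (both produce the same non-increasing sequence), and your clipping argument is the pure-strategy specialization of the paper's more general mixed-strategy lemma, but the ideas coincide.
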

Intuitively, the existence of an optimal non-increasing pure selling strategy follows since each time the price increases, no buyer would buy at the higher price (since he can buy at the lower price). This implies that we can ``replace'' the higher price by the lower price. Notice that when faced with a sequence of non-increasing prices, the buyer is better off waiting for the last step in his patience window where the price is the lowest.
(See a complete proof in \cref{app:optimal-pure}.) Based on the characterization result, we obtain:

\begin{restatable}{theorem}{thmpureplanning}
\label{thm:pure-planning}
There exists an algorithm 
which produces an optimal pure selling strategy for distributions $\D$ over $\V\times[\wmax]$ where $\V$ is a finite set of values, with running time $\O(|\V|^2\wmax)$.
\end{restatable}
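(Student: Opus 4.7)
The plan is to combine \cref{thm:optimal-pure}, which restricts attention to non-increasing sequences with entries in $\V$, with a dynamic program over the grid $[\wmax] \times \V$.

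The first step is to observe that for any non-increasing $\p = (p_1, \ldots, p_{\wmax})$ with $p_w \in \V$, the minimum price in the patience window $[1, w]$ is $p_w$, and by \cref{thm:optimal-pure} a buyer of type $(v, w)$ purchases at step $w$ whenever $v \geq p_w$ (and otherwise never). Hence the expected revenue decouples across rounds as
\[
\rev(\p; \D) \;=\; \sum_{w=1}^{\wmax} g(w, p_w), \qquad g(w, v) \;:=\; v \cdot \Pr_{(v',w') \sim \D}\bigl[\, w' = w,\ v' \geq v\,\bigr].
\]

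The second step is to precompute the table $g(w, v)$ on the grid $[\wmax] \times \V$. Given $\D$ as an explicit list of probability masses on $S_\D \subseteq \V \times [\wmax]$, one sweeps, for each fixed $w$, the values in decreasing order while maintaining a running tail sum; this fills the table in $\O(|\V|\wmax)$ time.

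The third step is the dynamic program. Order $\V = \{v_1 > v_2 > \cdots > v_k\}$ with $k = |\V|$, set $F(\wmax + 1, j) := 0$ for all $j$, and for $w$ from $\wmax$ down to $1$ and each $j \in [k]$ define
\[
F(w, j) \;:=\; \max_{j' \geq j} \Bigl(\, g(w, v_{j'}) + F(w+1, j') \,\Bigr).
\]
Then $F(w, j)$ is the maximum revenue over non-increasing tails $p_w \geq \cdots \geq p_{\wmax}$ in $\V$ subject to $p_w \leq v_j$, and the optimum revenue is $F(1, 1)$. Storing the argmax at every cell lets us recover an optimal non-increasing price sequence by backtracking. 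The table has $|\V|\wmax$ entries and each is computed in $\O(|\V|)$ time, yielding an overall runtime of $\O(|\V|^2 \wmax)$ which dominates the precomputation cost. Correctness is immediate from the decomposition together with \cref{thm:optimal-pure}; there is no real obstacle once the revenue is written as a sum of per-round terms coupled only by the monotonicity constraint.
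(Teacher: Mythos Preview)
Your proposal is correct and essentially identical to the paper's proof: both invoke \cref{thm:optimal-pure} to restrict to non-increasing sequences with prices in $\V$, decompose the revenue as a per-step sum $\sum_w g(w,p_w)$ coupled only by monotonicity, and solve this with a backward dynamic program over the $[\wmax]\times\V$ grid, yielding the $\O(|\V|^2\wmax)$ bound. The only cosmetic difference is that the paper indexes its table $r_i(p_i)$ by the actual price at step $i$ (and takes the outer max at the end), whereas you index $F(w,j)$ by an upper bound on the current price (folding the outer max into the recursion).
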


Our algorithm uses a dynamic programming approach, which generates a pricing with non-increasing prices. For non-increasing pricing, the strategic buyer buys at the last step in his patience window, as long as the price is lower than his value. The algorithm takes advantage of this in order to simplify the computation of the seller's revenue (see \cref{app:pure-planning}).
\noindent\textbf{Optimal mixed selling strategy.}
In this section, we characterize the buyer's best-response strategy against a given mixed strategy, and use it to find an optimal mixed selling strategy. We present a simple class of buying strategies, which we call \textit{threshold strategies}. We show that for any mixed selling strategy, there exists a buyer's best response strategy which is a threshold strategy.

\begin{definition}
A buying strategy $\pi$ is a \textit{threshold strategy} if, for any history of prices, there is a threshold
$\theta^i(\p_{1:i})$, such that
the buyer buys at the first step $i \leq w$ in which  the  price is at most the corresponding threshold, i.e., $\theta^i(\p_{1:i})\geq p_i$. 
\end{definition}

\begin{restatable}{theorem}{thmoptimalthresholdbuying}\label{thm:optimal-threshold-buying}
For any mixed selling strategy $\Pd$, there is a threshold buying strategy which is a best response.
\end{restatable}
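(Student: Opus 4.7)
\textbf{Proof plan for Theorem~\ref{thm:optimal-threshold-buying}.}
Fix an arbitrary mixed selling strategy $\Pd \in \Delta([0,1]^{\wmax})$ and a buyer type $(v,w)$. I will construct an optimal best-response buying strategy by backward induction on the step $i \in \set{1,\dots,w}$, showing that the resulting rule is of threshold form.

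The plan is to define, for each $i$ and each history $\p_{1:i-1}$, the ``value of still being in the game at step $i$'',
\[
U^i(\p_{1:i-1}) \eqdef \sup_{\pi} \Eu{\p \sim \Pd \,|\, \p_{1:i-1}}{\utildec_{v,w}(\p, \pi(\p))},
\]
where the supremum is over buying strategies that have not yet purchased by the end of step $i-1$. Setting $U^{w+1}(\cdot) \eqdef 0$, the standard dynamic-programming recursion gives
\[
U^{i}(\p_{1:i-1}) \;=\; \Eu{p_i \sim \Pd \,|\, \p_{1:i-1}}{\max\Lrset{\,v - p_i,\; U^{i+1}(\p_{1:i})\,}},
\]
so at step $i$, after observing $\p_{1:i}$, buying yields $v - p_i$ while waiting yields in expectation $U^{i+1}(\p_{1:i})$. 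The optimal (Bellman) action is therefore to buy iff
\[
v - p_i \;\geq\; U^{i+1}(\p_{1:i})
\quad\Longleftrightarrow\quad
p_i \;\leq\; v - U^{i+1}(\p_{1:i}).
\]
Define
\[
\theta^{i}(\p_{1:i}) \eqdef v - U^{i+1}(\p_{1:i}),
\qquad
\theta^{w}(\p_{1:w}) \eqdef v,
\]
and let $\pi$ be the strategy that buys at the first step $i \leq w$ at which $p_i \leq \theta^{i}(\p_{1:i})$. By construction $\pi$ is a threshold strategy in the sense of the stated definition; it remains to verify optimality.

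Optimality follows from a standard backward-induction argument: I would prove by downward induction on $i$ that $\pi$ restricted to steps $\geq i$ attains the value $U^{i}(\p_{1:i-1})$ against any history. The base case $i = w+1$ is trivial (no purchase possible, utility $0$). For the inductive step, splitting on whether $\pi$ buys at step $i$ or continues, one recovers exactly the recursion defining $U^{i}(\p_{1:i-1})$, using the inductive hypothesis on the continuation branch. Since any buying strategy $\pi'$ must also satisfy $\utilP(\pi') \leq U^{1}(\emptyset)$ by optimality of the Bellman equation, we conclude $\utilP(\pi) \geq \utilP(\pi')$ for all $\pi'$, so $\pi$ is a best response.

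The main technical subtlety, rather than a conceptual obstacle, is ensuring that the conditional expectations $\E[\cdot \mid \p_{1:i-1}]$ and the maxima in the recursion are well-defined measurably when $\Pd$ has continuous support; this is handled by choosing a regular conditional distribution for $\Pd$, under which the recursion produces measurable $U^{i}$ and hence measurable $\theta^{i}$. Ties in the inequality $v - p_i = U^{i+1}(\p_{1:i})$ may be broken either way without affecting the value, which gives the mild freedom to make $\pi$ deterministic. This completes the construction of the threshold best response.
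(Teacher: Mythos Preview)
Your proposal is correct and follows essentially the same backward-induction / optimal-stopping idea as the paper: define the continuation value at each step, set the threshold $\theta^{i}(\p_{1:i}) = v - (\text{continuation value})$, and buy when $p_i$ falls below it. The only stylistic difference is in how optimality is verified: the paper defines the candidate strategy $\pitilde$ directly (with its own partial utilities appearing in the buy/wait comparison) and proves it is a best response via an exchange/minimality argument on histories (Proposition~\ref{prop:optimal-buying}), then checks it coincides with the threshold rule (Theorem~\ref{thm:threshold-equiv}); you instead introduce the Bellman value function $U^{i}$ as a supremum and verify by downward induction that the threshold rule attains it. Your route is the textbook dynamic-programming argument and is arguably cleaner, while the paper's exchange argument avoids having to speak about the supremum over all strategies; both establish the same result with the same threshold.
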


Intuitively, the threshold at step $i$ is set to the price which makes the buyer ``indifferent'' between buying at step $i$ and continuing to step $i+1$. 
If the offered price is lower, the buyer makes the purchase and if the price is higher the buyer waits (see \cref{app:optimal-threshold-buying}). Using the fact that there always exists a buyer's best response strategy which is a threshold strategy, the following holds: 

\begin{restatable}{theorem}{thmmixedplanning}\label{thm:mixed-planning}
There exists an algorithm which produces an optimal mixed selling strategy for distributions $\D$ over $\V\times[\wmax]$ where $\V$ is a finite set of values, and given a finite set of prices $P$, with running time $\O(\poly(|P|^{\wmax})|V|^{2\wmax}\wmax^2)$. 
\end{restatable}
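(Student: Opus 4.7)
The plan is to reduce computing the optimal mixed strategy to solving a collection of linear programs, one per candidate buyer best-response profile.

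Since the seller is restricted to prices in the finite set $P$, a mixed strategy $\Pd$ is a distribution over the finite set $P^{\wmax}$, which I parameterize by variables $q_\p \geq 0$ for $\p \in P^{\wmax}$ with $\sum_\p q_\p = 1$; this yields $|P|^{\wmax}$ variables. Once the buyer's best responses $\{\pi^\star_{v,w,\Pd}\}_{(v,w)\in S_\D}$ are fixed, the expected revenue
\[
\rev(\Pd;\D) \;=\; \sum_{\p} q_\p \sum_{(v,w)} \D(v,w)\,\revdec\lr{\p,\pi^\star_{v,w,\Pd}(\p)}
\]
becomes linear in $\{q_\p\}$.

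The difficulty is that the best responses themselves depend on $\Pd$. I break this circular dependency by enumeration: invoking \cref{thm:optimal-threshold-buying}, I restrict attention to threshold best responses for each buyer type and iterate over all candidate threshold profiles. The finiteness of $\V$ together with the $\wmax$ decision steps will allow me to bound the number of relevant threshold profiles by $\O(|\V|^{2\wmax})$---the intuition being that at each of the $\wmax$ steps the effective threshold structure collapses to a pair of values drawn from $\V$ (the ``buy''/``wait'' indifference points propagated through the buyer's continuation-utility recursion). For each enumerated profile I formulate an LP whose variables are $\{q_\p\}$, whose constraints are the probability simplex together with best-response-consistency conditions requiring that the enumerated threshold strategy yields at least as much expected utility as every competing threshold strategy for each buyer type (linear in $\{q_\p\}$ once the responses are fixed), and whose linear objective is the expected revenue under the enumerated responses. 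Returning the best LP solution across all enumerations yields an optimal mixed strategy supported on $P^{\wmax}$.

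The claimed runtime follows by combining the $\O(|\V|^{2\wmax})$ enumerations with a $\poly(|P|^{\wmax})$-time LP solve per enumeration, plus an $\wmax^2$ overhead for computing per-buyer utilities and revenues. The main obstacle will be rigorously establishing the $\O(|\V|^{2\wmax})$ enumeration bound: although in principle the buyer's threshold at step $i$ depends on the full history $\p_{1:i-1}$, I will need a structural (most plausibly dynamic-programming) argument on the continuation-utility recursion to show that only $\O(|\V|^2)$ effectively distinct per-step threshold patterns can ever arise in a best response, so that over $\wmax$ steps the total number of distinct profiles is $\O(|\V|^{2\wmax})$.
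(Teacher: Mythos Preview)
Your high-level plan---enumerate candidate best-response profiles, then solve one LP per profile over the joint distribution $\{q_\p\}_{\p\in P^{\wmax}}$---is a reasonable direction, but the step you yourself flag as the ``main obstacle'' is a genuine gap, and the paper resolves it by a structurally different route.

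The problem is that a threshold best response $\{\thetahat^i(\p_{1:i})\}$ is a \emph{history-dependent} object: the threshold at step $i$ depends on the realized $\p_{1:i}$ through the conditional continuation utility under $\Pd|\p_{1:i}$. A ``threshold profile'' for even a single buyer type is therefore a function on the tree of price histories, not a vector of $\wmax$ scalars. There is no evident reason only $|\V|^2$ patterns arise per step independently of history, and your LP's consistency constraints would in any case need one inequality per (buyer type, history) node---order $|\V|\wmax|P|^{\wmax}$ constraints---so both the enumeration count and the constraint set rest on a compression argument you have not supplied.

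The paper sidesteps global enumeration via a backward dynamic program whose state is not a threshold profile but a \emph{buyer vector} $\bb^i\in\V^{\wmax}$: entry $b^i_w$ is the largest value $v$ such that type $(v,w)$ has not yet purchased by step $i$. Monotonicity of the optimal threshold in $v$ (\cref{crl:buyer-monotonicity}) guarantees that the surviving buyer set at any history is always of this down-set form, so $|\V|^{\wmax}$ vectors suffice. For each $(i,p_i,\bb^i)$ the DP enumerates only the \emph{next} buyer vector $\bb^{i+1}\in\V^{\wmax}$ and solves a small LP in $|P|$ variables (the marginal distribution of $p_{i+1}$ given the current state), with just $O(\wmax)$ incentive constraints---one per patience level, pinned at the boundary value $b^{i+1}_w$---again by monotonicity. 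The $|\V|^{2\wmax}$ in the running time is the product of the two buyer-vector ranges $\bb^i,\bb^{i+1}$ inside the DP loop, not $(|\V|^2)^{\wmax}$ from a per-step pair; and the $\poly(|P|^{\wmax})$ factor enters only in the final reconstruction of the full mixed strategy from the DP tables, not in the optimization itself.

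In short, the missing ingredients are the buyer-vector state compression (justified by value-monotonicity of the best response) and the decomposition into per-step LPs over the \emph{marginal} next-price distribution, rather than a single global LP over the joint distribution on $P^{\wmax}$.
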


Similar to the algorithm for finding an optimal pure strategy, we also take a dynamic programming approach here. The main difference is that the set of buyer's types that reach time step $i$ and potentially buy there is not anymore simply the buyers with patience $i$. This creates an intricate dependence between the strategy in steps up to step $i$ and the strategy from step $i$ onward. This is the main reason that the resulting algorithm has a running time exponential in the maximum patience $\wmax$ (see \cref{app:mixed-planning}).

\section{Learning selling strategies}\label{sec:learning}
This section would focus on learning, namely, the seller does not have any a priori information about the distribution $\D$.
First, we consider an offline learning model (see \cref{def:revenue-learning}),
where the seller observes a random sample $S\sim \D^m$ of the buyer's type. 
We would like to understand the sample complexity for the seller to learn an approximate optimal pure strategy and approximate optimal mixed strategy. 
In \cref{subsec:sample-complexity-pure}, we derive upper and lower bounds for the class of pure strategies. 
In order to derive the upper bound 
we compute the fat-shattering of pure strategies, and show that it is linear in the maximum patience of a buyer. 
In \cref{subsec:sample-complexity-mix}, we study the sample complexity of mixed strategies, and derive upper bounds, via learning discrete distributions.
Furthermore, in \cref{subsec:regret}, we consider an online setting with $T$ buyers whose type is drawn i.i.d. from an unknown distribution $\D$. We derive regret bounds with respect to the optimal pure strategy and optimal mixed strategy.

\subsection{Background on learning and notation}\label{subsec:background-leaning}

We use the shorthand of $r(\Pd;z)$ for the revenue of a selling strategy $\Pd$ from a buyer $z$, i.e., $\rev(\Pd; z)= \Eb{\p\sim \Pd}{\revdec(\p, \pi^{\star}_{z,\Pd}(\p))}$, where $\pi^{\star}_{z,\Pd}(\cdot)$ is the buyer's best response when having type $z$.
For a sample of buyers' types $S=\set{z_1,\ldots,z_m}$ define the \textit{empirical revenue} of a selling strategy $\Pd \in \Delta([0,1]^{\wmax})$ with respect to $S$, as
    $$\widehat{\rev}(\Pd;S)=\frac{1}{m}\sum_{z\in S} \rev(\Pd;z).$$

The \textit{empirical revenue maximization} learning algorithm $\ERM$
on sample $S$ with respect to a set of strategies $\Omega$, is defined as
    $$\ERM_{\Omega}(S)\in\argmax_{\Pd\in \Omega} \; \widehat{\rev}(\Pd;S).$$
In this section we consider the following set of strategies. The set of pure strategies, which is denoted by $\pureset=[0,1]^{\wmax}$. 
The set of mixed strategies, denoted by $\mixedset= \Delta([0,1]^{\wmax})$
, and the set of mixed strategies that offers prices from a set $\V$, denoted by $\mixedsetV = \Delta(\V^{\wmax})$.

Recall the definition of the fat-shattering dimension \cite{kearns1994efficient,alon1997scale}.

\begin{definition}
(Fat-shattering dimension)
Let $\F$ be a  class of real-valued functions from input space $\Zc$ and $\gamma > 0$. We say that $S = \{z_1, \ldots, z_m\} \subseteq \Zc$ is $\gamma$-shattered by $\F$ if there exists a witness $\cb = (c_1, \ldots, c_m) \in \R^m$ such that for each $\sigmab = (\sigma_1, \ldots, \sigma_m) \in \{-1, 1\}^m$ there is a function $f_{\sigmab} \in \F$ such that
\[
\forall i \in [m] \; 
\begin{cases}
	f_{\sigmab}(z_i) \geq c_i + \gamma, & \text{if $\sigma_i = 1$}\\
    f_{\sigmab}(z_i) \leq c_i - \gamma, & \text{if $\sigma_i = -1$.}
 \end{cases}
\]
The fat-shattering dimension of $\F$ at scale $\gamma$ is 
the cardinality of the largest set of points in $\Zc$ that can be $\gamma$-shattered by $\F$.
\end{definition}
The Pseudo-dimension of a function class $\F$ \cite{pollard1990empirical,haussler1992decision} can be defined as
$\pdim(\F) = \underset{\gamma \searrow 0}{\lim}\fat_\gamma(\F)$. From the monotonicity of the fat-shattering, it holds that $\pdim(\F)\leq \fat_\gamma(\F)$, for any $\gamma>0$.

The following is a well known uniform convergence theorem for classes with finite fat-shattering for all $\gamma>0$. (The proofs for this section appear in \cref{app:background-learning}).

\begin{restatable}{theorem}{uctheorem}\label{uc-theorem}
Let $\F$ be a function class of real-valued functions mapping from $\Zc$ to $[0,1]$.
For an i.i.d. sample $S = \set{z_1, \ldots, z_m}$ from a distribution $\D$ over $\Zc$, with probability $1- \delta$ it holds that,
\[
\sup_{f \in \F} \abs{\frac{1}{m} \sum_{i=1}^m f(z_i) - \Eu{z \sim \D}{f(z)}} \lesssim \frac{1}{\sqrt{m}} \int_{0}^{\infty} \sqrt{\fatgamma{\F}} \,d\gamma + \sqrt{\frac{\log \frac{1}{\delta}}{m}},
\]
where $\lesssim$ means up to an absolute constant factor.
\end{restatable}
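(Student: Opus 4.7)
The plan is to combine four standard tools from empirical process theory: bounded-differences concentration, symmetrization, Dudley's chaining, and a covering-number bound in terms of the fat-shattering dimension.

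First, let $\Phi(S) = \sup_{f \in \F} \lrabs{\frac{1}{m}\sum_{i=1}^m f(z_i) - \Eu{z \sim \D}{f(z)}}$. Since each $f$ takes values in $[0,1]$, replacing any single coordinate of $S$ changes $\Phi(S)$ by at most $1/m$. McDiarmid's bounded-differences inequality therefore gives, with probability at least $1-\delta$, that $\Phi(S) \le \E{\Phi(S)} + \sqrt{\log(1/\delta)/(2m)}$, which yields the additive $\sqrt{\log(1/\delta)/m}$ term in the bound. A standard symmetrization argument (introducing a ghost sample and Rademacher variables) then bounds $\E{\Phi(S)} \le 2\,\E{\wh{\cR}_S(\F)}$, where $\wh{\cR}_S(\F)$ denotes the empirical Rademacher complexity of $\F$.

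Second, I would bound $\wh{\cR}_S(\F)$ by Dudley's entropy integral. Let $\mathcal{N}_2(\gamma,\F,L_2(S))$ be the smallest $\gamma$-covering of $\F$ restricted to $S$ under the empirical $L_2$ metric. Dudley's chaining argument yields $\wh{\cR}_S(\F) \lesssim \frac{1}{\sqrt{m}}\int_0^1 \sqrt{\log\mathcal{N}_2(\gamma,\F,L_2(S))}\,d\gamma$, where the integral can be truncated at $1$ since the $L_\infty$-diameter of $\F$ is at most one. Third, I would invoke the covering-number bound of Mendelson--Vershynin (sharpening the classical Alon--Ben-David--Cesa-Bianchi--Haussler theorem): for some absolute constant $c > 0$, and every sample $S$, $\log\mathcal{N}_2(\gamma,\F,L_2(S)) \lesssim \fat_{c\gamma}(\F)$. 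Substituting this into Dudley's integral and performing a change of variable to absorb $c$ gives $\wh{\cR}_S(\F) \lesssim \frac{1}{\sqrt{m}}\int_0^\infty \sqrt{\fat_\gamma(\F)}\,d\gamma$; combining with the McDiarmid step from the first paragraph completes the proof.

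The main obstacle is the covering-number estimate in the third step: obtaining a bound with no extraneous $\log(1/\gamma)$ factor requires the tight Mendelson--Vershynin inequality rather than the earlier form, since the weaker bound would introduce a spurious logarithmic factor that does not match the stated theorem. All other ingredients (McDiarmid, symmetrization, Dudley) are routine; the proof is essentially a careful chaining of these known results with the sharp covering bound applied as a black box.
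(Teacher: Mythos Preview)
Your approach is essentially identical to the paper's: concentration (McDiarmid) plus symmetrization to reduce to Rademacher complexity, then Dudley's entropy integral, then fat-shattering control on that integral.

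One correction on the key third step, which you yourself flag as the main obstacle. The Mendelson--Vershynin inequality actually reads $\log \mathcal{N}_2(\gamma,\F,L_2) \lesssim \fat_{c\gamma}(\F)\,\log(2/\gamma)$, not the log-free pointwise bound $\lesssim \fat_{c\gamma}(\F)$ you state; substituting it into Dudley would still leave an extra $\sqrt{\log(1/\gamma)}$ under the integral. The paper resolves this exactly as you anticipate is necessary, but via a different lemma: it invokes the Rudelson--Vershynin result bounding the entropy \emph{integral} directly by the fat-shattering integral,
\[
\int_0^\infty \sqrt{\log \mathcal{N}_2(\gamma,\F,L_2)}\,d\gamma \;\lesssim\; \int_0^\infty \sqrt{\fat_\gamma(\F)}\,d\gamma,
\]
rather than going through a log-free pointwise covering estimate (which is not available in this generality). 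So your diagnosis of the obstacle is correct; only the specific black box to resolve it needs adjusting.
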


\subsection{Sample complexity of pure selling strategies}\label{subsec:sample-complexity-pure}
Consider the pure selling strategies with non-increasing prices $\p=(p_1,\ldots,p_{\wmax}) \in [0,1]^{\wmax}$, such that $p_1 \geq \cdots \geq p_{\wmax}$, where $\wmax$ is the maximal patience window of all buyers. Recall that by \cref{thm:optimal-pure}, there exist an optimal non-increasing pure strategy, so it suffices to find an approximation of the optimal pure selling strategy in this set of strategies.

Our main result for this section is upper and lower bounds on the sample complexity (\cref{def:revenue-learning}) for learning an approximate optimal pure selling strategy. (The proofs for this section are in \cref{app:sample-complexity-pure}).
\begin{theorem}\label{thm:pure-sample-complexity}
The sample complexity for learning pure selling strategies is
\begin{align*}
&
\cM\paren{\eps,\delta,\Omega^{pure}} = 
\cO\paren{\min \left\{
\frac{\wmax}{\eps^2}
\;,\;
\frac{\log(\wmax)}{\eps^3}\right\}
+\frac{1}{\eps^2}\log\frac{1}{\delta}},
\\
&
\cM\paren{\eps,\delta,\Omega^{pure}} = \Omega \paren{
\frac{\wmax}{\eps}+
\frac{1}{\eps^2}\log\frac{1}{\delta}
}.
\end{align*}
\end{theorem}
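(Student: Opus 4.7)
The plan is to establish the upper bound by two independent arguments---one for each term of the minimum---and the lower bound by a packing-based construction combined with a standard Le~Cam argument. For the $\wmax/\eps^2$ term, I would first invoke \cref{thm:optimal-pure} to restrict attention to non-increasing pure strategies, for which the revenue on a buyer $(v,w)$ collapses to $r(\p;(v,w)) = p_w \cdot \ind{v \geq p_w}$, depending on $\p$ only through the single coordinate indexed by the buyer's patience. Let $\F$ denote the induced class of revenue functions $\Zc \to [0,1]$. I would show $\fatgamma{\F} = \cO(\wmax)$ at every scale $\gamma>0$ as follows: given a purported $\gamma$-shattered set, partition it by patience value; points sharing a single patience $w$ are shattered by the one-parameter posted-price class $\{v \mapsto p\,\ind{v \geq p}\}$, whose pseudo-dimension (and hence fat-shattering at every scale) is an absolute constant (its subgraph is a family of axis-aligned regions with VC-dimension $\cO(1)$); summing contributions across at most $\wmax$ distinct patience values yields the bound, and the global non-increasing constraint can only make the class smaller. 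Substituting into \cref{uc-theorem} gives Dudley integral $\int_0^1 \sqrt{\cO(\wmax)}\,d\gamma = \cO(\sqrt{\wmax})$, so uniform convergence holds at rate $\cO(\sqrt{\wmax/m} + \sqrt{\log(1/\delta)/m})$, and ERM (which in this setting is \cref{thm:pure-planning} run on the empirical distribution over the sample values) attains error $\eps$ with $m = \cO(\wmax/\eps^2 + (1/\eps^2)\log(1/\delta))$ samples.

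For the alternative $\log(\wmax)/\eps^3$ bound, I would switch to a discretization-plus-union-bound argument. A standard revenue calculation shows that rounding each coordinate of the optimal $\p^\star$ down to the nearest multiple of $\eps$ preserves monotonicity and loses at most $\eps$ in total expected revenue: for each $w$, $\tilde p_w \Pr[v \geq \tilde p_w \mid w] \geq (p_w^\star - \eps)\Pr[v \geq p_w^\star \mid w]$, and averaging over patience values keeps the total loss at most $\eps$. The number of non-increasing $\eps$-discretized sequences equals $\binom{\wmax + \lfloor 1/\eps \rfloor}{\lfloor 1/\eps \rfloor}$, whose logarithm in the regime $\wmax \geq 1/\eps$ is $\cO((1/\eps)\log(\wmax \eps)) = \cO(\log(\wmax)/\eps)$. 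A Hoeffding bound combined with a union bound over this finite grid, followed by empirical-revenue maximization on the grid, yields sample complexity $\cO(\log(\wmax)/\eps^3 + (1/\eps^2)\log(1/\delta))$. Taking the smaller of the two bounds gives the stated upper bound.

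For the lower bound, the $(1/\eps^2)\log(1/\delta)$ term is classical: exhibit two distributions with identical patience marginals but optimal expected revenues differing by $\Theta(\eps)$ and TV-distance $\Theta(\eps)$, and apply a Le~Cam two-point argument to conclude $\Omega((1/\eps^2)\log(1/\delta))$ samples are necessary. For the $\Omega(\wmax/\eps)$ term, I would reduce from the $\Omega(1/\eps)$ lower bound for learning a single posted price from i.i.d.\ value samples: the hard one-dimensional distribution places mass $\eps$ on value $1$ and mass $1-\eps$ on $0$, so that the unique $\eps$-good price is $1$, identifiable only after observing at least one ``high'' sample. I would embed $\wmax$ independent copies of this hard instance by drawing $w$ uniformly from $[\wmax]$ and, conditioned on $w$, drawing $v$ from a copy parameterized by $\sigma_w \in \{0,1\}$; the optimal pure strategy at each patience then depends solely on $\sigma_w$. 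Achieving total expected revenue within $\eps$ forces the learner to identify $\sigma_w$ for a constant fraction of patience values, and since samples distribute uniformly across patiences only $\approx m/\wmax$ samples inform any single coordinate, yielding $m = \Omega(\wmax/\eps)$ by a Fano / hypothesis-testing argument.

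The main obstacle will be making the fat-shattering calculation airtight: one must verify rigorously that the per-patience contributions combine additively despite the joint non-increasing constraint (which only restricts the class, preserving the upper bound, but requiring a careful choice of witnesses). A second delicate point is the lower-bound construction, which must produce genuine $\Theta(\eps)$ separation in \emph{expected revenue} rather than merely in KL or TV distance, and must encode the $\wmax$ sub-instances so that information about $\sigma_w$ leaks only through samples whose patience equals $w$. The discretization argument is routine once one checks that $\eps$-rounding maps the non-increasing simplex into itself.
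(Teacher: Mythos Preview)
Your two upper-bound arguments and the $\Omega\bigl((1/\eps^2)\log(1/\delta)\bigr)$ lower bound coincide with the paper's approach almost line for line: the paper also projects $\Rpn$ onto each patience, shows the per-patience fat-shattering is at most $2$ (their \cref{lem:fatd}), and concludes $\fatgamma{\Rpn}\leq 2\wmax$ by pigeonhole; the $\eps$-grid discretization with the count $\binom{\wmax+1/\eps}{1/\eps}$ and the two-point coin-flip lower bound are likewise the same.

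The $\Omega(\wmax/\eps)$ construction, however, does not work as you describe it. For the one-dimensional instance ``mass $\eps$ on value $1$, mass $1-\eps$ on value $0$'', pricing at $1$ is optimal regardless of anything: a value-$0$ buyer never buys, and a value-$1$ buyer pays $1$. There is no alternative price whose revenue can exceed $\eps$, so the learner needs zero samples to be exactly optimal; no $\Omega(1/\eps)$ lower bound arises from this instance. Even under the more charitable reading where $\sigma_w$ toggles the \emph{location} of the high value (say between $1$ and $1/2$), reusing the same value set across all patiences makes the sub-problems couple through the buyer's best response: a patience-$w$ buyer compares \emph{all} of $p_1,\ldots,p_w$, so the optimal pure strategy no longer factors across coordinates, and a single fixed price (e.g.\ $p\equiv 1/2$) can be simultaneously optimal for every $\sigmab$, again killing the lower bound.

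The paper's construction sidesteps both issues by making the two candidate values at patience $w$ \emph{strictly smaller} than those at patience $w-1$ (specifically $v_1^w=q^{2w-1}$, $v_2^w=q^{2w}$ with $q=(\wmax-1)/\wmax$), and by setting $\Pr[v=v_1^w\mid w]=q(1+\sigma_w\alpha)$ with $\alpha=\Theta(\eps)$. A dedicated structural lemma (\cref{lem:prices-when-decreasing-values}) then shows that for distributions whose values strictly decrease with patience, the Bayes-optimal pure strategy at step $i$ must price inside $\cU_i$, so the $\wmax$ coordinates genuinely decouple despite the global monotonicity constraint, and choosing the wrong one of $\{v_1^i,v_2^i\}$ costs $v_2^i\alpha=\Theta(\eps)$ in that coordinate's revenue. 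Your plan needs both ingredients---disjoint decreasing supports across patiences and the decoupling lemma---to go through.
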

\begin{remark}
Note that for a sample of size $m$, when $m\lesssim \wmax^3$ (up to log factors), the error scales roughly as $1/m^{1/3}$ and for $m \gtrsim \wmax^3$ it scales as $\sqrt{\wmax/m}$. 
Moreover, for $\wmax=\O(1)$, our sample complexity bound
is tight.
\end{remark}

The work of \cite{guo2021generalizing} proved an improved sample complexity for product distributions. Our results apply for any distribution.

For obtaining the first upper bound, we compute the fat-shattering dimension for the class of revenues with respect to non-increasing pure strategies, and the claim follows from a uniform convergence argument.
Define the class,
$$\Rpn= \set{z \mapsto r(\p,z):\; \p=(p_1,\ldots,p_{\wmax}) \in [0,1]^{\wmax},p_1 \geq \cdots \geq p_{\wmax}}.$$

\begin{restatable}{lemma}{lemfatgain}\label{lem:fat-gain}
For any $\gamma\in(0,1/4)$, we have $\fatgamma{\Rpn} \in [\wmax, 2\wmax]$. For any $\gamma \in [1/4,1/2]$, we have $\fatgamma{\Rpn} \leq \wmax$. For any $\gamma > 1/2$, we have $\fatgamma{\Rpn} = 0$.
\end{restatable}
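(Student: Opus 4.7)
The starting point is Theorem~\ref{thm:optimal-pure}: against any non-increasing $\p \in \Rpn$, a buyer of type $(v,w)$ waits until step $w$ and purchases iff $p_w \leq v$, giving the closed form $r(\p,(v,w)) = p_w \cdot \ind{p_w \leq v}$, which depends on $\p$ only through the coordinate $p_w$. For a witness $c$, the super-level set $\{p \in [0,1] : p\,\ind{p \leq v} \geq c+\gamma\}$ equals the interval $[c+\gamma, v]$, while (provided $c \geq \gamma$) the sub-level set $\{p : p\,\ind{p \leq v} \leq c-\gamma\}$ equals $[0, c-\gamma] \cup (v, 1]$. All three regimes of the lemma will be read off from this one-coordinate picture.

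The case $\gamma > 1/2$ is immediate since any valid witness must lie in $[\gamma, 1-\gamma]$, which is empty when $2\gamma > 1$. For $\gamma \leq 1/2$, I partition any $\gamma$-shattered set $S$ by patience, letting $S_w := \{z_i \in S : w_i = w\}$. Since $r(\p,z)$ for $z \in S_w$ depends on $\p$ only through $p_w$, each $S_w$ must be $\gamma$-shattered by the one-variable family $\{\phi_v(p) := p\,\ind{p \leq v}\}_{v \in [0,1]}$ acting on $p \in [0,1]$; the monotonicity constraint on $\p$ can only make shattering harder and may be dropped for upper bounds. The key claim is then twofold: (i) no three points $v_1 < v_2 < v_3$ can be $\gamma$-shattered by $\{\phi_v\}$ at any $\gamma > 0$; and (ii) no two points can be shattered once $\gamma \geq 1/4$.

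For (i), pattern $(+,+,-)$ forces $p \in [c_1+\gamma, v_1] \cap [c_2+\gamma, v_2]$ together with $p \leq c_3 - \gamma$ (since $p \leq v_1 < v_3$ rules out the alternative $p > v_3$), yielding $c_3 \geq \max(c_1,c_2) + 2\gamma$; symmetrically, pattern $(+,-,+)$ yields $c_2 \geq \max(c_1,c_3) + 2\gamma$, a contradiction. For (ii), the four patterns on two points $v_i < v_j$ force $c_i \geq \gamma$ and $c_j \geq c_i + 2\gamma$ (from $(+,-)$), and moreover $(+,+)$ is feasible only on the interval $[c_j + \gamma, v_i]$, requiring $v_i \geq c_j + \gamma \geq 4\gamma$; combined with $v_i < v_j \leq 1$ this forces $\gamma < 1/4$. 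Thus $|S_w| \leq 2$ when $\gamma < 1/4$ and $|S_w| \leq 1$ when $\gamma \in [1/4,1/2]$, and summing over $w \in [\wmax]$ yields the stated upper bounds $2\wmax$ and $\wmax$.

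For the lower bound $\fat_\gamma(\Rpn) \geq \wmax$ when $\gamma < 1/4$, take $z_i = (v_i, i)$ for $i \in [\wmax]$ with values $v_1 > v_2 > \cdots > v_{\wmax}$ chosen strictly decreasing in $[2\gamma + \delta,\, 1 - \delta]$ for some small $\delta > 0$, and witnesses $c_i := v_i - \gamma - \delta/2$. Given $\sigma \in \{\pm 1\}^{\wmax}$, set $p_i = v_i$ if $\sigma_i = +1$ and $p_i = v_{j_i}$ if $\sigma_i = -1$, where $j_i := \max\bigl(\{0\} \cup \{j < i : \sigma_j = +1\}\bigr)$ and $v_0 := 1$. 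Then $r(\p, z_i) = v_i > c_i + \gamma$ in the $+1$ case, and $r(\p, z_i) = 0 \leq c_i - \gamma$ in the $-1$ case (since $p_i = v_{j_i} > v_i$ and $v_i \geq 2\gamma + \delta$); checking the four combinations of $(\sigma_{i-1}, \sigma_i)$ confirms $p_{i-1} \geq p_i$ throughout. The main obstacle is the one-dimensional analysis (i)--(ii): one must carefully distinguish the two ways to satisfy a $-1$ constraint (undercutting the price, $p \leq c-\gamma$, versus overshooting the value, $p > v$) and extract the correct linear inequalities from each sign pattern; once this one-coordinate picture is in hand, the patience-by-patience decomposition for the upper bound and the matching construction for the lower bound are both routine.
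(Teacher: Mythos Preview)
Your proof is correct and follows essentially the same route as the paper: the upper bounds come from partitioning any shattered set by patience (pigeonhole), reducing to the one-parameter family $v\mapsto p\,\ind{p\le v}$, and bounding its fat-shattering at each scale; the lower bound comes from the same construction of $\wmax$ types with strictly decreasing values, one per patience, with prices for the $-1$ coordinates inherited from the nearest preceding $+1$. The only cosmetic differences are that you derive the three-point contradiction from the patterns $(+,+,-)$ and $(+,-,+)$ rather than the paper's case split on $c_2\gtrless c_3$, and your suffix handling in the lower bound (using $v_0=1$) avoids the paper's separate ``price $=0$'' case; both variants are equivalent.
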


We briefly explain how we compute the fat-shattering dimension.
For the upper bound, we first ``project'' the class $\Rpn$ on each patience $w \in [\wmax]$ to obtain $\wmax$ classes with fixed patience.
We show that the fat-shattering of any such projected class is exactly $2$ for $\gamma \in (0,1/4)$, $1$ for $\gamma \in [1/4,1/2]$, and $0$ for $\gamma > 1/2$. We show that this implies 
the appropriate upper bound on the fat-shattering dimension of $\Rpn$.
As for the lower bound, we present a set of $\wmax$ buyer types $(v_i, i)$, with values $v_i$ decreasing with patience $i$. To show that the set is $\gamma$-shattered, we take the witness $(c_1, \ldots, c_{\wmax})$, defined by $c_i = v_i - \gamma$, and prove that for each sequence $(\sigma_1, \ldots, \sigma_{\wmax}) \in \{-1,+1\}^{\wmax}$, there exits a corresponding shattering pricing. The proof is in \cref{app:sample-complexity-pure}, with additional more refined bounds. 

By plugging in the fat-shattering dimension to the uniform convergence bound,
we obtain a bound on the sample complexity.

\begin{restatable}{lemma}{thmfatupper}\label{thm:fat-upper} 
The sample complexity for learning pure selling strategies is
\[ 
\cM(\eps,\delta,\pureset) = \mathcal{O} \left(\frac{\wmax + \log \frac{1}{\delta}}{\varepsilon^2} \right).
\]
\end{restatable}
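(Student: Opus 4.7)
The plan is to combine the uniform convergence theorem (\cref{uc-theorem}) with the fat-shattering bound (\cref{lem:fat-gain}), then invoke the standard reduction from uniform convergence to PAC learning via $\ERM$.

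First, by \cref{thm:optimal-pure} there is an optimal pure strategy with non-increasing prices in $[0,1]$, so it suffices to learn within the restricted class $\Rpn$; moreover, since $p_i \in [0,1]$ and each buyer purchases at most once, every function $r(\p,\cdot)\in\Rpn$ takes values in $[0,1]$, so \cref{uc-theorem} applies. Next, I would bound the Dudley-type entropy integral. By \cref{lem:fat-gain}, $\fatgamma{\Rpn}$ vanishes for $\gamma>1/2$ and is at most $2\wmax$ for every $\gamma\in(0,1/2]$, hence
\[
\int_0^\infty \sqrt{\fatgamma{\Rpn}}\, d\gamma
= \int_0^{1/2} \sqrt{\fatgamma{\Rpn}}\, d\gamma
\leq \int_0^{1/2} \sqrt{2\wmax}\, d\gamma
\lesssim \sqrt{\wmax}.
\]
Plugging this into \cref{uc-theorem}, with probability at least $1-\delta$,
\[
\sup_{\p\in\Rpn} \LRabs{\widehat{\rev}(\p;S)-\rev(\p;\D)}
\lesssim \sqrt{\frac{\wmax}{m}} + \sqrt{\frac{\log(1/\delta)}{m}}.
\]

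Now I would apply the standard $\ERM$ argument. Let $\p^\star$ be an optimal non-increasing pure strategy and let $\wh{\p}=\ERM_{\Rpn}(S)$. Uniform convergence on both $\p^\star$ and $\wh{\p}$ combined with $\widehat{\rev}(\wh{\p};S)\geq \widehat{\rev}(\p^\star;S)$ yields
\[
\rev(\wh{\p};\D)\geq \rev(\p^\star;\D) - 2\sup_{\p\in\Rpn} \LRabs{\widehat{\rev}(\p;S)-\rev(\p;\D)}.
\]
Requiring the right-hand side deviation to be at most $\eps/2$ and solving for $m$ gives $m=\cO\lr{(\wmax+\log(1/\delta))/\eps^2}$, as claimed.

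The only substantive step is the integral bound; the rest is the routine chain \textbf{fat-shattering} $\Rightarrow$ \textbf{uniform convergence} $\Rightarrow$ \textbf{ERM is PAC}. Note that this uniform bound $\sqrt{2\wmax}$ on $\sqrt{\fatgamma{\Rpn}}$ discards the extra information that $\fatgamma{\Rpn}\leq \wmax$ on $[1/4,1/2]$ and the more refined scale-dependent bounds alluded to in the appendix; these sharper bounds are what enable the alternative $\log(\wmax)/\eps^3$ rate in \cref{thm:pure-sample-complexity} but are not needed here.
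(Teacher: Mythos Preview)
Your proof is correct and follows essentially the same route as the paper: bound the Dudley integral using \cref{lem:fat-gain}, plug into \cref{uc-theorem}, and solve for $m$. You spell out the reduction to non-increasing strategies and the $\ERM$ step more explicitly than the paper does (the paper stops at the uniform-convergence bound and leaves the $\ERM$ argument implicit); one minor notational slip is writing $\sup_{\p\in\Rpn}$ when $\Rpn$ is the class of revenue functions rather than strategies, and your closing remark that the refined scale-dependent fat-shattering bounds yield the $\log(\wmax)/\eps^3$ rate is off---that rate in \cref{thm:pure-second-upper} comes from a separate discretization argument, not from sharper fat-shattering.
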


We proceed to the second upper bound with a better dependence on $\wmax$, albeit a worse dependence on $\eps$. We are doing so by discretizing the set of prices from which the non-increasing strategy is choosing from. The discretization would cause that there are only few changes in prices in the price sequence.
This discretization implicitly implies a shorter horizon $\wmax$, which can be used (implicitly) to derive the improved bound.
The generalization follows from learning a finite class.

\begin{restatable}{lemma}{thmpuresecondupper}\label{thm:pure-second-upper}
The sample complexity for learning pure selling strategies is
$$\cM(\eps,\delta,\pureset) 
=
\cO\paren{
\frac{\log(\wmax)}{\eps^3}+\frac{1}{\eps^2}\log\frac{1}{\delta}}.$$
\end{restatable}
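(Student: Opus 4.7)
The plan is to reduce to learning over a \emph{finite} class of non-increasing strategies by discretizing the price space, and then to show that the class size is only polynomial in $\wmax$. Let $G = \{0, \eps/2, \eps, 3\eps/2, \ldots, 1\}$ be a grid of $k+1 = \cO(1/\eps)$ values, and let $\cF_G \subseteq \pureset$ consist of the non-increasing sequences with entries in $G$. By \cref{thm:optimal-pure}, an optimal pure strategy can be taken to be a non-increasing sequence $\p^{\star}$ with entries in $[0,1]$. Let $\p^{\star}_G$ be obtained by rounding each coordinate of $\p^{\star}$ down to the nearest grid point; this rounding preserves monotonicity, so $\p^{\star}_G \in \cF_G$.

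The first step is to bound the discretization error. For a non-increasing price sequence, the argument behind \cref{thm:optimal-pure} shows that a buyer of type $(v,w)$ optimally purchases at step $w$ whenever $p_w \leq v$ and otherwise does not buy. Since $(\p^{\star}_G)_w \leq \p^{\star}_w \leq (\p^{\star}_G)_w + \eps/2$, whenever the buyer purchases under $\p^{\star}$ he also purchases under $\p^{\star}_G$, contributing per-buyer revenue at least $\p^{\star}_w - \eps/2$; in the remaining cases the revenue under $\p^{\star}$ is already zero, so the gap is non-negative. Hence $\rev(\p^{\star}_G;\D) \geq \rev(\p^{\star};\D) - \eps/2$, and it suffices to approximately maximize over $\cF_G$ to within $\eps/2$.

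The second step is to bound $|\cF_G|$. A non-increasing sequence of length $\wmax$ with entries in a $(k{+}1)$-element alphabet is in bijection with a multiset of size $\wmax$ over that alphabet, so $|\cF_G| = \binom{\wmax+k}{k}$. Using $\binom{n}{\ell}\leq (en/\ell)^\ell$ we obtain $\log|\cF_G| \leq k \log\lr{e(\wmax+k)/k} = \cO((\log \wmax)/\eps)$, i.e.\ polynomial (rather than exponential) in $\wmax$.

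To finish, I would run $\ERM_{\cF_G}$ on the sample and apply a standard Hoeffding plus union bound over this finite class of $[0,1]$-valued revenue functions: for sample size $m = \cO\paren{(\log|\cF_G| + \log(1/\delta))/\eps^2} = \cO\paren{\log(\wmax)/\eps^3 + \log(1/\delta)/\eps^2}$, the empirical revenue is uniformly within $\eps/2$ of the true revenue over $\cF_G$ with probability $1-\delta$. Combining with the $\eps/2$ discretization error yields an $\eps$-approximate optimal strategy and the claimed sample complexity. The only non-routine step is the discretization lemma in the first paragraph, which depends crucially on the structural characterization from \cref{thm:optimal-pure} that the optimal strategy is non-increasing and that strategic buyers purchase at the end of their patience window; without this structure, naive rounding could shift the buyer's best response in uncontrolled ways.
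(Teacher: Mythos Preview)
Your proposal is correct and follows essentially the same route as the paper: discretize the price axis to a grid of $\cO(1/\eps)$ points, round the optimal non-increasing strategy down coordinatewise (so monotonicity is preserved and every buyer who bought before still buys, losing at most $\eps$ in revenue), count non-increasing grid sequences via the stars-and-bars identity $\binom{\wmax+k}{k}$ to get $\log|\cF_G|=\cO((\log\wmax)/\eps)$, and finish with the standard finite-class uniform convergence bound. The paper's proof is terser but identical in substance, including the same combinatorial count and the same final sample bound.
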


 For the regime where $\wmax \gtrsim 1/\eps$ (up to log factors), the upper bound in \cref{thm:pure-second-upper} is better, and
when $\wmax \lesssim 1/\eps$ the bound in \cref{thm:fat-upper} gives the sample complexity upper bound.

Concerning the lower bounds,
we start with the following simple lower bound,
for a distribution where all buyer types have the same patience. This bound follows from a standard claim on the number of samples needed in order to distinguish between a Bernoulli random variable with parameter $1/2$ and a Bernoulli random variable with parameter $1/2+\eps$ (e.g., \citet[Lemma 2.7]{slivkins2019introduction}).

\begin{restatable}{lemma}{thmlowerboundone}\label{thm:lower-bound-one}
The sample complexity for learning pure selling strategies is
$$\cM\paren{\eps,\delta,\pureset} = \Omega \paren{\frac{1}{\eps^2}\log\frac{1}{\delta}}.$$
\end{restatable}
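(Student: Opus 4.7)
The strategy is to reduce the learning problem, with the maximum patience $\wmax$ set to $1$, to distinguishing between two close Bernoulli distributions, and then to invoke the standard information-theoretic lower bound for this distinguishing task.

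First I would fix $\wmax=1$, so a pure selling strategy is just a single price $p\in[0,1]$ and a buyer of type $(v,1)$ purchases if and only if $v\geq p$. Let $v_1=\tfrac12$ and $v_2=1$, and consider two candidate distributions $\D_0,\D_1$ on $\{v_1,v_2\}\times\{1\}$: under $\D_b$, the value is $v_1$ with probability $q_b$ and $v_2$ with probability $1-q_b$, where $q_0=\tfrac12+2\eps$ and $q_1=\tfrac12-2\eps$ (take $\eps<\tfrac14$). For any price $p$, the revenue is $p\cdot\Pr[v\geq p]$, and since values are supported on $\{v_1,v_2\}$ the only undominated prices are $v_1$ and $v_2$. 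Under $\D_0$, the price $v_1$ earns $\tfrac12$ while $v_2$ earns $v_2(1-q_0)=\tfrac12-2\eps$, and under $\D_1$, price $v_1$ earns $\tfrac12$ while $v_2$ earns $\tfrac12+2\eps$.

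Second, I would convert approximate optimality into a hypothesis test. A selling strategy $p\in[0,1]$ is $\eps$-optimal under $\D_0$ only if $p\geq\tfrac12$ (otherwise revenue $<\tfrac12-\eps$, and prices in $(v_1,v_2)$ are dominated by $v_2$), and is $\eps$-optimal under $\D_1$ only if $p\geq v_2-\tfrac{\eps}{1-q_1}>\tfrac12$, so the output sets $\{p\geq\tfrac12\ \text{ works for }\D_0\}$ and $\{p> \tfrac12\ \text{ works for }\D_1\}$ are essentially disjoint (after rounding to $\{v_1,v_2\}$). Hence any algorithm $\cA$ achieving the $(\eps,\delta)$-PAC guarantee yields a test $\psi(S)\in\{0,1\}$ (the indicator of the output price equalling $v_1$) that correctly identifies the underlying distribution with probability at least $1-\delta$ from $m$ i.i.d.\ samples.

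Third, I would invoke the standard Le Cam / two-point lower bound (as stated, e.g., in Lemma~2.7 of \citet{slivkins2019introduction}): distinguishing $\mathrm{Ber}(\tfrac12+2\eps)^{\otimes m}$ from $\mathrm{Ber}(\tfrac12-2\eps)^{\otimes m}$ with confidence $1-\delta$ requires $m=\Omega\bigl(\tfrac1{\eps^2}\log\tfrac1\delta\bigr)$ samples, since the total variation distance between the two product measures is bounded via Pinsker by $O(\eps\sqrt m)$. Combining with the previous step gives the claimed lower bound on $\cM(\eps,\delta,\pureset)$.

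The only subtlety, and what I would take care over, is handling prices that fall strictly between $v_1$ and $v_2$: one must verify that such prices are dominated, so the algorithm's choice can be rounded into $\{v_1,v_2\}$ without losing $\eps$-optimality, making the reduction to a binary hypothesis test clean. This step is routine and uses only that a buyer with value $v_1$ does not purchase at any price $p>v_1$ and a buyer with value $v_2$ purchases at any $p\leq v_2$.
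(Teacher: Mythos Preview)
Your approach is essentially identical to the paper's: a two-point construction on values $\{1/2,1\}$ with Bernoulli parameters near $1/2$, so that $\eps$-optimality forces the learner to identify the underlying coin, and then the standard $\Omega(\eps^{-2}\log(1/\delta))$ distinguishing bound applies. One slip to fix in your second paragraph: under $\D_0$ the $\eps$-optimal prices are those with $p\le \tfrac12$ (specifically $p\in[\tfrac12-\eps,\tfrac12]$), not $p\ge\tfrac12$; once the direction is corrected, the $\eps$-optimal sets for $\D_0$ and $\D_1$ are disjoint exactly as you need and the reduction to a binary test goes through.
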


We prove a second lower bound for a set of natural distributions, where buyer types with larger patience window have strictly lower values, i.e., for any $(v,w)$ and $(v',w')$ in the support of $\D$, if $w'>w$ then $v'<v$.
In order to prove this lower bound, we first claim that for such distributions, the Bayes optimal is a pure selling strategy which offers prices at each step $i$, only from values of buyer types with patience $i$. Note that for such distributions, it suffices to choose the optimal price for each step, independently from the other steps.
We then define a family of distributions, such that in order to find the optimal price at step $i$, the learner should see at least $\approx \frac{1}{\eps}$ samples from patience $i$. This eventually leads to the following bound.

\begin{restatable}{lemma}{thmlowerboundtwo}\label{thm:lower-bound-two}
Let $\wmax\geq 2$. The sample complexity for learning pure selling strategies is
$$\cM\paren{\eps,\delta,\pureset} = \Omega \paren{\frac{\wmax}{\eps}}.$$
\end{restatable}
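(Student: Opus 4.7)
The plan is to lower-bound the sample complexity via a two-step argument: (i) a structural decomposition that reduces learning the optimal pure strategy, under a class of natural distributions, to $\wmax$ independent per-patience sub-problems, and (ii) a needle-style hard family making each sub-problem require $\Omega(1/\eps)$ samples from the corresponding patience level.

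For the decomposition, I would first prove that whenever $\D$ is supported on pairs $(v,w)$ with $w' > w \Rightarrow v' < v$, the Bayes-optimal pure strategy $\pst$ satisfies $p_i^{\star} = \argmax_{p \in \V_i} p \cdot \Pr_{v \sim \D_v \mid w=i}[v \geq p]$, where $\V_i = \supp(\D_v \mid w=i)$. This follows from \cref{thm:optimal-pure}: there is an optimal non-increasing pricing taking prices in $\V$, and because the supports are pairwise disjoint and ordered ($\max \V_{i+1} < \min \V_i$), the per-level unconstrained optimizers automatically form a non-increasing sequence, so the monotonicity constraint is inactive. Hence the Bayes-optimal revenue decomposes as a sum of $\wmax$ independent one-dimensional revenue-maximization sub-problems.

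Next I would construct a family $\{\D_{\sigmab}\}_{\sigmab \in \{-1,+1\}^{\wmax}}$ with patience uniform on $[\wmax]$ and, at each level $i$, a two-point value support $\V_i = \{a_i, b_i\}$ with $a_i < b_i$ and $b_{i+1} < a_i$ (so the decomposition applies). The conditional at level $i$ is needle-like: under $\sigma_i = +1$, $\Pr[v=a_i \mid w=i] = \eps$ and $\Pr[v=b_i \mid w=i] = 1-\eps$, while under $\sigma_i = -1$, $v = b_i$ deterministically. Setting $a_i = b_i(1-\eps/2)$ makes the Bayes-optimal price at level $i$ equal to $a_i$ under $\sigma_i = +1$ and $b_i$ under $\sigma_i = -1$, and in both cases the revenue loss from the wrong choice is $b_i\eps/2 = \Theta(\eps)$ per buyer of patience $i$; weighted by $\Pr[w=i] = 1/\wmax$ this contributes $\Theta(\eps/\wmax)$ to the per-buyer expected revenue loss.

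Finally, I would apply a per-coordinate two-point minimax argument. The two hypotheses at level $i$ are distinguishable only upon observing the signaling value $v = a_i$, which occurs with probability at most $m_i \eps$ given $m_i$ samples of patience $i$; hence for $m_i \leq 1/(2\eps)$ the algorithm errs at level $i$ with probability at least $1/4$, contributing $\Omega(\eps/\wmax)$ expected loss. Averaging over uniform $\sigmab$ and using $\sum_i m_i = m$, whenever $m \leq c\wmax/\eps$ for a small absolute constant $c$, a constant fraction of levels has $m_i \leq 1/(2\eps)$, so the cumulative expected loss is $\Omega(\eps)$, violating $(\eps,\delta)$-PAC learning for any $\delta$ bounded below $1/2$. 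The main obstacle is the tension between the strict-decrease constraint---which forces a geometric shrinkage $b_i \asymp (1-\eps/2)^{i-1}$---and the requirement that per-level losses remain $\Theta(\eps)$ across sufficiently many levels; when $\wmax \lesssim 1/\eps$ this is immediate, and for $\wmax \gtrsim 1/\eps$ one restricts attention to the $\Theta(\wmax)$ patience levels on which $b_i$ stays bounded below (adjusting the geometric ratio so that $b_{\wmax} = \Omega(1)$) or composes the argument with the $\Omega(\tfrac{1}{\eps^2}\log\tfrac{1}{\delta})$ baseline from \cref{thm:lower-bound-one}.
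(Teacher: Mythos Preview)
Your approach mirrors the paper's: both use the structural observation that when values strictly decrease with patience the per-level pricing problems decouple (the paper's \cref{lem:prices-when-decreasing-values}), build a hard family indexed by $\sigmab\in\{\pm1\}^{\wmax}$ with uniform patience and two candidate values per level, and finish with a per-coordinate two-hypothesis argument combined with a pigeonhole over levels. The one substantive difference is the per-level construction. You use a \emph{needle}: a signalling value appearing with conditional probability $\eps$, whose presence determines the optimal price, so the $\Omega(1/\eps)$ per-level requirement is simply the number of draws needed to observe a rare event. The paper instead uses a \emph{bias shift}: it fixes the value ratio at $(\wmax-1)/\wmax$ and perturbs the conditional probabilities to $q(1\pm\alpha)$ with $q=(\wmax-1)/\wmax$ and $\alpha=\Theta(\eps)$, so the $\Omega(1/\eps)$ comes from distinguishing two Bernoullis at total-variation distance $\Theta(\eps)$.

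The paper's choice cleanly sidesteps the shrinkage tension you flag: because the value ratio depends on $\wmax$ rather than on $\eps$, one gets $v_2^{\wmax}=((\wmax-1)/\wmax)^{2\wmax}\geq 1/16$ for all $\wmax\geq 2$, so every per-level loss is $\Theta(\eps)$ without any regime split. Your proposed patches for $\wmax\gtrsim 1/\eps$ do not quite close the gap as written---changing the ratio away from $1-\eps/2$ alters both the optimal price and the loss calculation, and the $\Omega(\tfrac{1}{\eps^2}\log\tfrac{1}{\delta})$ baseline is \emph{weaker} than $\Omega(\wmax/\eps)$ exactly when $\wmax\gtrsim 1/\eps$---but this is moot for matching the paper: its own proof explicitly assumes $\eps\leq c_2/\wmax$, which is precisely the regime where your geometric factor $(1-\eps/2)^{\wmax}$ stays bounded below and your argument goes through without modification.
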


\cref{thm:pure-sample-complexity} follows immediately by combining \cref{thm:fat-upper,thm:pure-second-upper} for the upper bound, and \cref{thm:lower-bound-one,thm:lower-bound-two} for the lower bound.

\subsection{Sample complexity of mixed selling strategies}\label{subsec:sample-complexity-mix}
In this section, we address the challenging case of learning an approximate optimal mixed strategy from samples.
Initially we will assume that the support of $\D$ is finite and use the empirical distribution to approximate it. Later we generalize the result to arbitrary support, using a discretization of the support. (The proofs for this Section are in \cref{app:sample-complexity-mix}).

Let $\D$ be a distribution over buyer types $[0,1]\times[\wmax]$, where the size of the set of distinct values $\V$ is at most $k$, hence, the support of $\D$ is at most $k\wmax$.
The sample complexity of learning discrete distributions over a known domain of size $k\wmax$, with respect to the total variation distance, is $\Theta\left(\frac{k\wmax+\log\frac{1}{\delta}}{\eps^2} \right)$.\footnote{This is also known as Bretagnolle Huber-Carol inequality, for a proof, see  \citet[Theorem 1]{canonne2020short}.}

Using the sample, we learn an approximation $\hat{\D}$ to $\D$ with a total variation distance of at most $\eps$.
We then use the distribution $\hat{\D}$ to derive an approximate optimal mixed strategy.
We conclude an upper bound on the sample complexity of the set $\mixedsetV$. 

\begin{restatable}{theorem}{thmsamplecomplexitymixone}\label{thm:sample-complexity-mix-one}
The sample complexity for learning mixed selling strategies $\mixedsetV$ is
$$
\cM\paren{\eps,\delta,\mixedsetV}
=
\cO\paren{
\frac{k\wmax}{\eps^2}+\frac{1}{\eps^2}\log\frac{1}{\delta}}.$$
\end{restatable}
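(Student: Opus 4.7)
The plan is to reduce the problem to learning the discrete distribution $\D$ in total-variation distance and then to invoke empirical revenue maximization in the class $\mixedsetV$. Concretely, the algorithm draws $m$ i.i.d.\ samples $S \sim \D^m$, forms the empirical distribution $\hat{\D}$ on the (unknown but at most $k\wmax$-sized) support $S_{\D}\subseteq \V\times[\wmax]$, and then returns
\[
\hat{\Pd} \;\in\; \argmax_{\Pd \in \mixedsetV} r(\Pd;\hat{\D}),
\]
which can be computed by \cref{thm:mixed-planning} since $\hat{\D}$ has finite support over $\V\times[\wmax]$.

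The first step is to apply the Bretagnolle--Huber--Carol inequality (as stated in the text before the theorem): with $m = \cO\!\left(\frac{k\wmax + \log(1/\delta)}{\eps^2}\right)$ samples, with probability at least $1-\delta$ we have $\tv{\hat{\D}}{\D} \le \eps/2$. Crucially, the support of $\D$ is contained in the known domain $\V\times[\wmax]$ of size at most $k\wmax$, so this quantitative bound applies directly, even though the exact support of $\D$ is unknown to the learner.

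The second (and main) step is to show that for every \emph{fixed} mixed strategy $\Pd \in \mixedsetV$, the revenue $r(\Pd;\D) = \Eu{z\sim \D}{r(\Pd;z)}$ depends on $\D$ only as an expectation of the bounded function $z \mapsto r(\Pd;z) \in [0,1]$. Note that the buyer's best response $\pi^\star_{z,\Pd}$ depends on $\Pd$ and $z$, but not on $\D$, so $r(\Pd;z)$ is a well-defined number in $[0,1]$ per buyer type. Therefore, for every $\Pd \in \mixedsetV$,
\[
\lrabs{r(\Pd;\D) - r(\Pd;\hat{\D})} \;\le\; \tv{\hat{\D}}{\D} \;\le\; \eps/2.
\]
This uniform (over $\Pd$) closeness gives the standard three-line ERM comparison: if $\Pd^\star \in \argmax_{\Pd \in \mixedsetV} r(\Pd;\D)$, then
\[
r(\hat{\Pd};\D) \;\ge\; r(\hat{\Pd};\hat{\D}) - \tfrac{\eps}{2} \;\ge\; r(\Pd^\star;\hat{\D}) - \tfrac{\eps}{2} \;\ge\; r(\Pd^\star;\D) - \eps,
\]
where the middle inequality uses the optimality of $\hat{\Pd}$ on $\hat{\D}$. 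Combining these bounds and renaming $\eps$ yields $\cM(\eps,\delta,\mixedsetV) = \cO\!\left(\frac{k\wmax + \log(1/\delta)}{\eps^2}\right)$, as desired.

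I do not expect a serious obstacle here, because unlike the pure-strategy case (where we had to compute a fat-shattering dimension), the reduction to TV-distance learning bypasses any capacity argument about $\mixedsetV$: it only uses that $r(\Pd;\cdot)$ is bounded and that $\D$ is supported on a domain of size $k\wmax$. The one point to be careful about is that the algorithm must optimize over the full (infinite) class $\mixedsetV$ on the empirical distribution, but since $\hat{\D}$ has finite support contained in $\V\times[\wmax]$, \cref{thm:mixed-planning} supplies an exact optimizer. A secondary subtlety is that the buyer's best response to $\Pd$ is determined by $\Pd$ alone (not by the underlying type distribution), which is what makes the per-sample revenue $r(\Pd;z)$ a well-defined and bounded function of $z$; this is what lets us convert TV-distance into a uniform additive error on $r(\Pd;\D)$.
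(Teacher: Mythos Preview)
Your proposal is correct and follows essentially the same approach as the paper: learn $\D$ in total-variation distance via the Bretagnolle--Huber--Carol bound, use the fact that $r(\Pd;z)\in[0,1]$ to convert TV-closeness into a uniform bound $|r(\Pd;\D)-r(\Pd;\hat{\D})|\le \eps/2$, and conclude by the standard ERM comparison. Your write-up is in fact more careful than the paper's---you make explicit the three-line ERM inequality and the point that the buyer's best response depends only on $(\Pd,z)$ and not on $\D$---but the underlying argument is identical.
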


When we have large set $\V$ we can discretize it using the parameter $\eps$.
Namely, given $\eps>0$, we discretize the set of values to multiples of $\eps$, resulting in $k=1/\eps$ distinct values.
We have shown that the error incurred in the discretization process is at most $\eps\wmax$ (\cref{lem:eps-w-optimal-mixed}).
Now, using \cref{thm:sample-complexity-mix-one}, with an accuracy parameter $\eps/\wmax$, we derive the following theorem.

\begin{restatable}{theorem}{thmsamplecomplexitymixtwo}\label{thm:sample-complexity-mix-two}
The sample complexity for learning mixed selling strategies $\Omega^{\text{mixed}}$ is
$$
\cM\paren{\eps,\delta,\Omega^{\text{mixed}}}
=
\cO\paren{
\frac{{\wmax}^4}{\eps^3}+\frac{\wmax^2}{\eps^2}\log\frac{1}{\delta}}.$$
\end{restatable}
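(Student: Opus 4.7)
The plan follows the outline sketched just before the theorem statement: reduce the continuous-value problem to the finite-support case of Theorem \ref{thm:sample-complexity-mix-one} via a discretization of the value space, and then tune the discretization scale so that the resulting discretization and estimation errors together give the target accuracy $\eps$.

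Concretely, fix $\alpha > 0$ (to be set later) and let $\V_\alpha := \{0, \alpha, 2\alpha, \ldots\} \cap [0,1]$, of cardinality $k = \cO(1/\alpha)$. Given samples $S \sim \D^m$, round each value down to the nearest element of $\V_\alpha$; this produces samples from the induced distribution $\widetilde{\D}$ supported on $\V_\alpha \times [\wmax]$. By Lemma \ref{lem:eps-w-optimal-mixed}, restricting attention to mixed strategies supported on $\V_\alpha^{\wmax}$ loses at most $\alpha\wmax$ in revenue against $\D$, i.e.,
\[
\max_{\Pd \in \Delta(\V_\alpha^{\wmax})}\rev(\Pd;\D) \;\geq\; \rev(\Pd^{\star};\D) - \alpha\wmax.
\]
Now apply Theorem \ref{thm:sample-complexity-mix-one} to the finite-support problem $\widetilde{\D}$ with value set $\V_\alpha$, accuracy $\alpha$, and confidence $\delta$. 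This yields a strategy $\hat{\Pd} \in \Delta(\V_\alpha^{\wmax})$ whose revenue against $\widetilde{\D}$ is within $\alpha$ of the best strategy in $\Delta(\V_\alpha^{\wmax})$, using $\cO((k\wmax + \log(1/\delta))/\alpha^2)$ samples. A second application of the closeness between $\D$ and $\widetilde{\D}$ (again via Lemma \ref{lem:eps-w-optimal-mixed}, applied to the fixed strategy $\hat{\Pd}$ supported on the grid) transfers this guarantee from revenue-against-$\widetilde{\D}$ to revenue-against-$\D$, at the cost of an additional $\cO(\alpha\wmax)$ term.

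Finally, choose $\alpha = \Theta(\eps/\wmax)$ so that both the discretization error $\alpha\wmax$ and the estimation error $\alpha$ are $\cO(\eps)$; with $k = \Theta(\wmax/\eps)$ the sample complexity becomes
\[
\cO\!\left(\frac{(\wmax/\eps)\,\wmax + \log(1/\delta)}{(\eps/\wmax)^2}\right) \;=\; \cO\!\left(\frac{\wmax^4}{\eps^3} + \frac{\wmax^2}{\eps^2}\log\frac{1}{\delta}\right),
\]
which matches the stated bound after absorbing constants. The main subtle point is the \emph{double} use of Lemma \ref{lem:eps-w-optimal-mixed}: once to bound the restriction to grid-supported strategies, and once to pass from $\widetilde{\D}$ back to $\D$ when evaluating the learned $\hat{\Pd}$. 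It is this pair of $\alpha\wmax$ losses that forces us to take $\alpha = \Theta(\eps/\wmax)$ rather than the naive $\Theta(\eps)$, which in turn drives $k$ up to $\Theta(\wmax/\eps)$ and explains the extra $\wmax^2$ factor in the final sample complexity.
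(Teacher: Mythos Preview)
Your proposal is correct and follows essentially the same route as the paper: discretize the value space at scale $\alpha=\Theta(\eps/\wmax)$, invoke Theorem~\ref{thm:sample-complexity-mix-one} with $k=\Theta(\wmax/\eps)$ and accuracy $\Theta(\eps/\wmax)$, and collect terms. The paper's own proof is a two-sentence sketch that cites Lemma~\ref{lem:eps-w-optimal-mixed} and Theorem~\ref{thm:sample-complexity-mix-one} in exactly this way; you have simply made the parameter bookkeeping explicit.

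One small caveat on your ``second application'' of Lemma~\ref{lem:eps-w-optimal-mixed}: as stated, that lemma compares the grid-optimal strategy to the overall optimum \emph{for a fixed distribution}; it does not literally bound $|\rev(\hat\Pd;\D)-\rev(\hat\Pd;\widetilde{\D})|$ for a fixed grid strategy $\hat\Pd$ when the buyer values are rounded. The paper makes the same informal citation. The needed transfer (that rounding buyer values down by at most $\alpha$ changes the revenue of any grid-supported mixed strategy by $\cO(\alpha\wmax)$) does hold, but it requires an argument parallel to the proof of Lemma~\ref{lem:eps-w-optimal-mixed} rather than a direct invocation of it. This does not affect the final bound or the correctness of your approach.
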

\subsection{Regret minimization}\label{subsec:regret}
In this section, we address the online setting, where the buyers arrive in an online fashion, and the seller needs to adjust her strategy. This is a stochastic online setting, where the buyers' types are sampled from an unknown distribution $\D$. The goal of the seller is to minimize the regret w.r.t. a given set of strategies. We naturally consider both the case of pure strategies and the case of mixed strategies.

\paragraph{Online model.}
In the online setting, the seller is facing a sequence of $T$ buyers $z_1,\ldots,z_T$ such that each buyer type $z_t$ is drawn i.i.d. from an unknown distribution $\D$.
The seller interacts with only a single buyer at a time, that is, each round of the learning consists of one interaction between the seller and a buyer. At the end of the interaction the seller observes the buyer type (regardless of the outcome).
We prove regret minimization results with respect to the selling strategies: (1) pure selling strategies, i.e., $\pureset$, (2) mixed selling strategies with prices in $\V$, i.e., $\mixedsetV$, and (3) general mixed strategies, i.e., $\Omega^{\text{mixed}}$. (The proofs for this section are in \cref{app:regret}).

We start with the case of comparing to pure strategies. The main idea is to keep the observed buyers' types. When buyer $i$ arrives, there are already $i-1$ observed buyer types. The seller would use the historical observed buyers' types to implicitly learn the distribution $\D$ to a certain accuracy. More explicitly, the seller would invoke an ERM oracle that would select the best empirical strategy on the historical observations. In order to minimize the number of calls to the ERM oracle, we invoke the ERM oracle only for buyers $t$ which are a power of two, i.e., $t=2^i$ for some integer $i$. The difference between the three setting comes from the different convergence rates that we derived for each setting in previous sections.

For the pure strategy setting we use \cref{thm:pure-sample-complexity} to derive the following regret bound.
\begin{restatable}{theorem}{thmregretpure}\label{thm:regret-pure}
There exists an algorithm $\cA$, such that for any distribution $D$, with probability $1-\delta$, 
$$
\regret_T^{\pureset} \big(\cA;\D\big)  = 
\begin{cases}
\cO\paren{\sqrt{T}\sqrt{\wmax}+\sqrt{T}\sqrt{\log\frac{\log T}{\delta}}}, &  T>\frac{{\wmax}^3}{\log^2\wmax} \\
\cO\paren{ T^{2/3}\log^{1/3}(\wmax) +\sqrt{T}\sqrt{\log\frac{\log T}{\delta}}}, & T \leq \frac{{\wmax}^3}{\log^2\wmax},
\end{cases}
$$
where $\cA$ makes $\log T$ calls to an $\ERM$ oracle.
\end{restatable}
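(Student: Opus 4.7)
The plan is to convert the offline PAC guarantee from \cref{thm:pure-sample-complexity} into an online regret bound via an epoch-based doubling schedule. Partition the $T$ rounds into epochs $j=0,1,\ldots,\lceil \log_2 T\rceil$, where epoch $j$ consists of rounds $t\in[2^j,\,2^{j+1})$. At the start of epoch $j\geq 1$, run $\ERM_{\pureset}$ on the $2^j-1$ buyer types observed so far to produce a pure strategy $\Pd_j$, and play $\Pd_j$ against every buyer in the epoch; in epoch $0$ play an arbitrary fixed strategy. This scheme makes at most $\lceil \log_2 T\rceil$ calls to the ERM oracle, as required.

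First I would invert \cref{thm:pure-sample-complexity}: with $m$ i.i.d.\ samples from $\D$ and confidence $1-\delta'$, the strategy returned by $\ERM_{\pureset}$ satisfies
\[
r(\Pd^{\star};\D)-r(\ERM_{\pureset}(S);\D)\;\lesssim\; \min\!\paren{\sqrt{\wmax/m},\ (\log\wmax/m)^{1/3}} + \sqrt{\log(1/\delta')/m},
\]
where $\Pd^{\star}\in\argmax_{\Pd\in\pureset} r(\Pd;\D)$. Applying this at the start of each epoch $j \geq 1$ with $m=2^j-1$ samples and $\delta'=\delta/\lceil\log_2 T\rceil$, and union bounding over the $\lceil \log_2 T\rceil$ epochs, with probability $1-\delta$ simultaneously for all $j\geq 1$,
\[
\eps_j\;:=\;r(\Pd^{\star};\D)-r(\Pd_j;\D)\;\lesssim\;\min\!\paren{\sqrt{\wmax/2^j},\ (\log\wmax/2^j)^{1/3}} + \sqrt{\log(\log T/\delta)/2^j}.
\]

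Next I would bound the total regret by summing over epochs. Since $\Pd_j$ depends only on $z_1,\ldots,z_{2^j-1}$ and is independent of the fresh buyers drawn within the epoch, the per-round expected regret in epoch $j$ equals $\eps_j$, so the epoch contributes at most $2^j \eps_j$. Both $\sqrt{\wmax\cdot 2^j}$ and $2^{2j/3}(\log\wmax)^{1/3}$ grow in $j$, so each piece of the sum is geometric and dominated by the final epoch $j=\lceil\log_2 T\rceil$, giving
\[
\regret_T^{\pureset}(\cA;\D)\;\lesssim\;1+\sum_{j=1}^{\lceil\log_2 T\rceil}2^j\eps_j\;\lesssim\;\min\!\paren{\sqrt{\wmax T},\ T^{2/3}(\log\wmax)^{1/3}}+\sqrt{T\log(\log T/\delta)},
\]
where the ``$1+$'' absorbs the regret of epoch $0$ (at most a constant, since revenues lie in $[0,1]$). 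A direct algebraic comparison shows that the two terms in the outer minimum cross at $T\asymp \wmax^3/\log^2\wmax$, which recovers the two cases stated in the theorem.

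The main subtlety is the careful inversion of a sample complexity that is itself a minimum of two rates and the propagation of this minimum through the geometric sum, so that the $\sqrt{T\wmax}$ regime (dominant when $T\gtrsim \wmax^3/\log^2\wmax$) and the $T^{2/3}(\log\wmax)^{1/3}$ regime (dominant when $T\lesssim \wmax^3/\log^2\wmax$) both emerge in the final bound. Aside from this bookkeeping, the argument is the standard doubling-trick reduction from PAC to online learning, with a union bound over $\lceil\log_2 T\rceil$ epochs producing the $\log\log T$ factor inside the logarithm.
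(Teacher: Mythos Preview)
Your proposal is correct and follows essentially the same approach as the paper: a doubling-epoch schedule that calls $\ERM_{\pureset}$ at times $2^j$, inverts the PAC bound of \cref{thm:pure-sample-complexity} per epoch with confidence $\delta/\log T$, union-bounds over the $\log T$ epochs, and sums the resulting geometric series to obtain $\min\{\sqrt{\wmax T},\,T^{2/3}(\log\wmax)^{1/3}\}+\sqrt{T\log(\log T/\delta)}$. Your write-up is in fact slightly more careful than the paper's (you explicitly handle epoch $0$ and note the independence of $\Pd_j$ from the fresh epoch buyers), but the argument is the same.
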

The bound follows from \cref{thm:pure-sample-complexity} by setting the accuracy $\eps_m$ as a function of the sample size $m$. For small sample size $m$ we have that $\eps_m\approx \O((\log\wmax/m)^{1/3})$, and for large sample size we have that $\eps_m\approx \O(\sqrt{\wmax/m})$. The regret is now $\sum_{m=1}^T \eps_m$.

For the mixed strategy setting with a limited price set, we use \cref{thm:sample-complexity-mix-one} to derive the following regret bound.
\begin{restatable}{theorem}{thmregretmixone}\label{thm:regret-mix-one}
There exists an algorithm $\cA$, such that for any distribution $D$, with probability $1-\delta$,\begin{align*}
\regret_T^{\mixedsetV}\big(\cA;\D\big)
=
\cO\paren{ \sqrt{T}\sqrt{|\V|\cdot\wmax} +\sqrt{T}\sqrt{\log\frac{\log T}{\delta}}},
\end{align*}
where $\cA$ makes $\log T$ calls to an $\ERM$ oracle.
\end{restatable}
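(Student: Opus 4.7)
The plan is to port the doubling-trick reduction used in the proof of \cref{thm:regret-pure} over to the mixed-strategy setting, substituting the sample complexity bound of \cref{thm:sample-complexity-mix-one} in place of the pure-strategy bound from \cref{thm:pure-sample-complexity}. Concretely, the algorithm $\cA$ maintains the multiset $S_t$ of buyer types observed so far, partitions the horizon into epochs $E_k = \{2^k, \ldots, 2^{k+1}-1\}$ for $k = 0, 1, \ldots, \lfloor \log_2 T\rfloor$, and at the start of each epoch $E_k$ invokes the ERM oracle once to compute $\wh\Pd_k \in \ERM_{\mixedsetV}(S_{2^k})$. It then commits to $\wh\Pd_k$ for every round $t \in E_k$. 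This yields the required $O(\log T)$ calls to the ERM oracle.

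For the regret analysis, fix $\delta_k = \delta / (\lfloor\log_2 T\rfloor + 1)$ and apply \cref{thm:sample-complexity-mix-one} with parameters $(\eps_k, \delta_k)$ to the $2^k$ i.i.d. samples available at the start of epoch $E_k$. Solving the sample complexity bound for the accuracy given $m = 2^k$ samples gives, with probability at least $1-\delta_k$,
\[
r(\Pd^\star;\D) - r(\wh\Pd_k;\D) \;\lesssim\; \eps_k \;=\; \sqrt{\tfrac{|\V|\,\wmax + \log(1/\delta_k)}{2^k}}.
\]
Since the buyer types within epoch $E_k$ are drawn i.i.d.\ from $\D$ independently of $\wh\Pd_k$, the expected per-round regret inside $E_k$ is at most $\eps_k$, and hence the epoch contributes at most $|E_k|\,\eps_k \lesssim 2^k \cdot \eps_k = \sqrt{2^k\bigl(|\V|\wmax + \log(1/\delta_k)\bigr)}$ to the total regret.

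Summing over all epochs and using that the geometric series $\sum_{k=0}^{\lfloor\log_2 T\rfloor} \sqrt{2^k}$ is dominated by its last term $\Theta(\sqrt{T})$, a union bound over the $\lfloor \log_2 T\rfloor + 1$ epochs yields that, with probability at least $1-\delta$,
\[
\regret_T^{\mixedsetV}(\cA;\D) \;\lesssim\; \sqrt{T}\,\sqrt{|\V|\,\wmax} \;+\; \sqrt{T}\,\sqrt{\log\tfrac{\log T}{\delta}},
\]
which is the claimed bound. The cleanest way to make the per-epoch statement rigorous is to observe that the samples used to form $\wh\Pd_k$ are independent of the fresh samples encountered during $E_k$, so the population guarantee of \cref{thm:sample-complexity-mix-one} transfers directly into an upper bound on the conditional expected per-round regret; I expect this independence/conditioning step (and bookkeeping the union bound over epochs) to be the only subtle point, with the rest being a routine geometric-sum calculation.
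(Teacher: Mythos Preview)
Your proposal is correct and follows essentially the same approach as the paper: a doubling-epoch schedule with one ERM call per epoch, a union bound setting $\delta_k = \delta/\Theta(\log T)$, the per-epoch accuracy $\eps_k \asymp \sqrt{(|\V|\wmax + \log(1/\delta_k))/2^k}$ obtained from \cref{thm:sample-complexity-mix-one}, and a geometric sum over epochs that collapses to the stated $\sqrt{T}$ bound. Your treatment of the independence between the training sample and the fresh buyers in each epoch is, if anything, slightly more explicit than the paper's, but the argument is otherwise identical.
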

The bound follows from \cref{thm:sample-complexity-mix-one} by setting the accuracy $\eps_m$ as a function of the sample size $m$. Here we have that $\eps_m\approx \O(\sqrt{|\V|\wmax/m})$. Again, the regret is now $\sum_{m=1}^T \eps_m\approx \O(\sqrt{\wmax |\V| T})$.

For the setting with general mixed strategies we use \cref{thm:sample-complexity-mix-two} to derive the following regret bound.
\begin{restatable}{theorem}{thmregretmixtwo}\label{thm:regret-mix-two}
There exists an algorithm $\cA$, such that for any distribution $D$, with probability $1-\delta$,
\begin{align*}
\regret_T^{\Omega^{\text{mixed}}}\big(\cA;\D\big)
=
\cO\paren{ T^{2/3}\wmax^{4/3} +\sqrt{T}\wmax\sqrt{\log\frac{\log T}{\delta}}},
\end{align*}
where $\cA$ makes $\log T$ calls to an $\ERM$ oracle.
\end{restatable}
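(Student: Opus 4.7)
The plan is to mirror the doubling-epoch argument used for \cref{thm:regret-pure} and \cref{thm:regret-mix-one}, with the sample-complexity bound for general mixed strategies in \cref{thm:sample-complexity-mix-two} taking the role of the previous sample-complexity bounds. The algorithm $\cA$ stores all observed buyer types in a buffer. At each time $t = 2^i$ (for integer $i\geq 0$), it invokes the PAC-learning procedure underlying \cref{thm:sample-complexity-mix-two} on the $2^i$ samples collected so far, obtains a strategy $\Pd_i\in\mixedset$, and plays $\Pd_i$ during rounds $2^i,2^i+1,\ldots,2^{i+1}-1$. This schedule makes exactly $\lceil\log T\rceil$ oracle calls.

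To translate the sample bound into a per-epoch accuracy, set the confidence parameter per epoch to $\delta' = \delta/\lceil\log T\rceil$. Solving
\[
m \;=\; \cO\paren{\frac{\wmax^4}{\eps^3} + \frac{\wmax^2}{\eps^2}\log\tfrac{1}{\delta'}}
\]
for $\eps$ shows that on $m$ samples the procedure attains accuracy
\[
\eps_m \;=\; \cO\paren{\frac{\wmax^{4/3}}{m^{1/3}} + \wmax\sqrt{\frac{\log(\log T/\delta)}{m}}}.
\]
By \cref{thm:sample-complexity-mix-two}, for every epoch $i$, with probability at least $1-\delta'$ over the samples used to train $\Pd_i$,
\[
\max_{\Pd^\star\in\mixedset}\Eu{z\sim\D}{r(\Pd^\star;z) - r(\Pd_i;z)} \;\le\; \eps_{2^i}.
\]
A union bound over the $\lceil\log T\rceil$ epochs guarantees that all of these statements simultaneously hold with probability at least $1-\delta$.

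On this event, since $\Pd_i$ is fixed throughout epoch $i$ and the $z_t$ are i.i.d.\ draws from $\D$, the expected per-round gap at every round of epoch $i$ is at most $\eps_{2^i}$, so that epoch contributes at most $2^i\,\eps_{2^i}$ to the pseudo-regret. Summing the geometric-like series,
\[
\sum_{i=0}^{\lceil\log T\rceil} 2^i\,\eps_{2^i} \;=\; \cO\paren{\wmax^{4/3}\sum_i 2^{2i/3} + \wmax\sqrt{\log\tfrac{\log T}{\delta}}\sum_i \sqrt{2^i}} \;=\; \cO\paren{T^{2/3}\wmax^{4/3} + \sqrt{T}\wmax\sqrt{\log\tfrac{\log T}{\delta}}},
\]
which is the claimed bound.

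No individual step is especially hard: the whole proof is a direct transcription of the argument for \cref{thm:regret-mix-one}, and the only care needed is in inverting the cubic-in-$1/\eps$ term of \cref{thm:sample-complexity-mix-two}; this term is what produces the dominant $T^{2/3}\wmax^{4/3}$ rate, while the standard quadratic term contributes only the $\sqrt{T}\wmax$ lower-order term.
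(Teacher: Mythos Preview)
Your proposal is correct and follows essentially the same approach as the paper: a doubling-epoch schedule that invokes the learning procedure from \cref{thm:sample-complexity-mix-two} at times $2^i$, sets $\delta'=\delta/\log T$, inverts the sample bound to obtain $\eps_m\approx \wmax^{4/3}/m^{1/3}+\wmax\sqrt{\log(\log T/\delta)/m}$, and sums $2^i\eps_{2^i}$ over the epochs. Your write-up is in fact slightly more explicit than the paper's about the inversion step and the union bound, but the argument is the same.
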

We now have the freedom to select the set $\V$ as to minimize the regret.
We use a distritization of roughly $T^{1/3}$, which results in the regret bound of order $T^{2/3}$. (The discretization is implicit in the proof of \cref{thm:sample-complexity-mix-two}.) 
The bound follows from \cref{thm:sample-complexity-mix-two} by setting the accuracy $\eps_m$ as a function of the sample size $m$. Here we have that $\eps_m\approx \O(\wmax^{4/3}/m^{1/3})$. Again, the regret is now $\sum_{m=1}^T \eps_m\approx \O(T^{2/3}\wmax^{4/3})$.

\section{Conclusion, discussion and open problems}
\label{app:discussion}

The main focus of this work is on patient buyers, which can delay their purchasing decision.
We presented a new stochastic model where each buyer's type is composed from a value and a patience, and a seller 
who posts prices and would like to maximize her revenue. 
We formalize this setting as a Stackelberg game between a leader (the seller) and a follower (the buyer).

Unlike much of the previous works, our focus is on a sequence of prices rather than a single fixed price. For this end, we show a separation between the best fixed price, the best pure strategy, which is a fixed sequence of prices,  and the best mixed strategy, which is a distribution over price sequences.

We characterize the optimal pure strategy of the seller and show that the sequence of prices are non-increasing and that the buyers will always buy at the end of them patience, if they decide to buy. We also give an efficient algorithm to compute the optimal pure selling strategy. 
We derive a sample complexity bound for the learning of an approximate optimal pure strategy which is polynomial in $\wmax,1/\eps$ and $\log(1/\delta)$.
We derive our sample bound by computing the fat-shattering dimension of the setting and showing that it is linear in the maximum patience of a buyer, i.e., $\wmax$.
We also consider an online setting and bound the regret  with respect to the optimal pure strategy by
$\widetilde{\cO}\paren{\sqrt{T}\sqrt{\wmax}}$.

For mixed strategies, we characterize the buyer's best response strategy as a threshold strategy, and show that the expected buyer partial utility decreases with the time steps. We give an algorithm to compute the optimal mixed selling strategy which is exponential in the maximum patience and polynomial in the support of the distribution.
 We give a general sample complexity bound for the approximate optimal mixed strategy which is polynomial in $\wmax,1/\eps,\log(1/\delta)$.
 We also consider an online setting and bound the regret  with respect to the optimal mixed strategy by $\widetilde{\cO}\paren{ T^{2/3}\wmax^{4/3}}$.

Our work leaves many interesting open problems. 
\begin{itemize}[leftmargin=0.5cm]
    \item \textbf{Computational.}
It is unclear whether one can compute the optimal mixed selling strategy in time polynomial in $\wmax$, or maybe there is a hardness result as in other Stackelberg games with large action spaces \cite{blum2019computing}.
\item \textbf{Sample complexity.} There is a gap between our upper and lower bounds for the sample complexity both for the pure and mixed strategies. Resolving those gaps would be highly interesting.
\item \textbf{Online learning.} First, it is unclear whether one can get a regret bound of $\sqrt{T}$ with respect to the optimal mixed strategy, or whether there is a lower bound of $T^{2/3}$. Another interesting challenge for the online learning is to consider more limited feedback models, for example, when we observes only whether a purchase was made.
\item \textbf{Interaction between buyers.}
In our model the seller interacts with each buyer separately. It would be interesting to consider a model where there are potentially multiple buyers per interaction. At each step there are some buyers that arrive and other buyers that did not purchase at the previous step, and can still benefit from buying. Such a model will introduce many new and intriguing research challenges.
\end{itemize}

\newpage

\section*{Acknowledgments}
We deeply thank Aryeh Kontorovich and Yuval Dagan for very helpful discussions.
This project has received funding from the European Research Council (ERC) under the European Union’s Horizon 2020 research and innovation program (grant agreement No. 882396), by the Israel Science Foundation (grants 993/17, 1602/19), Tel Aviv University Center for AI and Data Science (TAD), and the Yandex Initiative for Machine Learning at Tel Aviv University.
I.A. is supported by the Vatat Scholarship from the Israeli Council for Higher Education and by Kreitman school of Advanced Graduate Studies.

\bibliography{paperbib}

\appendix

\section{Proofs for \cref{sec:characterization-and-planning}}\label{app:characterization-and-planning} 

\subsection{Warm-up: buyer's value and patience are independent} \label{app:charact-warmup}

In this section, we will show that whenever the buyer's value and patience are independent under the joint distribution $\D$, the optimal selling strategy is a fixed price. To show that, we will define a reduction from a Myerson mechanism to our setting, and use the Myerson mechanism's optimality to deduce the optimality of our mechanism.

By the revelation principle \cite{myerson1979incentive}, any selling mechanism can be described by an \textit{incentive-compatible direct selling mechanism}, in which the buyers are supposed to honestly reveal their value.
Hence, in order to find the optimal mechanism, it suffices to consider only this simple class of mechanisms.

\begin{theorem}\cite{myerson1981optimal}\label{thm:myerson}
Assume the distribution over buyers' values is regular. Then, the incentive-compatible direct selling mechanism for a single item that maximizes the seller's expected revenue is the sealed-bid second-price auction with a reserve price $\ppst$ satisfying $\ppst h(\ppst)=1$.
\end{theorem}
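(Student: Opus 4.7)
The plan is to prove this via Myerson's classical virtual-value argument. First I would note that by the revelation principle any mechanism can be represented by its interim allocation rule $q(v)$ (the probability that a buyer who reports type $v$ receives the item) and interim payment rule $t(v)$. Incentive compatibility of the direct mechanism forces $q$ to be non-decreasing; the envelope/payment identity then pins down $t$ up to a constant, giving $t(v) = v\,q(v) - \int_{0}^{v} q(u)\,du$ once we impose individual rationality $U(0)=0$.

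Second, I would rewrite the seller's expected revenue per bidder as
\[
\Eb{v\sim F}{t(v)} \;=\; \int q(v)\,\phi(v)\,f(v)\,dv,
\]
where $\phi(v) = v - (1-F(v))/f(v)$ is Myerson's virtual value and $F,f$ denote the value CDF and density; the formula is obtained by substituting the envelope expression for $t(v)$ and integrating by parts. The problem thus reduces to pointwise maximization of the virtual surplus subject to the monotonicity constraint on $q$.

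Third, I would maximize pointwise: set $q(v)=1$ exactly when $\phi(v) \geq 0$. Under the regularity assumption $\phi$ is non-decreasing, so the set $\{v:\phi(v) \geq 0\}$ is a half-line $[\ppst,\infty)$, where $\ppst$ satisfies $\phi(\ppst) = 0$. Unpacking this gives $\ppst - (1-F(\ppst))/f(\ppst) = 0$, equivalently $\ppst\, h(\ppst) = 1$ with $h(v)=f(v)/(1-F(v))$ the hazard rate. With several bidders the pointwise optimum is ``award the item to whoever has the highest non-negative virtual value,'' and because virtual values are computed separately for each bidder, this rule is implemented by a sealed-bid second-price auction with reserve $\ppst$; regularity ensures the rule is monotone in each bidder's report, hence IC-implementable, so the envelope characterization applied in reverse certifies that the revenue of this mechanism matches the pointwise optimum of the virtual surplus integral.

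The main obstacle is the envelope / virtual-surplus reformulation in the second step, which converts the revenue objective into a quantity amenable to pointwise optimization; once this rewrite is in hand the remaining steps are algebraic. A secondary subtlety is that without the regularity hypothesis the pointwise maximizer need not yield a monotone $q$, and one must apply Myerson's ironing procedure to restore monotonicity; under the stated regularity assumption this complication disappears, so the half-line reserve-price allocation directly attains the upper bound.
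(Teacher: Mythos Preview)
Your sketch is a faithful outline of Myerson's classical virtual-value argument and is correct as far as it goes. However, the paper does not supply its own proof of this statement: the theorem is stated with an explicit citation to \cite{myerson1981optimal} and is used as a black box in the proof of \cref{thm:independent-optimal-is-fixed}. So there is nothing in the paper to compare your argument against; you have simply reproduced (at the level of a proof sketch) the standard proof from the cited reference.
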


In our setting, there is only one buyer per interaction, and thus the Myerson's mechanism degenerates into the mechanism according to which the seller posts the reserve price $\ppst$. Denote the marginal distributions of the buyer's value and patience by $\D_v$ and $\D_w$, respectively.

\begin{theorem}\label{thm:independent-optimal-is-fixed}
If the buyer's value and patience are independent under the joint distribution $\D$, and the marginal distribution $\D_v$ of the buyer's value is regular, then the optimal selling strategy is a fixed price strategy $\pst = (\ppst, \ldots, \ppst)$, satisfying $\ppst h(\ppst)=1$.
\end{theorem}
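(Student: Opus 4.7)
The plan is to reduce any selling strategy in our setting to a single-parameter incentive-compatible mechanism and then invoke Myerson's theorem (\cref{thm:myerson}). First, I would fix an arbitrary selling strategy $\Pd$ and a patience level $w\in[\wmax]$, and for a buyer with value $v$ and patience $w$ who best-responds to $\Pd$ define
\[
x_w(v) \eqdef \Pr_{\p\sim \Pd}\Lrbra{\pi^{\star}_{v,w,\Pd}(\p)\neq \e_0}, \qquad t_w(v)\eqdef \Eu{\p\sim \Pd}{\p\cdot \pi^{\star}_{v,w,\Pd}(\p)},
\]
the induced winning probability and expected payment. The pair $(x_w,t_w)$ is then a direct-revelation, single-item mechanism over the value distribution $\D_v$.

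Next I would verify that $(x_w,t_w)$ is incentive compatible and individually rational. IR is immediate because the ``never buy'' strategy is always available and yields utility $0$. For IC, the optimality of $\pi^{\star}_{v,w,\Pd}$ for value $v$ means the buyer could have instead played $\pi^{\star}_{v',w,\Pd}$ for any alternative $v'$, and hence
\[
v\cdot x_w(v)-t_w(v)\;\geq\;v\cdot x_w(v')-t_w(v'),
\]
which is Myerson's IC inequality. By \cref{thm:myerson}, the expected revenue of such an IC mechanism against value distribution $\D_v$ is at most $R^{\star}\eqdef \ppst\cdot \Pr_{v\sim\D_v}[v\geq \ppst]$, attained by posting $\ppst$.

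Now I would aggregate over $w$ using the independence of value and patience:
\[
r(\Pd;\D) \;=\; \Eu{w\sim \D_w}{\Eu{v\sim \D_v}{t_w(v)}} \;\leq\; \Eu{w\sim \D_w}{R^{\star}} \;=\; R^{\star}.
\]
Finally, I would check that the fixed-price strategy $\pst=(\ppst,\ldots,\ppst)$ attains $R^{\star}$: every buyer sees the price $\ppst$ at step $1$ (since $w\geq 1$), and facing a constant price sequence the best response is to buy if and only if $v\geq \ppst$. Thus the expected revenue is exactly $\ppst\cdot \Pr_{v\sim \D_v}[v\geq \ppst]=R^{\star}$, which, combined with the upper bound above, proves that $\pst$ is optimal.

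The main obstacle is the reduction step, i.e., cleanly arguing that $(x_w,t_w)$ is IC as a single-parameter mechanism. This is a revelation-principle-style argument and relies on the fact that at a fixed patience $w$ the set of online buying strategies available to the buyer depends only on $\Pd$ and not on her private value $v$, so any buyer with value $v$ can mimic the behavior induced by any other value $v'$. Independence of value and patience is what guarantees that at every $w$-slice the value distribution is exactly $\D_v$, allowing the Myerson bound to be applied uniformly across $w$ without a distributional mismatch.
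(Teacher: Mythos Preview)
Your proof is correct and follows the same reduction-to-Myerson strategy as the paper. The only organizational difference is in how the direct mechanism is built: the paper constructs a \emph{single} direct mechanism $\MDv$ that, upon receiving a reported value $v$, internally samples $w\sim\D_w$ and then simulates $\MP$ on $\pistP$; independence is then used to show that this mechanism is incentive compatible (since $\D[w\mid v']=\D[w]$). You instead fix $w$ and build a separate mechanism $(x_w,t_w)$ for each patience level; here IC holds for every $w$ without any independence assumption, and independence is used only to (i) identify the value marginal at each $w$-slice as $\D_v$ so that the Myerson bound $R^\star$ applies uniformly, and (ii) factor the revenue as $\E_{w}\E_{v}[t_w(v)]$. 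Both routes yield the same upper bound $R^\star$ and the same achievability argument for the fixed price $\pst$; your per-$w$ decomposition is arguably a touch cleaner because it isolates exactly where independence is needed.
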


The crux of the proof of the theorem is a reduction in which given a selling strategy $\Pd$ and its corresponding selling mechanism, we build an incentive-compatible direct selling mechanism for a buyer whose type includes only a value. Upon receiving a value $v$ from the buyer, we sample a patience $w$ according to $\D_w$, execute the mechanism on the best-response strategy for buyer type $(v,w)$, and yield its output. Due to the optimality of the Myerson's mechanism, the revenue of our mechanism may not exceed the Myerson's revenue.

 When the buyer's value and patience are dependent, the mechanism we build during the above reduction is not necessarily incentive-compatible, as can be seen in \cref{rmk:dependent}.

\subsubsection{Proof of \cref{thm:independent-optimal-is-fixed}}

To prove \cref{thm:independent-optimal-is-fixed}, we use the following definitions.

\paragraph{Definitions.}
Our model can be described as a multi-stage \textit{selling mechanism}. A selling mechanism $M_\Pd$ of a selling strategy $\Pd$, receives a best-response strategy $\pistP$ of a buyer type $(v,w)$ against $\Pd$, and by simulating the game on the profile $(\Pd, \pistP)$, outputs the probability $q_{v,w}$ of the buyer getting the item, and his payment $\mus$ to the seller.

A selling mechanism in our model, as defined above, is related to a seemingly broader concept. In general, a \textit{single item selling mechanism} is defined in relation to a set of $n$ buyers, where each buyer $j\in[n]$ is associated with a value $v_j$, indicating how much he might be willing to pay for the item, and a set $\Zc_j$ of signals that he can send to the seller. Given a signal vector $\zb = (z_1, \ldots, z_n) \in \Zc_1\times\ldots\times\Zc_n$, the mechanism outputs for each buyer $j\in[n]$, the probability $q_{\zb}^j$ of the buyer getting the item and his payment $\mu_{\zb}^j$ to the seller. The seller would like to find a mechanism maximizing her excepted revenue, while she does not know the buyers' values. By the revelation principle \cite{myerson1979incentive}, in order to find the optimal mechanism, it suffices to consider only \textit{incentive-compatible direct selling mechanism}, in which the buyers are supposed to honestly reveal their value.

\begin{proof}[of \cref{thm:independent-optimal-is-fixed}]
First, we show that any selling strategy would give the seller a total revenue of at most the Myerson revenue corresponding to $\D_v$. Assume by contradiction that there exists a selling strategy $\Pd$, such that the mechanism $\MP$ yields the seller a higher total revenue than the Myerson revenue. Let $\MDv$ be the following direct selling mechanism: 
\begin{center}
\begin{tikzpicture}
\draw  [line   width=0.2mm](-1.6,1)--node[above, very near start]{ $\MDv$}(6,1)--(6,-1)--(-1.6,-1)--(-1.6,1);
\draw  [line   width=0.2mm](3.5,0.75)--(5.75,0.75)--(5.75,-0.75)--(3.5,-0.75)--(3.5,0.75);
\draw [->](-2.6,0)--node[above]{ $v$}(-1.7,0);
\draw [->](2.5,0)--(3.45,0);
\draw [->](5.75,0)--(6.75,0) node[right]{ $\left(q_{v,w},\mu_{v,w}\right)$};
\node at (1.8,0) { $\pistP$};
\node at (4.6,0) { $\MP$};
\draw [->](0,0)node[left]{ $w\sim \D_w$}--(1,0) ;
\end{tikzpicture}
\end{center}

The buyer reports a value $v$ (not necessarily his true one) to the mechanism $\MDv$. Then, $\MDv$ selects a patience $w\sim \D_w$, and plays the best-response strategy $\pistP$ of buyer type $(v,w)$ against selling strategy $\Pd$ in the mechanism $\MP$. The buyer then gets the item with probability $\qs$ and pays $\mus$ if he gets the item, which insures that the mechanism satisfies individually rationality. Hence, the expected utility of the buyer whose value is $v$ is $\Eb{w \sim \D_w}{\left(v-\mu_{v,w}\right)\cdot q_{v,w}}$.

Since $\pistP$ is a best-response strategy and the buyer's value and patience are independent under $\D$, the truth-telling is a dominant strategy in $\MDv$. Indeed, the expected utility of the buyer whose value is $v$ and reports $v'$ is given by $\Eb{w \sim \D_w}{\left(v-\mu_{v',w}\right)\cdot q_{v',w}}$ which satisfies the following:
\begin{align}
\Eu{w \sim \D_w}{\left(v-\mu_{v',w}\right)\cdot q_{v',w}}
&=\sum_{w=1}^{\wmax} \D[w|v'] \cdot \left(v-\mu_{v',w}\right)\cdot q_{v',w}\label{eq:total}\\
&=\sum_{w=1}^{\wmax} \D[w] \cdot \left(v-\mu_{v',w}\right)\cdot q_{v',w}\label{eq:independent}\\
&\leq\sum_{w=1}^{\wmax} \D[w]\cdot\left(v-\mu_{v,w}\right) \cdot q_{v,w}\label{ineq:equil}\\
&=\Eu{w \sim \D_w}{\left(v-\mu_{v,w}\right)\cdot q_{v,w}}\label{eq:v},
\end{align}
where \cref{eq:independent} holds due to independence, Inequality (\ref{ineq:equil}) holds due to the optimality of the buying strategy, and \cref{eq:v} follows from \cref{eq:total,eq:independent} with $v' = v$. In particular, $\MDv$ is incentive-compatible.

Furthermore, due to independence, the expected revenue of the seller in $\MP$ is
\[
\Eu{(v,w)\sim\D}{\mus \cdot \qs}=\Eu{v\sim\D_v}{\Eu{w \sim \D_w}{\mus \cdot \qs}}.
\]
That is, the expected revenue  of the seller in  $\MP$ equals to that in the mechanism $\MDv$. In particular, $\MDv$ is an incentive-compatible direct selling mechanism which yields the seller a higher revenue than the Myerson revenue corresponding to $\D_v$, in contradiction to the optimality of Myerson mechanism (\cref{thm:myerson}).

To complete the proof, we show that the expected revenue of a fixed pricing $\pst = (\ppst, \ldots, \ppst)$, satisfying $\ppst h(\ppst)=1$, is the Myerson revenue. Indeed, following the regularity assumption, by \cref{thm:myerson}, the sealed-bid second-price auction with the reserved price $\ppst$ is an optimal selling mechanism. Since in our setting, there is only one buyer per interaction, this auction is is simply the auction according to which the seller posts the price $\ppst$, and a buyer with value $v$ gets the item if $v \geq \ppst$. That is, the Myerson mechanism in this case is $M_{\pst}$, so the total revenue of $\pst$ is the Myerson revenue, as required. 
\end{proof}

\begin{remark}\label{rmk:dependent}
The above proof does not necessarily hold when the buyer's value and patience are dependent. Consider the following joint distribution $\D$:
\[
v \sim U[0,1], \quad 
w|v = 
\begin{cases}
1, & v \in [\frac{1}{2}, 1]\\
2, & v \in [0,\frac{1}{2}).
\end{cases}
\]
Note that the marginal distribution $\D_v$ of the buyer’s value is regular. However, against the pure selling strategy $\p = (3/4, 1/4)$, the best-response buying strategy of a buyer type $(v,w)$ is
\[
\pi^{\star}_{v,w,\p} =
\begin{cases}
(1,0), & v \in [\frac{3}{4},1] \land w=1 \\
(0,1), & v \in [\frac{1}{4}, \frac{1}{2}) \land w=2\\
(0,0), & \text{otherwise}.
\end{cases}
\]
Hence, the output functions of the reduced mechanism $\MDv$ are
\[
q_{v,w} = 
\begin{cases}
1, & v \in [\frac{3}{4},1] \land w=1 \\
1, & v \in [\frac{1}{4}, \frac{1}{2}) \land w=2\\
0, & \text{otherwise}
\end{cases}
, \quad \mu_{v,w} = 
\begin{cases}
\frac{3}{4}, & v \in [\frac{3}{4},1] \land w=1 \\
\frac{1}{4}, & v \in [\frac{1}{4}, \frac{1}{2}) \land w=2\\
0, & \text{otherwise}.
\end{cases}
\]
Therefore, a buyer whose value is $v \geq \frac{1}{2}$ is better of reporting $v' \in [\frac{1}{4}, \frac{1}{2})$ in $\MDv$, so the mechanism is no longer incentive-compatible.
\end{remark}

\subsection{Separating optimal fixed price, pure strategy and mixed strategy}\label{app:charact-hierarchy}
In this section, we show a separation between the optimal fixed price, optimal pure strategy and optimal mixed strategy.
This in particular implies that the optimal selling strategy is not necessarily a fixed price. 
In addition, it implies that in general the optimal strategy needs to be a mixed strategy rather than a pure strategy.

\begin{theorem}\label{thm:selling-strategy-hierarchy}
There exists a distribution $\D_1$ under which any optimal selling strategy is pure but not a fixed price, i.e., there exists a pure strategy $\p$ such that for any fixed pricing $\p'=(p', \ldots,p')$ we have $\rev(\p)>\rev(\p')$. 
In addition, there exists a distribution $\D_2$ under which any optimal selling strategy is mixed but not pure, i.e., there exists a mixed strategy $\Pd$ such that for any pure strategy $\p$ we have $\rev(\Pd)>\rev(\p)$.
\end{theorem}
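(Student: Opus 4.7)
The strategy is to exhibit two concrete distributions, one witnessing each separation, and to verify the required revenue comparisons using the structural results from Section~\ref{sec:characterization-and-planning}.

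\textbf{Construction of $\D_1$ (pure beats fixed).} I would take a simple two-type distribution: $(v,w)=(1,1)$ with probability $1/2$ and $(v,w)=(1/2,2)$ with probability $1/2$. A direct case analysis on a single fixed price $p$ shows that the best fixed price yields at most $1/2$: if $p>1/2$ then only the impatient high-value type buys (revenue $p/2\le 1/2$), and if $p\le 1/2$ then both types buy (revenue $p\le 1/2$). In contrast, the non-increasing pure strategy $\p=(1,\tfrac{1}{2})$ extracts $1$ from the $(1,1)$ buyer on day $1$ and, by \cref{thm:optimal-pure}, the $(1/2,2)$ buyer waits and pays $1/2$ on day $2$, giving revenue $3/4>1/2$. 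This settles the first separation, and one should double-check that no fixed price strictly exceeds $1/2$ on a dense grid of breakpoints $\{0,1/2,1\}$.

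\textbf{Construction of $\D_2$ (mixed beats pure).} The key intuition is that under any optimal pure strategy, which is non-increasing by \cref{thm:optimal-pure}, a patient buyer always waits to the last step of his patience window and pays $p_w$. Hence two types sharing the same patience $w$ but different values $v_H>v_L$ cannot be price-discriminated by a pure schedule: the seller must either charge $p_w=v_H$ (losing the low type) or $p_w=v_L$ (losing surplus from the high type). A mixed strategy can sidestep this by making the high-value patient buyer's continuation value uncertain, inducing him to commit at an earlier, medium price while the low-value buyer still responds as under the pure schedule. I would construct a small instance (two days, a handful of carefully chosen types with overlapping patience) and an explicit two-point mixture, then compute the buyer's best response via the threshold characterization of \cref{thm:optimal-threshold-buying} and tally the resulting revenue. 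Finally I would verify, using the planner of \cref{thm:pure-planning} (or a direct enumeration on the finite induced value set), that no pure $\p$ matches this revenue, yielding the strict inequality.

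\textbf{Main obstacle.} The difficult part is the $\D_2$ construction. Many natural attempts collapse to equality because the patience-$w$ sub-problem is Myerson-like on the marginals: the extra revenue extracted from high-value types by a randomized late-day price is exactly cancelled by the revenue lost from low-value types. The right example must therefore exploit the \emph{sequential, interactive} nature of the buyer's observation, so that the mixture's information structure strictly enlarges the set of implementable allocation/payment rules beyond those reachable by non-increasing pure schedules. Pinning this down requires a careful calculation of the threshold strategy's partial utilities across both realizations of the mixture and a tight accounting that no pure $\p$ — necessarily non-increasing by \cref{thm:optimal-pure} — can replicate the resulting revenue allocation.
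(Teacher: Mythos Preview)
Your $\D_1$ construction is correct and in fact cleaner than the paper's: the paper uses three types $\{(1/3,3),(2/3,2),(1,1)\}$ with $\wmax=3$, whereas your two-type example with $\wmax=2$ already achieves the separation with less casework. No issue there.

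The genuine gap is in the $\D_2$ part. What you have written is a strategy outline, not a proof: you never actually exhibit a distribution, a mixed strategy, or a revenue calculation. The statement asks for an existence proof, and the only way to discharge it is to produce a concrete witness and verify the strict inequality. Your ``Main obstacle'' paragraph correctly diagnoses why naive attempts fail, but diagnosing the obstacle is not the same as overcoming it. As written, a reader cannot check anything.

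For comparison, the paper takes $\D_2$ uniform on $\{(1/3,2),(2/3,1),(1,2)\}$, enumerates the nine pure strategies over $\{1/3,2/3,1\}^2$ to see the pure optimum is $4/9$, and then exhibits the mixture $\Pd[(2/3,1/3)]=1/3$, $\Pd[(2/3,1)]=2/3$. The threshold calculation for the $(1,2)$ buyer is the crux: buying at step~1 gives utility $1/3$, while waiting gives $\tfrac{1}{3}(1-\tfrac{1}{3})=2/9<1/3$, so he buys early at price $2/3$; meanwhile the $(1/3,2)$ buyer waits and buys only on the $(2/3,1/3)$ realization. The resulting revenue is $13/27>12/27=4/9$. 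This is exactly the ``two-point mixture with an early medium price and a randomized late price'' that your intuition points toward, but the proof requires you to actually write it down and compute the three utilities. Until you do, the second half of the theorem is unproved.
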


To separate the pure strategies from fixed prices, we consider a distribution $\D_1$ which is uniform over the buyer types
$\{(1/3,3),(2/3,2),(1,1)\}$.
We show that the optimal fixed price strategy is $2/3$ with excepted revenue of
$4/9$, whereas the optimal pure selling strategy is $(1,\frac{2}{3},\frac{1}{3})$ with an excepted revenue of $2/3 = 6/9 > 4/9$.

To separate the mixed strategies from pure strategies, we consider a distribution $\D_2$ which is uniform over the buyer types
$\{(1/3,2),(2/3,1),(1,2)\}$.
We show that the optimal pure strategy has an excepted revenue of $4/9$ and there exists a mixed selling strategy with expected revenue of $13/27 > 12/27 =4/9$. 

\begin{proof}[of \cref{thm:selling-strategy-hierarchy}]
First, we present an example separating the pure strategies from fixed prices. Assume $\wmax= 3$. Consider the following joint distribution $\D_1$:
\[
\D_1[v=1/3, w=3] =\D_1[v=2/3, w=2] = \D_1[{v=1, w=1}] = \frac{1}{3}
\]
Then, the pure selling strategy $(1,\frac{2}{3},\frac{1}{3})$ is optimal since its revenue is the buyer's expected value. In particular, it gives an excepted revenue of 
\[
\frac{1}{3}\cdot\frac{1}{3}+\frac{2}{3}\cdot\frac{1}{3}+1\cdot\frac{1}{3}=\frac{2}{3},
\]
whereas the best fixed price strategy is $(\frac{2}{3},\frac{2}{3},\frac{2}{3})$ and has an excepted revenue of $\frac{2}{3} \cdot \frac{2}{3} = \frac{4}{9} < \frac{6}{9}=\frac{2}{3}$.

Second, we present an example separating the mixed strategies from pure strategies.
Assume $\wmax = 2$. Consider the following joint distribution $\D_2$:
\[
\D_2[v=1/3, w=2] = \D_2[v=2/3, w=1] = \D_2[{v=1, w=2}] = \frac{1}{3}.
\]
Then, the excepted revenue for each pure selling strategy is
\begin{center}
\begin{tabular}{ |c||c|c|c|c|c|c|c|c|c| } 
 \hline
 Strategy & $(1,1)$ & $(1,\frac{2}{3})$ & $(1,\frac{1}{3})$ & $(\frac{2}{3},1)$ & $(\frac{2}{3},\frac{2}{3})$ & $(\frac{2}{3},\frac{1}{3})$ & $(\frac{1}{3},1)$ &
 $(\frac{1}{3},\frac{2}{3})$ &
 $(\frac{1}{3},\frac{1}{3})$\\ 
 \hline\hline
 Revenue & $\frac{1}{3}$ & $\frac{2}{9}$ & $\frac{2}{9}$ & $\frac{4}{9}$& $\frac{4}{9}$ & $\frac{4}{9}$ & $\frac{1}{3}$ &
 $\frac{1}{3}$ & $\frac{1}{3}$
 \\ 
 \hline
\end{tabular}
\end{center}
We obtain that the optimal pure strategies are $(\frac{2}{3}, 1), (\frac{2}{3},\frac{2}{3})$ and $(\frac{2}{3},\frac{1}{3})$ which all give an excepted revenue of $4/9$. Consider the following mixed strategy $\Pd$:
\[
\Pd[(2/3,1/3)] = \frac{1}{3}, \quad \Pd[(2/3,1)] = \frac{2}{3}
\]

A buyer with value $2/3$ (and thus with patience $1$) would buy at the first step at price $2/3$. A buyer whose value is $1/3$ would wait for the second step, since he cannot afford to buy at the first step, and would buy with probability $1/3$. 

Consider a buyer with value $1$. If he buys at the first step, then his utility is $1-\frac{2}{3}=\frac{1}{3}$. However, if he waits for the second step, his utility is $\frac{1}{3}(1-\frac{1}{3}) = \frac{2}{9} < \frac{3}{9}=\frac{1}{3}$. Hence, he would also buy at the first step. Therefore, the total revenue of the seller from mixed strategy $\Pd$ is
\[
\rev(\Pd) = \frac{2}{3} \cdot \frac{1}{3} + \frac{1}{3} \cdot \frac{1}{3}\cdot\frac{1}{3} +  \frac{2}{3} \cdot \frac{1}{3}= \frac{13}{27} > \frac{12}{27} =\frac{4}{9}.
\]
That is, the mixed selling strategy $\Pd$ is preferable to any optimal pure strategy.
\end{proof}

\subsection{Optimal pure selling strategy} \label{app:charact-and-plan-pure}

\subsubsection{Proof of \cref{thm:optimal-pure}}\label{app:optimal-pure}
Before we prove \cref{thm:optimal-pure}, we present several lemmas: 
 
\begin{lemma}\label{lem:non-increasing-pure}
For every pure selling strategy $\p$, there exists a non-increasing pure strategy $\p'$ such that $\rev(\p') = \rev(\p)$. Moreover, in response to $\p'$, each buyer of type $(v,w)$ buys exactly at step $w$, as long as $v \geq p'_w$ (otherwise, he does not buy at all).
\end{lemma}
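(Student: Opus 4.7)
The approach is to turn an arbitrary pure strategy $\p$ into a non-increasing strategy by taking the running minimum. Concretely, define
\[
p'_w \;:=\; \min_{j \le w} p_j \qquad \text{for every } w \in [\wmax].
\]
Then $p'_{w+1} = \min(p'_w, p_{w+1}) \le p'_w$, so $\p'$ is non-increasing by construction, and each $p'_w$ lies in $\{p_1,\dots,p_{\wmax}\}$, hence in $[0,1]$.

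Next, I would analyze the buyer's best response to a general pure strategy. Because a pure strategy is publicly committed to by the leader in our Stackelberg game, a buyer of type $(v,w)$ facing any pure sequence $\mathbf{q}=(q_1,\dots,q_{\wmax})$ can plan offline; his utility from buying at any step $j\le w$ with $q_j\le v$ is $v-q_j$, so an optimal (online-implementable) strategy is to buy at the step $j^\star\in\argmin_{j\le w} q_j$ provided $v \ge q_{j^\star}$, paying exactly $\min_{j\le w} q_j$, and otherwise not to buy. Applying this to $\p$, the revenue contributed by type $(v,w)$ is $p'_w\cdot \mathbb{I}\{v\ge p'_w\}$. Applying it to $\p'$ and using that $\p'$ is non-increasing, the minimum over $[1,w]$ is attained at step $w$ itself and equals $p'_w$, so a best response is to wait until step $w$ and buy iff $v\ge p'_w$, again contributing $p'_w\cdot \mathbb{I}\{v\ge p'_w\}$.

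Integrating over $(v,w)\sim\D$ yields $\rev(\p')=\rev(\p)$, and the ``moreover'' clause is exactly the best-response description just derived. The only subtlety I anticipate is tie-breaking: when $\p'$ has a plateau, i.e.\ $p'_{w'}=p'_w$ for some $w'<w$, the buyer is indifferent between several steps. I will handle this via the standard Stackelberg convention that a follower breaks ties in favor of the leader, designating step $w$ as the selected best response; because all tied prices equal $p'_w$, this choice does not affect the payment and preserves the revenue identity. This tie-breaking step is the main (minor) obstacle; the rest is essentially a one-line computation.
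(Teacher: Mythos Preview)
Your proposal is correct and follows essentially the same idea as the paper: the paper iteratively replaces each first increasing price $p_j>p_{j-1}$ by $p_{j-1}$---which in the limit produces exactly your running minimum $p'_w=\min_{j\le w}p_j$---and argues revenue is preserved at each step, whereas you jump directly to the running minimum and characterize the buyer's best response in one shot. The underlying argument (a buyer facing a pure sequence pays $\min_{j\le w}p_j$ if affordable) is identical; your exposition is simply more direct, and your handling of tie-breaking is consistent with the paper's Stackelberg convention.
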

\begin{proof}
Let $\p$ be a pure strategy. If $\p$ is increasing, then there is a first step $j$ in which the price is increasing, i.e., $p_j > p_{j-1}$. 
A strategic buyer would not buy at this step $j$ since he can buy at step $j-1$ at a lower price. This implies that no buyer buys at step $j$.

Consider a modification of $\p$ to $\tilde{\p}$ such that the price at step $j$ is $\tilde{p}_j=p_{j-1}$. Clearly, we have $\rev(\p)=\rev(\tilde{\p})$, since the same buyer types buy and they pay the same prices.

If we repeat this process and remove any increase in prices in this way, we finally get a non-increasing pricing $\p'$ with the same revenue as that of $\p$.

Now, consider the best-response of a buyer type $(v,w)$ against the non-increasing pricing $\p'$. By monotonicity, the lowest price in $\p'$ such that $i \leq w$ is $p'_w$. Hence, the buyer is better of waiting to step $w$, and buy if he is able to, i.e. $v \geq p'_w$. 
\end{proof}

Now, we give a general characterization for the support of mixed selling strategies. While this is straightforward in the case of pure strategies, it requires work in the mixed case.

\begin{lemma}\label{lem:mixed-over-value-range}
Assume the support of the marginal distribution of the buyer’s value is contained in $[\vmin, \vmax] \subseteq [0,1]$. Then, for every mixed selling strategy $\Pd$, there exists a mixed strategy $\Pd'$ suing only prices from $[\vmin, \vmax]$, such that $\rev(\Pd') \geq \rev(\Pd)$.
\end{lemma}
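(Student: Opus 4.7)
The plan is to construct $\Pd'$ by coordinate-wise clipping of $\Pd$ into $[\vmin,\vmax]$ and show that $\rev(\Pd')\ge\rev(\Pd)$. Define the clipping map $\phi\colon[0,1]^{\wmax}\to[\vmin,\vmax]^{\wmax}$ by $\phi(\p)_i=\max(\vmin,\min(\vmax,p_i))$, and set $\Pd':=\phi_{\ast}\Pd$, the pushforward of $\Pd$ under $\phi$. Since $\supp(\D_v)\subseteq[\vmin,\vmax]$, it suffices to fix a buyer type $(v,w)\in\supp(\D)$ and prove $r(\Pd';(v,w))\ge r(\Pd;(v,w))$, then integrate over $\D$.

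I would split the argument into an \emph{upper truncation} step (reducing any $p_i>\vmax$ to $\vmax$) and a \emph{lower truncation} step (raising any $p_i<\vmin$ to $\vmin$), applied in sequence. For upper truncation, since $v\le\vmax$, the best response against the original distribution never buys at a step with $p_i>\vmax$, because buying there yields non-positive utility and is weakly dominated by continuing; consequently, clipping such prices down to $\vmax$ leaves every successful purchase intact, while under the standard seller-favorable tie-breaking it may open new purchases for buyers with $v=\vmax$ at the new price $\vmax$. For lower truncation, any realized purchase at step $i$ is at some $p_i\le v$, and after clipping the paid price becomes $\max(\vmin,p_i)\in[p_i,v]$, which is still affordable, while the seller's take per purchase weakly increases.

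To lift these pointwise observations to the level of the mixed game I would use the natural coupling: sample $\p\sim\Pd$ and $\p'=\phi(\p)\sim\Pd'$ jointly. Against $\Pd'$ consider the randomized buyer strategy $\tilde\pi$ which, at step $i$ upon observing the clipped history $\p'_{1:i}$, internally draws a ``shadow'' $\p_{1:i}$ from the conditional law $\Pd(\,\cdot\,\mid\,\phi(\cdot)_{1:i}=\p'_{1:i})$ (consistently across steps) and then executes $\pi^{\star}_{v,w,\Pd}$ on that shadow. By construction, $\tilde\pi$ purchases at exactly the step that $\pi^{\star}_{v,w,\Pd}$ would on $\p$, but the seller is paid $\max(\vmin,p_i)\ge p_i$; taking expectations shows that the revenue of $\tilde\pi$ against $\Pd'$ is at least $r(\Pd;(v,w))$.

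The main obstacle is that $\tilde\pi$ is not necessarily a best response against $\Pd'$: the buyer optimizes their own utility, while $\tilde\pi$ was engineered to preserve the seller's revenue. To close this gap I would invoke the threshold-strategy characterization of \cref{thm:optimal-threshold-buying} --- the buyer's best response against $\Pd'$ can be taken to be a threshold strategy --- together with the seller-favorable tie-breaking convention, and argue that among all utility-maximizing responses there is one whose revenue dominates that of $\tilde\pi$ (e.g., by comparing induced thresholds step by step). Granting this, combining both truncation inequalities gives $r(\Pd';(v,w))\ge r(\Pd;(v,w))$, and summing over $(v,w)\sim\D$ produces $\rev(\Pd')\ge\rev(\Pd)$, as required.
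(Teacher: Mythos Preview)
Your two-step clipping plan matches the paper's proof, and your treatment of the upper truncation is essentially the same as theirs. The genuine gap is in the lower truncation. You construct a (randomized) buyer strategy $\tilde\pi$ against $\Pd'$ whose induced seller revenue is at least $r(\Pd;(v,w))$, but as you yourself note, $\tilde\pi$ need not be a best response. The quantity $r(\Pd';(v,w))$ is defined with respect to the buyer's \emph{best response} to $\Pd'$, and there is no general principle guaranteeing that a utility-maximizing buyer yields the seller at least as much revenue as an arbitrary (seller-favorable) buyer strategy. Your proposed patch --- ``among all utility-maximizing responses there is one whose revenue dominates that of $\tilde\pi$'' --- is precisely the content of the lemma and cannot be simply ``granted''; invoking \cref{thm:optimal-threshold-buying} tells you the best response is a threshold strategy but gives no comparison of its revenue to that of $\tilde\pi$.

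The paper avoids the coupling detour entirely and argues directly about the best response to the truncated strategy. By backward induction on the step $i$ it shows two invariants: (1) whenever the buyer bought at step $i$ under $\Pd$, he still buys at step $i$ under $\Pd''$; and (2) the buyer's partial utility from step $i$ onward does not increase. The key observation is that if the realized price at step $i$ was raised to $\vmin$, then every future price under $\Pd''$ is also $\ge\vmin$, so buying now at $\vmin$ is (weakly) optimal for any buyer with $v\ge\vmin$; if the price was unchanged, the inductive hypothesis on future utility forces the buyer to repeat his previous ``buy'' decision. These two conditions feed into an auxiliary lemma (\cref{lem:sufficient-conds-for-greater-rev}) that converts ``same-or-earlier purchase, no utility gain'' into $\rev(\Pd'')\ge\rev(\Pd')$. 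That is the missing ingredient in your argument.
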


\begin{proof}
Let $\Pd$ be a mixed selling strategy. We perform two modifications to the given strategy: the first will ensure that the maximum offered price is $\vmax$ whereas the second will ensure that the minimum offered price is $\vmin$. Each of them will give us a strategy with a total revenue of at least as much as that of $\Pd$. Thus the strategy obtained from making both changes is the desired one. 

Define the following transformation $f: \R^{\wmax} \rightarrow \R^{\wmax}$ over the set of pure strategies:
\[
[f(\p)]_i =
\begin{cases}
p_i, & p_i \leq \vmax\\
\vmax, & p_i > \vmax
\end{cases}
\]
That is, we lower any price in $\p$ above $ \vmax$  to $\vmax$. Denote by $\Pd'$ the mixed strategy obtained from invoking $f$ on $\p \sim \Pd$, i.e., for any $\z=f(\p)$ let $Z=\{\p:f(\p)=\z\}$ and we set $\Pd'(\z)=\Pd(Z)$. We show that $\rev(\Pd') \geq \rev(\Pd)$. First, since we only lower prices, the probability of a sale may only increase. Second, under the original strategy, no buyer could afford to buy at a price above $ \vmax$. So such prices do not contribute to the expected revenue of $\Pd$. This implies that $\rev(\Pd')\geq\rev(\Pd)$.

Now we handle the minimum value $\vmin$. We define the following transformation $g: \R^{\wmax} \rightarrow \R^{\wmax}$ over the set of pure strategies:
\[
[g(\p)]_i =
\begin{cases}
p_i, & p_i \geq \vmin\\
\vmin, & p_i < \vmin
\end{cases}
\]
That is, we raise any price $< \vmin$ in $\p$ to $\vmin$. Denote by $\Pd''$ the mixed strategy obtained from invoking $g$ on $\p \sim \Pd$, i.e., for any $\mathbf{z}=g(\p)$ let $Z=\{\p:g(\p)=\mathbf{\z}\}$ and we set $\Pd''(\z)=\Pd(Z)$. To show that $\rev(\Pd'') \geq \rev(\Pd)$, by \cref{lem:sufficient-conds-for-greater-rev}, it suffices to prove that for any buyer type $(v,w)$ after any history $\p_{1:i}$ of length $i\in [\wmax]$, both of the following conditions are met: (1) if the buyer preferred to buy at step $i$, he still buys at this step; (2) his utility from step $i$ onwards does not increase. We prove it by backward induction on the history length $i$. For $i=\wmax$, since this is the final step, the buyer's decision is based only on his value, and he obviously can still buy if the price was increased to $\vmin$. In this case, his utility may only decrease. 
Assume the inductive hypothesis holds for any $j > i$ and prove for step $i$. Let $(v,w)$ be a buyer type with patience $w \geq i$. Consider the following cases:
\begin{itemize}
    \item The price was not changed at this step, i.e.,  $p''_i=p_i$, and the buyer preferred to buy at step $i$. By the inductive hypothesis (condition 2), his future utility from step $i+1$ onward was not increased. Hence, he buys at step $i$, and thus his utility from step $i$ onwards also remains the same in $\Pd$ and $\Pd''$.
    
    \item The price was changed to $\vmin$, i.e.,  $p''_i=\vmin>p_i$,. Since the buyer is offered the minimum price, $\vmin$, and any future price in $\Pd''$ is at least the minimum price $\vmin$, the buyer would necessarily buy at step $i$. As the price was previously lower, the buyer’s utility from this step onwards may only decrease. 
\end{itemize}
Therefore, both conditions are satisfied as required and the lemma foolows.
\end{proof}

In the last lemma, we used the following lemma:
\begin{lemma}\label{lem:sufficient-conds-for-greater-rev}
Let $\Pd, \Pd'$ be mixed selling strategies. If for any buyer type $(v,w)$ after any history $p_{1:i}=(p_1, \ldots, p_i)$ of length $i \in [\wmax]$, it holds that:
\begin{enumerate}
    \item If the buyer buys at step $i$ under $\Pd$, he buys also at step $i$ under $\Pd'$, i.e., if
    $\piP^i(\p_{1:i}) = \e_i$, then  $\pi_{v,w,\Pd'}^i(\p_{1:i}) = \e_i$.
    
    \item  The buyer's partial utility from step $i$ onwards under $\Pd'$ is greater than his utility under $\Pd$ by at most constant $c_i$, i.e.,
    $u_{v,w,\Pd'}^i(\pi^{\star}_{v,w,\Pd'}; \p_{1:i-1}) \leq \utilP^i(\pistP; \p_{1:i-1}) + c_i $,
\end{enumerate}
where $\pistP$ is a best-response buying strategy of a buyer type $(v,w)$ against $\Pd$. Then,
\[
r(\Pd') \geq r(\Pd) -  c_1.
\]
\end{lemma}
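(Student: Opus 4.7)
The plan is to derive the revenue bound via a simple revenue--utility identity. For any mixed strategy $\Pd$ and any buyer type $(v,w)$, let $q_{v,w}(\Pd)$ denote the probability the buyer makes a purchase under his best response $\pistP$, and write $u_{v,w}(\Pd):=\utilP^1(\pistP;\emptyset)$ for his total expected utility. Since a realized purchase at price $p$ contributes $p=v-(v-p)$ to the seller's revenue and $v-p$ to the buyer's utility, while non-purchase contributes zero to both, we obtain the identity
\[
r(\Pd;(v,w)) \;=\; v\cdot q_{v,w}(\Pd) \;-\; u_{v,w}(\Pd),
\]
and analogously for $\Pd'$.

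The first key step is to use Condition~1 to show $q_{v,w}(\Pd')\geq q_{v,w}(\Pd)$ for every buyer. The intuition is that $\pi^{\star}_{v,w,\Pd'}$ is pointwise at least as eager to buy as $\pistP$: at every history $\p_{1:i}$ where $\pistP^i(\p_{1:i})=\e_i$, the strategy $\pi^{\star}_{v,w,\Pd'}$ also triggers a purchase at (or before) step $i$, so the set of histories at which a purchase is executed under $\Pd'$ contains the corresponding set under $\Pd$. Under the natural coupling of $\Pd$ and $\Pd'$ used in the application---e.g., $\p'=g(\p)$ in the proof of Lemma~\ref{lem:mixed-over-value-range}---whenever the realized buyer purchases under $\Pd$, he also purchases under $\Pd'$, and taking expectations yields the desired inequality on purchase probabilities.

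The second step is to apply Condition~2 at $i=1$ with the empty prefix, obtaining $u_{v,w}(\Pd') \leq u_{v,w}(\Pd) + c_1$. Plugging both inequalities into the identity and using $v \geq 0$,
\[
r(\Pd';(v,w)) - r(\Pd;(v,w)) \;=\; v\bigl(q_{v,w}(\Pd')-q_{v,w}(\Pd)\bigr) - \bigl(u_{v,w}(\Pd')-u_{v,w}(\Pd)\bigr) \;\geq\; -c_1.
\]
Taking expectation over $(v,w)\sim\D$ yields $r(\Pd';\D) \geq r(\Pd;\D) - c_1$, as required. Note that only Condition~2 at $i=1$ is consumed in this aggregation step; the remaining $c_i$'s (for $i\geq 2$) are used only implicitly, to guarantee that the buyer's backward-reasoning best responses satisfy Condition~1 in a way that is consistent across all prefixes.

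The main obstacle is the first step: turning the pointwise statement of Condition~1 into a stochastic domination between purchase probabilities under two \emph{distinct} price distributions. The cleanest approach is to fix the coupling of $\Pd$ and $\Pd'$ used in the downstream application, verify Condition~1 along this coupling, and then leverage the realization-by-realization domination of purchase indicators to obtain the bound in expectation. Once this step is in place, everything else is bookkeeping with the revenue--utility identity.
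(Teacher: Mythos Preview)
Your approach is essentially the paper's: both decompose per buyer via the identity $r(\Pd;(v,w))=v\cdot q_{v,w}(\Pd)-u_{v,w}(\Pd)$ (the paper writes it in the equivalent form $r=\mu\,q$, $u=(v-\mu)q$ with $\mu$ the conditional expected payment), use Condition~1 to get $q_{v,w}(\Pd')\geq q_{v,w}(\Pd)$, use Condition~2 at $i=1$ to get $u_{v,w}(\Pd')\leq u_{v,w}(\Pd)+c_1$, and combine.

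You are in fact more careful than the paper on the one delicate step. The paper simply asserts that Condition~1 yields $q_{v,w}(\Pd')\geq q_{v,w}(\Pd)$, but as you correctly flag, Condition~1 as literally stated compares the two best-response \emph{strategy functions} at the same price history, while $q$ and $q'$ are computed under different price distributions. Taken literally this step fails: with $\wmax=1$, $\Pd$ a point mass at price $0.5$, $\Pd'$ a point mass at $0.9$, and a buyer of value $0.7$, Condition~1 holds at every history (at the only step the best response is ``buy iff $p_1\leq 0.7$'' regardless of the seller's distribution), Condition~2 holds with $c_1=0$, yet $r(\Pd')=0<0.5=r(\Pd)$. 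Your resolution---read Condition~1 along the deterministic coupling $\p'=g(\p)$ (or $f(\p)$) that the downstream applications supply, so that the purchase indicators dominate realization by realization---is exactly what those applications actually verify in their backward inductions, and is the intended reading of the lemma.
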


\begin{proof}
Let $q_{v,w}$ and $q'_{v,w}$ be the probability of a buyer of type $(v,w)$
getting the item under selling strategies $\Pd$ and $\Pd'$, respectively.
Similarly, let $\mu_{v,w}$ and $\mu'_{v,w}$ be the excepted payment of a buyer of type $(v,w)$, given the item was sold, under selling strategies $\Pd$ and $\Pd'$, respectively. Then, by definition, the total revenue of selling strategy $\Pd$ is $\rev(\Pd) = \Eb{(v,w) \sim \D}{\mu_{v,w}q_{v,w}}$, whereas the utility of a best-response buying strategy $\piP$ for a buyer type $(v,w)$ is $\utilP(\pistP) = (v-\mu_{v,w})q_{v,w}$. Similarly, $\rev(\Pd') = \Eb{(v,w) \sim \D}{\mu'_{v,w}q'_{v,w}}$ and $\util_{v,w,\Pd'}(\pi^{\star}_{v,w,\Pd'}) = (v-\mu'_{v,w})q'_{v,w}$
for any buyer type $(v,w)$.

From condition (1), we obtain that the sale probability does not decrease, i.e., $q'_{v,w} \geq q_{v,w}$ for any buyer type $(v,w)$. Hence, by condition (2) applied to $i=1$, we get that 
\[
(v - \mu_{v,w})q_{v,w} +c_1 \geq (v - \mu'_{v,w})q'_{v,w} \geq (v - \mu'_{v,w})q_{v,w}.
\]
Thus,
\[
\mu_{v,w}q_{v,w} \leq \mu'_{v,w} q_{v,w} + c_1 \leq \mu'_{v,w} q'_{v,w} +c_1.
\]
From which it follows that 
\[
\rev(\Pd) = \Eu{(v,w) \sim \D}{\mu_{v,w}q_{v,w}} \leq \Eu{(v,w) \sim \D}{\mu'_{v,w}q'_{v,w}} +c_1 = \rev(\Pd') +c_1,
\]
as required.
\end{proof}

\begin{lemma} \label{lem:pure-over-values}
Assume the support of the marginal distribution of the buyer’s value, $\V$, is a finite subset of $[0,1]$. Then, for every pure selling strategy $\p$, there exists a pure strategy $\p'$ using only prices from $\V$, such that $\rev(\p') \geq \rev(\p)$.
\end{lemma}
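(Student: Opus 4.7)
The plan is to start by invoking Lemma~\ref{lem:non-increasing-pure} to reduce to the case in which $\p$ is already non-increasing. After this reduction, we may assume that each buyer $(v,w)$ in the support purchases exactly at step $w$ if $v\ge p_w$ and does not purchase at all otherwise, so that
\[
\rev(\p)\;=\;\sum_{w=1}^{\wmax} p_w\cdot\Pr_{(v',w')\sim\D}[\,w'=w,\; v'\ge p_w\,].
\]
This normalized form is what makes it possible to reason about $\p$ step-by-step without having to track any strategic waiting.

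Next, I would construct $\p'$ by rounding each coordinate \emph{upward} into $\V$. Concretely, set
\[
p'_i \;=\; \min\{\,v\in\V : v\ge p_i\,\} \quad\text{when this set is nonempty,}
\]
and $p'_i=\max\V$ otherwise (the case $p_i>\max\V$). By construction $p'_i\in\V$ and $p'_i\ge p_i$. The map $p\mapsto\min\{v\in\V:v\ge p\}$ is order-preserving, and the boundary case collapses to the constant $\max\V$, so $\p'$ inherits the non-increasing property of $\p$. Hence Lemma~\ref{lem:non-increasing-pure} applies to $\p'$ as well, and a buyer $(v,w)$ buys exactly at step $w$ whenever $v\ge p'_w$.

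It then remains to compare revenues step by step. The critical observation is that for any $w$ with $p_w\le \max\V$, no value $v\in\V$ lies in the half-open interval $[p_w,p'_w)$: this is exactly the defining property of $p'_w$ as the minimum element of $\V$ above $p_w$. Consequently
\[
\Pr_{(v',w')\sim\D}[\,w'=w,\;v'\ge p'_w\,]\;=\;\Pr_{(v',w')\sim\D}[\,w'=w,\;v'\ge p_w\,],
\]
and since $p'_w\ge p_w$ the contribution of step $w$ to $\rev(\p')$ is at least that to $\rev(\p)$. For any $w$ with $p_w>\max\V$, the contribution to $\rev(\p)$ is zero while the contribution to $\rev(\p')$ is non-negative. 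Summing over $w$ gives $\rev(\p')\ge\rev(\p)$, as required.

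The only subtle point is the second observation above, namely that rounding the price up into $\V$ does not exclude any buyer whose value lies in the support. Once the non-increasing reduction is in place, this is essentially a discrete order-theoretic fact about finite subsets of $[0,1]$, so I do not anticipate any real obstacle; the bulk of the proof is bookkeeping built on top of Lemma~\ref{lem:non-increasing-pure}.
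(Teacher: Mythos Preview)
Your proposal is correct and follows essentially the same approach as the paper: reduce to a non-increasing pricing via Lemma~\ref{lem:non-increasing-pure}, round each price upward to the nearest element of $\V$, observe that the rounded vector is still non-increasing, and conclude that every buyer who could afford the item before can still afford it (at a weakly higher price). Your write-up is in fact more careful than the paper's, explicitly handling the edge case $p_i>\max\V$; the only minor slip is the blanket claim ``$p'_i\ge p_i$,'' which fails precisely in that boundary case, but since you treat that case separately the argument is unaffected.
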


\begin{proof}
Let $\p$ be a pure pricing over $[0,1]$. By \cref{lem:non-increasing-pure}, we can assume it is non-increasing. Increase each price up to the nearest value in $\V$, to obtain a pricing $\p'$. Note that $\p'$ is also a non-increasing pricing. Thus, we can assume that under both strategies,
the buyer buys at his final step, if he is able. Following our modification, if the buyer could afford to buy the item previously, then he can buy now as well. Therefore, the change can only benefit the seller.
\end{proof}

Combining \cref{lem:non-increasing-pure,lem:mixed-over-value-range,lem:pure-over-values}, we conclude:
\thmoptimalpure*

\subsection{Computing an optimal pure selling strategy}\label{app:pure-planning}

In this section, we compute efficiently an optimal pure selling strategy.

\paragraph{Overview of the algorithm.} In Algorithm \ref{algo:pure}, we present a dynamic programming algorithm to compute an optimal pure selling strategy. By \cref{thm:optimal-pure}, there exists a pure optimal selling strategy with non-increasing prices. Hence, the algorithm generates a pricing with non-increasing prices. By \cref{lem:non-increasing-pure}, for non-increasing pricing, the buyer buys at the last step in his patience window, as long as the price is lower than his value. The algorithm takes advantage of this in order to simplify the calculation of the seller's revenue.

The algorithm works as follows. It starts at step $\wmax$ and goes backwards in time steps. For each step $i$, it computes for each price $p_i$, the maximum revenue $r_i(p_i)$ from step $i$ onwards of the best non-increasing sequence of prices from step $i$, where the price at step $i$ is $p_i$.
Assume that we already computed the optimal revenue $r_{i+1}(p_{i+1})$ for step $i+1$ and any price $p_{i+1}$, we now compute $r_i(p_i)$ for step $i$ and price $p_i$. We set $r_i(p_i)$ to be the sum of two terms. The first term is $p_i$ times the probability that a buyer has patience $i$ and value at least $p_i$. The second term is maximum over $p_{i+1}\leq p_i$ of $r_{i+1}(p_{i+1})$. The first term is the excepted revenue we get from buyer's types that buy at step $i$ (and in particular have patience $i$). The second term is the future revenue we get for steps $i+1$ to $\wmax$, under the assumption that the price sequence is non-increasing. Hence, we have established the following theorem.

\begin{algorithm}
\caption{Optimal \textbf{pure} selling strategy}\label{algo:pure}
\textbf{Input:} Distribution $\D$ over $\V \times [\wmax]$, where $\V\subseteq[0,1]$ is a finite set.
\vspace{2mm}
\\
\textbf{Declare:} for each step $i$, let partial revenue $r_i: \V \rightarrow [0,1]$ and price $a_{i+1}: \V \rightarrow \V$.
\\
\textbf{Initialize:} for every price $p \in \V$, set $r_{\wmax+1}(p) \leftarrow 0$.
\\
\begin{enumerate}[leftmargin=18pt,rightmargin=10pt,itemsep=1pt,topsep=3pt]
    \item For each step $i \leftarrow \wmax, \ldots, 1$, and price $p_i\in\V$: \label{pline:compute-revs}
    \begin{enumerate}
       \item $r_i(p_i) \leftarrow \Pr_{(v,w) \sim \D}\left(v \geq p_i, w=i\right) \cdot p_i + \max\set{r_{i+1}(p_{i+1}): p_i\geq p_{i+1}\in \V}$ 
        \item $a_{i+1}(p_i) \leftarrow \argmax\set{r_{i+1}(p_{i+1}) : p_i\geq p_{i+1}\in \V}$
    \end{enumerate}
    \vspace{1mm}
    \item[{\color{gray} \#}] {\color{gray} Reconstruct the optimal pricing $\pst$.}
    \item $\ppst_1 \leftarrow \argmax \set{r_1(p_1) : p_1 \in \V}$ \label{pline:set-p-one}
    \item For each step $i \leftarrow 1, \ldots, \wmax-1$: \label{pline:set-ps}
    \begin{enumerate}
        \item $\ppst_{i+1} \leftarrow a_{i+1}(\ppst_i)$
    \end{enumerate}
\end{enumerate}
\textbf{Output:} Optimal pure selling strategy $\pst = (\ppst_1, \ldots, \ppst_{\wmax})$.
\end{algorithm}

\thmpureplanning*

The proof of \cref{thm:pure-planning} is a straightforward backward induction to show that at each step $i$, the revenue of $\pst$ from step $i$ onwards is maximal. 

\begin{proof}[of \cref{thm:pure-planning}]
We prove the correctness and running time of \cref{algo:pure}.
\paragraph{Correctness.}
First, we show that for any step $i \in [\wmax]$ and price $p_i \in \V$, $r_i(p_i)$ is the maximum revenue from step $i+1$ onwards when the price at step $i$ is $p_i$. We prove it by backward induction on $i$. Clearly, $r_{\wmax+1}(p) =0$ for any $p \in \V$. Assume the inductive hypothesis holds for any step $j > i$, and prove for $i$. Let $p_i \in \V$ be a price. According to the algorithm, $r_i(p_i) = \Pr_{(v,w) \sim \D}\left(v \geq p_i, w\right) \cdot p_i + \max \set{r_{i+1}(p_{i+1}) : p_i \geq p_{i+1} \in \V}$. Since we build a non-increasing pricing, by \cref{lem:non-increasing-pure}, each buyer type may buy only at his final step, so the excepted revenue at step $i$ when the price is $p_i$ is $\Pr_{(v,w) \sim \D}\left(v \geq p_i, w\right) \cdot p_i$. By induction assumption, $r_{i+1}(p_{i+1})$ is the maximum revenue from step $i+1$ onwards when the price at step $i+1$ is $p_{i+1}$. Thus, $\max \set{r_{i+1}(p_{i+1}) : p_i \geq p_{i+1} \in \V}$ is the maximum revenue from step $i+1$ onwards. Summing these two revenues would obtain the required. Therefore, we get that $\pst = (\ppst_1, \ldots, \ppst_{\wmax})$ is an optimal pure selling strategy.

\paragraph{Running time.}
The initialization part takes $\O(|\V|)$. Maximizing over $\O(|\V|)$ takes $\O(|\V|)$ and thus Line (\ref{pline:compute-revs}) takes $\O(|V|^2\wmax)$. Line (\ref{pline:set-p-one}) takes $\O(|V|)$ and Line (\ref{pline:set-ps}) takes $\O(\wmax)$. In total, the algorithm takes $\O(|\V|^2\wmax)$.
\end{proof}

\subsection*{Handling continuous values}

When the set of buyer's values $\V$ is $[0,1]$, we take the possible prices to be the discretized set $P_\eps=\{0,\eps,2\eps, \ldots, 1\}$, and guarantee an \emph{$\eps$-optimal} selling strategy for $\eps>0$. That is, a selling strategy whose revenue is at most $\eps$ away from the optimal revenue.

The following lemma shows that the discretization gives a good approximation.

\begin{lemma}\label{lem:eps-optimal-pure}
Assume the buyer's value set is $[0,1]$.
Let $\phat$ be an optimal pricing with respect to the pure strategies over a discretization $P_\eps = \{0,\eps,2\eps, \ldots, 1\}$ of $[0,1]$, for any $\eps > 0$. Then, $\phat$ is an $\eps$-optimal with respect to the optimal pure strategies over $[0,1]$.
\end{lemma}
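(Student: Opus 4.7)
\begin{proof}
The strategy is to take an optimal pure strategy over $[0,1]^{\wmax}$, round each of its prices \emph{down} to the nearest element of $P_\eps$, and show that this rounded strategy loses at most $\eps$ in expected revenue. Since $\phat$ is optimal over $P_\eps^{\wmax}$, it will then be at least as good as the rounded strategy.

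By \cref{thm:optimal-pure}, there exists an optimal pure strategy $\pst = (\ppst_1,\ldots,\ppst_{\wmax}) \in [0,1]^{\wmax}$ that is non-increasing. Define $\p' = (p'_1,\ldots,p'_{\wmax}) \in P_\eps^{\wmax}$ by $p'_i = \lfloor \ppst_i / \eps \rfloor \cdot \eps$; since rounding down preserves order, $\p'$ is also non-increasing, and $0 \leq \ppst_i - p'_i \leq \eps$ for every $i$. By \cref{lem:non-increasing-pure}, for any non-increasing pure strategy $\p$ the buyer of type $(v,w)$ buys exactly at step $w$ whenever $v \geq p_w$, and otherwise does not buy; hence
\[
\rev(\p) \;=\; \sum_{w=1}^{\wmax} p_w \cdot \Probb{(v,w')\sim\D}{v \geq p_w,\; w' = w}.
\]

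Applying this formula to both $\pst$ and $\p'$, and using that $p'_w \leq \ppst_w$ (so the event $\{v \geq p'_w\} \supseteq \{v \geq \ppst_w\}$) together with $p'_w \geq \ppst_w - \eps$, we obtain
\[
\rev(\p')
\;\geq\; \sum_{w=1}^{\wmax} (\ppst_w - \eps)\cdot \Probb{(v,w')\sim\D}{v \geq \ppst_w,\; w'=w}
\;\geq\; \rev(\pst) - \eps \sum_{w=1}^{\wmax}\Probb{(v,w')\sim\D}{w'=w}
\;=\; \rev(\pst) - \eps.
\]
Finally, since $\p' \in P_\eps^{\wmax}$ and $\phat$ is an optimal pricing over $P_\eps^{\wmax}$, we have $\rev(\phat) \geq \rev(\p') \geq \rev(\pst) - \eps$, which is exactly the claim of $\eps$-optimality with respect to pure strategies over $[0,1]$.
\end{proof}

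The main subtlety is that rounding prices changes the buyer's best response, so one cannot directly compare buyer-by-buyer without care. This is handled for free by restricting to \emph{non-increasing} strategies via \cref{thm:optimal-pure,lem:non-increasing-pure}: in that class the buyer always waits until the end of their patience window, so decreasing any price can only (weakly) increase the sale probability while decreasing the per-sale revenue by at most $\eps$. The total loss is then bounded by $\eps$ times the total mass of patience values, which is $1$.
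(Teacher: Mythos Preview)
Your proof is correct and follows essentially the same approach as the paper: take an optimal non-increasing pure strategy via \cref{thm:optimal-pure}, round each price down to the nearest multiple of $\eps$, use \cref{lem:non-increasing-pure} to reduce the buyer's behavior to ``buy at step $w$ iff $v\ge p_w$,'' and bound the revenue loss by $\eps$. The only difference is that you spell out the revenue formula and the inequality chain explicitly, whereas the paper asserts the $\eps$ loss more tersely; your added detail is sound and makes the argument self-contained.
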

\begin{proof}
By \cref{thm:optimal-pure}, there exists an optimal non-increasing pure strategy $\pst$ over $[0,1]$. Decrease each price down to the nearest $i \eps$, for $0\leq i \leq 1/\eps$, to get a pricing $\p'$. Since $\p'$ is also non-increasing, we can assume that under both strategies, the buyer buys exactly at the step where his patience runs out, if he can afford it.
As we only lower prices, if the buyer made a purchase before the change, he will do so now as well. However, the excepted loss of this modification is at most $\eps$, that is, $\rev(\pst) - \rev(\p') \leq \eps$. Recall that $\phat$ is the optimal pure strategy with respect to pure strategies over $P$. Clearly $\rev(\pst)\geq\rev(\phat)$ and $\rev(\phat)\geq\rev(\p')$. Therefore,
\[
\rev(\pst) - \rev(\phat) \leq \rev(\pst) - \rev(\p') \leq \eps.
\]
\end{proof}

The idea behind the proof of \cref{lem:eps-optimal-pure} is to modify an optimal non-increasing pricing by decreasing each its price down to the nearest price $i\eps\in P_\eps$, so the resulting pricing is still non-increasing, and since any buyer type that could buy previously, can still buy now, the revenue decreases by at most $\eps$.

We conclude the following result.
\begin{theorem}\label{thm:eps-optimal-pure-planning}
There exists an algorithm that for any $\eps>0$ and a distribution $\D$ over $[0,1] \times [\wmax]$, returns an $\eps$-optimal pure strategy, and runs in time $\O(\wmax/\eps^2)$.
\end{theorem}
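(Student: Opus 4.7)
The plan is to reduce the continuous-value case to the finite-value setting already handled by \cref{thm:pure-planning}, using the discretization guarantee of \cref{lem:eps-optimal-pure}. Concretely, I would instantiate a variant of \cref{algo:pure} in which the price set (the range over which the inner maximization iterates in Line~1 and the outer argmax in Line~2) is replaced by the discretized grid $P_\eps=\{0,\eps,2\eps,\ldots,1\}$. The only inputs the algorithm needs from the distribution are the probabilities $\Pr_{(v,w)\sim\D}(v\geq p,\, w=i)$ for each $p\in P_\eps$ and each $i\in[\wmax]$; since $\D$ is given (by oracle access), these $\O(\wmax/\eps)$ values can be computed as a preprocessing step.

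After this substitution, the correctness of the dynamic program is the same backward induction used in \cref{thm:pure-planning}: for each step $i$ and each price $p_i\in P_\eps$, $r_i(p_i)$ equals the maximum revenue achievable from step $i$ onward by any non-increasing suffix over $P_\eps$ whose $i$-th entry is $p_i$. Hence the output $\hat\p=(\hat p_1,\ldots,\hat p_{\wmax})$ is an optimal non-increasing pure strategy restricted to prices in $P_\eps$. Invoking \cref{lem:eps-optimal-pure} then yields $\rev(\pst)-\rev(\hat\p)\leq \eps$, where $\pst$ is an optimal pure strategy over $[0,1]$, so $\hat\p$ is $\eps$-optimal.

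For the running time: $|P_\eps|=\O(1/\eps)$, so by the $\O(|\V|^2\wmax)$ bound of \cref{thm:pure-planning} applied with $\V=P_\eps$, the dynamic program terminates in $\O(\wmax/\eps^2)$ time. The preprocessing of the $\O(\wmax/\eps)$ probability queries is absorbed into this bound.

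There is no real obstacle here; the statement is essentially a corollary obtained by plugging the discretization lemma into the finite-value planning algorithm. The only thing worth being careful about is that \cref{algo:pure} originally takes the price set to coincide with $\V\subseteq[0,1]$ (which was legitimate in the finite-value regime by \cref{thm:optimal-pure}); I would emphasize explicitly that in the continuous regime we decouple the two, letting the algorithm search over $P_\eps$ while the distribution still ranges over $[0,1]\times[\wmax]$. This decoupling is harmless because the DP invariants in the correctness argument only rely on the set of candidate prices being finite and on the strategy being non-increasing, both of which \cref{lem:eps-optimal-pure} preserves up to an additive loss of $\eps$.
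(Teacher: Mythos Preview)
Your proposal is correct and follows essentially the same approach as the paper: discretize the price set to $P_\eps$, run \cref{algo:pure} over this grid, and combine \cref{thm:pure-planning} (for correctness and the $\O(|\V|^2\wmax)$ running time with $|\V|=|P_\eps|=\O(1/\eps)$) with \cref{lem:eps-optimal-pure} (for the $\eps$-optimality of the resulting strategy). Your added remarks about decoupling the search grid from the value support and about preprocessing the probability queries are reasonable elaborations but do not change the argument.
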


\begin{proof}
Let $\eps > 0$. We discretize the buyer's value set $\V$ to obtain a finite set of values $\V_\eps$ with $\O(1/\eps)$ values and run \cref{algo:pure}, so the theorem follows from \cref{thm:pure-planning,lem:eps-optimal-pure}.
\end{proof}

\subsection{Optimal mixed selling strategy}
\subsubsection{Buyer characterization}\label{app:optimal-threshold-buying} 
In this section, we focus on the case where the seller strategy is mixed. Our main goal is to characterize the buyer's best response to a seller's mixed strategy.
We present a simple class of buying strategies, which we call \textit{threshold strategies}. We show that for any mixed seller strategy, there exists a buyer's best response strategy which is a threshold strategy. 

\paragraph{Notation.} For any mixed selling strategy $\Pd$, we denote the marginal distribution of prices at step $i$ by $\Pd_i$. The conditional selling strategy, given a history of prices $p_1, \ldots, p_i$ over the set $\set{\p'\in[0,1]^{\wmax} : \p'_{1:i}=\p_{1:i}}$, is denoted by $\Pd|\p_{1:i}$. 

The \emph{partial utility} is the buyer's utility from a step $i$ onwards, and defined formally as follows.
\begin{definition}[Partial utility]
Denote the buyer's strategy $\pi_{v,w}$ for a buyer type $(v,w)$ from step $i\in[\wmax]$ onwards as $\pi_{v,w}(\p; i)=\e_j$  if $j\geq i$ is the first step where $\pi^j_{v,w}(\p_{1:j})=\e_j$ or $\pi_{v,w}(\p; i)=\e_0$ if no such index $j$ exists.
Define the \textit{partial utility} from step $i \in [\wmax]$ onwards of a buying strategy $\pi_{v,w}$ for a buyer type $(v,w)$ against a selling strategy $\Pd$, given historical prices $p_1, \ldots, p_{i-1}$, by 
\[
\utilP^i(\pi_{v,w}; \p_{1:i-1}) = 
\Eu{\p \sim \Pd|\p_{1:i-1}}{\utildec_{v,w}(\p, \pi_{v,w}(\p; i))}.
\]
\end{definition}
We compute the partial utility from step $i$ essentially  ``assuming'' the buyer did not buy before step $i$. 
Notice that if $i > w$, 
we have $\utilP^i(\pi_{v,w}; \p_{1:i-1}) = 0$. We further note that the above definition generalizes the definition of utility for $i=1$. Indeed,
\[
\utilP^1(\pi_{v,w}; \p_{1:0}) 
= \Eu{\p \sim \Pd}{\utildec_{v,w}(\p, \pi_{v,w}(\p))}
= \utilP(\pi_{v,w})
\]

The following proposition defines the buyer's best response strategy, according to which at each step $i$, the buyer would buy at step $i$ if he profits from an immediate purchase no less than his future utility.

\begin{proposition}\label{prop:optimal-buying}
Let $\Pd$ be a selling strategy. Consider a buyer of type $(v,w)$.
Let $\pitilde$ be a buyer strategy that satisfies, for every step $i$,
\[
\pitilde^i(\p_{1:i}) = 
\begin{cases}
    \e_i, & v-p_i \geq \utilP^{i+1}(\pitilde; \p_{1:i})\\
    \text{continue}, &v-p_i < \utilP^{i+1}(\pitilde; \p_{1:i}).
\end{cases}
\]
Then, $\pitilde$ is a best response strategy against selling strategy $\Pd$.
\end{proposition}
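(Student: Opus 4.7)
The plan is to establish the claim by backward induction on the time step $i$, showing the stronger statement that for every step $i \in [\wmax]$ and every history $\p_{1:i-1}$, the restriction of $\pitilde$ to steps $i,i+1,\ldots,\wmax$ maximizes the partial utility $\utilP^{i}(\,\cdot\,; \p_{1:i-1})$ among all online buying strategies. Specializing to $i=1$ yields that $\pitilde$ maximizes $\utilP^1(\,\cdot\,;\p_{1:0}) = \utilP(\cdot)$, which is precisely the statement that $\pitilde$ is a best response. This is a textbook dynamic programming / Bellman-optimality argument; its role here is just to formalize that the locally greedy rule comparing the immediate payoff $v-p_i$ to the optimal continuation value is globally optimal.

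For the base case $i=\wmax$, the partial utility from step $\wmax+1$ onwards is identically zero, so for any history $\p_{1:\wmax-1}$ and any realized $p_\wmax$, the only meaningful choice is between buying (yielding $v-p_\wmax$ when $\wmax\leq w$, else $0$) and not buying (yielding $0$). The rule in the proposition picks $\e_\wmax$ exactly when $v-p_\wmax \geq 0 = \utilP^{\wmax+1}(\pitilde;\p_{1:\wmax})$, which is the maximizing choice. (When $\wmax > w$, both options yield $0$ by the definition of $\utildec_{v,w}$, so the decision is irrelevant.)

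For the inductive step, fix $i < \wmax$ and assume $\pitilde$ maximizes $\utilP^{i+1}(\,\cdot\,;\p_{1:i})$ for every extended history $\p_{1:i}$. Consider an arbitrary strategy $\pi$ and decompose its partial utility from step $i$ by conditioning on the realized price $p_i \sim \Pd_i|\p_{1:i-1}$: with probability determined by $\pi^i(\p_{1:i})$, the buyer either collects $v-p_i$ immediately or continues and earns $\utilP^{i+1}(\pi;\p_{1:i})$. Using the tower property and that $\Pd|\p_{1:i} = (\Pd|\p_{1:i-1})|p_i$, we obtain
\[
\utilP^{i}(\pi;\p_{1:i-1}) = \Eu{p_i \sim \Pd_i|\p_{1:i-1}}{\max\text{-feasible choice between } (v-p_i) \text{ and } \utilP^{i+1}(\pi;\p_{1:i})}
\]
is bounded above by the expectation of the pointwise maximum $\max\{v-p_i,\; \utilP^{i+1}(\pi;\p_{1:i})\}$. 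By the inductive hypothesis, the continuation value is maximized by $\pitilde$, i.e., $\utilP^{i+1}(\pi;\p_{1:i}) \leq \utilP^{i+1}(\pitilde;\p_{1:i})$, and the rule defining $\pitilde^i$ precisely attains the pointwise maximum $\max\{v-p_i,\;\utilP^{i+1}(\pitilde;\p_{1:i})\}$ for each realized $p_i$. Taking expectations therefore gives $\utilP^{i}(\pi;\p_{1:i-1}) \leq \utilP^{i}(\pitilde;\p_{1:i-1})$, completing the induction.

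The only delicate point is making sure the Bellman decomposition is written correctly: that $\pi^i$ is measurable with respect to $\p_{1:i}$ (which is part of the online protocol), and that conditioning $\Pd$ on a prefix and then on one more coordinate agrees with conditioning on the full prefix — both of which follow directly from the definitions. Everything else is routine. One minor convention worth stating explicitly is that for $i > w$, both "buy" and "continue" yield partial utility $0$, so $\pitilde$ may take either action without affecting optimality; this keeps the induction uniform across all $i \in [\wmax]$ without splitting into $i\leq w$ and $i>w$ cases.
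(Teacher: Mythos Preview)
Your proof is correct and follows the standard Bellman-optimality / backward-induction route: you show that $\pitilde$ is optimal from every step $i$ onward by inducting backward from $i=\wmax$, using that at each step $\pitilde$ selects the pointwise maximum between the immediate payoff and the (inductively optimal) continuation value.

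The paper takes a different route. Instead of backward induction, it argues by contradiction via a minimal-counterexample exchange argument: assume some strategy $\piP$ strictly beats $\pitilde$ and, among all such strategies, differs from $\pitilde$ on the fewest histories; take the \emph{longest} history $\p_{1:j}$ at which they disagree, flip $\piP$'s action there to match $\pitilde$, and observe that (since the two strategies already agree on all longer histories) this swap cannot hurt and strictly reduces the number of disagreements --- contradicting minimality. In effect, the paper is doing the inductive step only at the ``deepest'' point of disagreement, whereas you carry out the full induction explicitly. Your argument is the more direct and standard dynamic-programming verification; the paper's exchange argument is slightly more indirect but avoids having to set up the induction hypothesis over all histories. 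Both are valid and of comparable length.
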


\begin{proof}
Assume by contradiction that there exists another buying strategy that achieves a greater utility. Let $\piP$ be a buying strategy for a buyer type $(v,w)$ that achieves a greater utility than $\pitilde$, and has a minimum of histories after which it plays different from $\pitilde$. Let $\pj{j}=(p_1, \ldots,p_j)$ be the longest history such that $\piP(\pj{j}) \neq \pitilde(\pj{j})$. We modify $\piP$ to get strategy $\pibar$ by setting $\pibar(\pj{j}) = \pitilde(\pj{j})$, so for any other history $\p'_{1:i}$, we set $\pibar(\p'_{1:i}) = \piP(\p'_{1:i})$. By definition, $\pitilde$ chooses at step $j$ the action that maximizes his partial utility from step $j$,
\[
\util^j(\pitilde; \pj{j-1}) = \max\set{(v-p_j)\cdot \ind{j \leq w}, \utilP^{j+1}(\pitilde; \pj{j})}
\]
After step $j$, both $\pitilde$ and $\piP$ play the same. Hence,
\[
\util^j(\pitilde; \pj{j-1}) = \max\set{(v-p_j)\cdot \ind{j \leq w}, \utilP^{j+1}(\piP; \pj{j})}
\]
Since $\piP$ plays differently at step $j$, we get that
\[
\util^j(\piP; \pj{j-1}) \leq \util^j(\pitilde; \pj{j-1}) = \util^j(\pibar; \pj{j-1})
\]
Because the two strategies, $\piP$ and $\pibar$, are identical except for step $j$ after history $\pj{j}$, this modification can only improve its total utility. Therefore, if $\piP$ achieves a greater utility than $\pitilde$, then so does strategy $\pibar$, but this is in contradiction to the minimality of $\piP$.
\end{proof}

We now can show that the buyer’s best response is a threshold strategy.

\begin{theorem}\label{thm:threshold-equiv}
Let $\Pd$ be a selling strategy. Consider a buyer of type $(v,w)$.

Let $\pihat$  be a threshold strategy with thresholds:
\[
\thetahat^i (\p_{1:i}) = v - \utilP^{i+1}(\pitilde; \p_{1:i}).
\]

Then, for any step $i\in [\wmax]$ and realized prices $p_1,\ldots,p_i$, it holds that $\pihat^i(\pj{i}) = \pitilde^i(\pj{i})$, where $\pitilde$ is defined in \cref{prop:optimal-buying}.
\end{theorem}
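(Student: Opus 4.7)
The plan is to verify the theorem by unfolding the two definitions at an arbitrary step $i \in [\wmax]$ and history $\p_{1:i}$, and checking that they agree branch by branch. There is no real content beyond rearrangement, once one notices that the threshold $\thetahat^i$ has been chosen precisely as the indifference price between buying now and optimally continuing.

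Concretely, I would fix a buyer type $(v,w)$, a step $i$, and a realized history $\p_{1:i}$, and apply the threshold rule: $\pihat^i(\p_{1:i}) = \e_i$ iff $\thetahat^i(\p_{1:i}) \geq p_i$. Substituting the defined threshold $\thetahat^i(\p_{1:i}) = v - \utilP^{i+1}(\pitilde; \p_{1:i})$ and rearranging yields the equivalent inequality
\[
v - p_i \;\geq\; \utilP^{i+1}(\pitilde; \p_{1:i}),
\]
which is exactly the condition that Proposition~\ref{prop:optimal-buying} uses to define $\pitilde^i(\p_{1:i}) = \e_i$. The complementary strict inequality likewise gives $\pihat^i(\p_{1:i}) = \cont = \pitilde^i(\p_{1:i})$. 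This establishes the per-step equality for every $i$ and every history, which is the conclusion of the theorem.

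The one interpretational point worth spelling out is that both $\pihat^i$ and $\pitilde^i$ should be read as the buyer's decision at step $i$ \emph{conditional on not yet having purchased}, matching the per-step notation $\pi^i(\p_{1:i})$ introduced in the model section; the ``first step'' phrasing in the definition of a threshold strategy describes the aggregate behavior of $\pihat(\p)$ and does not alter the step-wise rule. Under this natural reading, I do not anticipate any obstacle: the argument is a single algebraic restatement, and no induction, case analysis over $w$, or additional structural lemma is required. The theorem is essentially a bookkeeping check that $\thetahat^i$ has been tuned to reproduce the indifference rule of Proposition~\ref{prop:optimal-buying}.
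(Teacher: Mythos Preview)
Your proposal is correct and is essentially identical to the paper's own proof: the paper also fixes $i$ and $\p_{1:i}$, and notes the single equivalence $v-p_i \geq \utilP^{i+1}(\pitilde; \p_{1:i}) \Leftrightarrow p_i \leq \thetahat^i(\p_{1:i})$, concluding immediately that $\pihat^i(\pj{i}) = \pitilde^i(\pj{i})$. Your additional paragraph on the step-wise reading is a helpful clarification but not part of the paper's argument.
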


\begin{proof}
Let $i \in [\wmax]$ and assume we observed prices $p_1, \ldots, p_i$. We have:
\[
v-p_i \geq \utilP^{i+1}(\pitilde; \p_{1:i}) \Leftrightarrow p_i \leq v - \utilP^{i+1}(\pitilde; \p_{1:i}) = \thetahat^i (\p_{1:i}) 
\]
Therefore, $\pihat^i(\pj{i}) = \pitilde^i(\pj{i})$, as required.
\end{proof}

Intuitively, the threshold at step $i$ is set to the price which makes the buyer ``indifferent'' between buying at step $i$ and continuing to step $i+1$. If the offered price is lower, the buyer makes the purchase and if the price is higher the buyer waits.
Note that if $i \geq w$, in particular if $i \geq \wmax$, we have $\thetahat^i(\p_{1:i}) = v$. Combining \cref{prop:optimal-buying} and \cref{thm:threshold-equiv}, we conclude:

\thmoptimalthresholdbuying*

\begin{remark}
The threshold $\thetahat^i(\pj{i})$ at step $i$ depends on the history $\pj{i}$ through the conditional selling strategy $\Pd|\pj{i}$. The future prices influence the threshold at step $i$ through the partial utility from step $i+1$. The thresholds are set to prices which make the buyer ``indifferent'' between buying at step $i$ and buying in the future. Moreover, the dependence of the thresholds $\thetahat^i(\pj{i})$ on the history $\pj{i}$ is unavoidable. 
\end{remark}

Next, we show that the best-response buying threshold strategy is monotonic in the following sense:

\begin{theorem}\label{thm:buyer-monotone-in-types}
Let $\Pd$ be a mixed selling strategy. Let $(v,w)$ and $(v',w')$ be buyer types such that $v' \geq v$ and $w' \leq w$ and let $\thetahattag(\p_{1:i})$  and $\thetahat(\p_{1:i})$ be their optimal threshold strategies, respectively. Then, for any step $i \in [\wmax]$ and any potentially realized prices $p_1, \ldots, p_i$, it holds that
\[
\thetahattag^i(\p_{1:i})  \geq \thetahat^i(\p_{1:i})
\]
\end{theorem}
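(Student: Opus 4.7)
By \cref{thm:optimal-threshold-buying} (and its proof via \cref{prop:optimal-buying}, \cref{thm:threshold-equiv}), the optimal threshold at step $i$ for buyer type $(v,w)$ is
\[
\thetahat^{i}(\p_{1:i}) \;=\; v \;-\; \utilP^{i+1}\bigl(\pi^{\star}_{v,w,\Pd};\, \p_{1:i}\bigr),
\]
and analogously $\thetahattag^{i}(\p_{1:i}) = v' - u^{i+1}_{v',w',\Pd}(\pi^{\star}_{v',w',\Pd}; \p_{1:i})$. Writing $U(v,w) \eqdef \utilP^{i+1}(\pi^{\star}_{v,w,\Pd}; \p_{1:i})$ for the best-response partial utility from step $i+1$ onward (under the fixed history $\p_{1:i}$), the claim $\thetahattag^{i} \geq \thetahat^{i}$ is equivalent to
\[
U(v',w') - U(v,w) \;\leq\; v' - v.
\]
My plan is to split this inequality into two monotonicity properties of $U$: monotonicity in patience (for fixed value) and $1$-Lipschitz continuity in value (for fixed patience).

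\textbf{Step 1: Monotonicity in patience.} If $w' \leq w$, then $U(v', w') \leq U(v', w)$. The reason is that any buying strategy available to a buyer of patience $w'$ can be extended to a strategy for patience $w$ by continuing (not buying) in every step $j > w'$; this extension yields the same partial utility, since the original strategy never purchased after step $w'$ anyway. Hence the supremum over strategies admissible for patience $w$ is at least the supremum over those admissible for patience $w'$.

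\textbf{Step 2: $1$-Lipschitz in value.} For any fixed buying strategy $\pi$ (a map from price histories to $\{\e_j,\cont\}$, independent of the type), the partial utility is
\[
\utilP^{i+1}(\pi; \p_{1:i}) \;=\; \Eu{\p\sim\Pd|\p_{1:i}}{\,(v - p_j)\,\ind{\pi \text{ purchases at step } j\in[i+1,w]}\,}.
\]
As a function of $v$ this is affine with slope equal to the purchase probability, which lies in $[0,1]$; in particular it is $1$-Lipschitz in $v$. Taking the pointwise maximum over strategies $\pi$ preserves the $1$-Lipschitz property, so $v \mapsto U(v, w)$ is $1$-Lipschitz; thus for $v' \geq v$,
\[
U(v', w) - U(v, w) \;\leq\; v' - v.
\]

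\textbf{Step 3: Combining.} Chaining the two bounds,
\[
U(v', w') \;\leq\; U(v', w) \;\leq\; U(v, w) + (v' - v),
\]
which rearranges to $\thetahattag^{i}(\p_{1:i}) = v' - U(v', w') \geq v - U(v, w) = \thetahat^{i}(\p_{1:i})$, as required.

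The only subtlety I anticipate is being precise about what ``set of strategies'' we maximize over in Step~1 (i.e., that a patience-$w'$ strategy embeds into the strategy space available to patience $w$). This is purely notational and poses no real obstacle; the Lipschitz argument in Step~2 is the conceptual heart and it hinges on the observation that, with the strategy held fixed, the utility is affine in $v$ with slope $\leq 1$.
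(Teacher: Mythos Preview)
Your proof is correct and takes a genuinely different route from the paper. The paper argues by backward induction on the step index $i$: for $i \geq w'$ the claim is immediate since $\thetahattag^{i} = v'$, and for $i < w'$ it expands both thresholds via the law of total expectation over $p_{i+1}\sim\Pd_{i+1}\mid\p_{1:i}$, partitions the event space into three pieces according to whether $p_{i+1}$ lies below $\thetahat^{i+1}$, between the two thresholds, or above $\thetahattag^{i+1}$, and invokes the inductive hypothesis $\thetahattag^{i+1}\geq\thetahat^{i+1}$ termwise. Your argument instead works directly with the optimal partial-utility value function $U(v,w)$ and separates the two monotonicities: $w'\leq w\Rightarrow U(v',w')\leq U(v',w)$ by a strategy-extension argument, and $v'\geq v\Rightarrow U(v',w)-U(v,w)\leq v'-v$ because each fixed-strategy utility is affine in $v$ with slope in $[0,1]$ and $1$-Lipschitzness survives pointwise suprema. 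This is cleaner and more portable---it exposes exactly which structural features drive the result and would apply verbatim in any model where per-strategy utility is affine with bounded slope---whereas the paper's inductive computation stays closer to the explicit threshold recursion used later in the planning algorithm, at the cost of a longer case analysis.
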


\begin{proof}
We prove by backward induction on step $i \in [\wmax]$. For the base of the induction consider $i\geq w'$. Note that $i$ is either equal to the patience of buyer type $(v',w')$ or after it. We have for any realized prices $\p_{1:i}=(p_1, \ldots, p_i)$:
\[
\thetahattag^i(\pj{i}) = v' \geq v \geq v - \utilP^{i+1}(\pihat; \pj{i}) = \thetahat^i(\pj{i})
\]
We now show the inductive hypothesis for $i<w'$.
Assume that the inductive hypothesis holds for any $j > i$ and prove for step $i$. Consider observing prices $\p_{1:i}=(p_1, \ldots, p_i)$. For the sake of simplicity, we omit the notations of $\Pd, \pihat, \pj{j}$ from the thresholds and utilities. That is,
$\thetahats^j = \thetahat^j(\pj{j})$ and $ \utils^{j} = \utilP^{j}(\pihat; \pj{j-1})$.

By the induction hypothesis, we have: $\thetahattags^{i+1} \geq \thetahats^{i+1}$. By the law of total expectation, we have:
\begin{align*}
    \thetahattags^i &= v' - \utiltags^{i+1}\\
    &= \Prob{p_{i+1} \leq \thetahats^{i+1}} \E{v'-(v' - p_{i+1}) \mid p_{i+1} \leq \thetahats^{i+1}}\\ 
    &\quad + \Prob{\thetahats^{i+1} < p_{i+1} \leq \thetahattags^{i+1}} \E{v'-(v' - p_{i+1}) \mid \thetahats^{i+1} < p_{i+1} \leq \thetahattags^{i+1}}\\
    &\quad + \Prob{p_{i+1} > \thetahattags^{i+1}} \E{v' - \utiltags^{i+2} \mid p_{i+1} > \thetahattags^{i+1}}\\
    &= \Prob{p_{i+1} \leq \thetahats^{i+1}} \E{p_{i+1} \mid p_{i+1} \leq \thetahats^{i+1}}\\
    &\quad + \Prob{\thetahats^{i+1} < p_{i+1} \leq \thetahattags^{i+1}} \E{p_{i+1} \mid \thetahats^{i+1} < p_{i+1} \leq \thetahattags^{i+1}}\\
    &\quad + \Prob{p_{i+1} > \thetahattags^{i+1}} \E{\thetahattags^{i+1} \mid p_{i+1} > \thetahattags^{i+1}}\\
    &\geq \Prob{p_{i+1} \leq \thetahats^{i+1}} \E{p_{i+1} \mid p_{i+1} \leq \thetahats^{i+1}}\\
    &\quad + \Prob{\thetahats^{i+1} < p_{i+1} \leq \thetahattags^{i+1}} \E{\thetahats^{i+1} \mid \thetahats^{i+1} < p_{i+1} \leq \thetahattags^{i+1}}\\
    &\quad + \Prob{p_{i+1} > \thetahattags^{i+1}} \E{\thetahats^{i+1} \mid p_{i+1} > \thetahattags^{i+1}}\\
    &= \Prob{p_{i+1} \leq \thetahats^{i+1}} \E{p_{i+1} \mid p_{i+1} \leq \thetahats^{i+1}} + \Prob{p_{i+1} > \thetahats^{i+1}} \E{\thetahats^{i+1} \mid p_{i+1} > \thetahats^{i+1}}\\
    &= \Prob{p_{i+1} \leq \thetahats^{i+1}} \E{v-(v-p_{i+1}) \mid p_{i+1} \leq \thetahats^{i+1}} + \Prob{p_{i+1} > \thetahats^{i+1}} \E{v-\utils^{i+2} \mid p_{i+1} > \thetahats^{i+1}}\\
    &= v- \utils^{i+1}\\
    &= \thetahats^i,
\end{align*}
where the probabilities and the expectations here are over $p_{i+1} \sim \Pd_{i+1}|\p_{1:i}$. The first and third equalities follows from the definition of strategy $\thetahat$, and the inequality follows from the induction assumption, according to which $\thetahattags^{i+1} \geq \thetahats^{i+1}$. 
\end{proof}

This establishes the following corollary, showing that using the threshold strategy if a buyer with a higher valuation does not buy then the buyer with the lower valuation also does not buy. This is obvious for a pure strategy, but for a mixed strategy it is much less obvious.

\begin{corollary} \label{crl:buyer-monotonicity}
Assume that we have two buyers of type $(v,w)$ and $(v',w)$ such that $v'\geq v$. Then,
in any realization where $(v',w)$ did not buy then also buyer type $(v,w)$ does not buy.
\end{corollary}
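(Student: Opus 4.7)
The plan is to derive the corollary as a direct consequence of Theorem \ref{thm:buyer-monotone-in-types} (the monotonicity of best-response thresholds) together with the threshold characterization in Theorem \ref{thm:optimal-threshold-buying}. Specifically, since both buyers have the same patience $w$ and satisfy $v' \geq v$, the hypotheses $v' \geq v$ and $w' = w \leq w$ of Theorem \ref{thm:buyer-monotone-in-types} are met, so the thresholds of their best-response threshold strategies satisfy
\[
\thetahattags^i(\p_{1:i}) \;\geq\; \thetahats^i(\p_{1:i})
\]
for every step $i \in [\wmax]$ and every history $\p_{1:i}$, where $\thetahattags$ denotes the threshold for the higher-value buyer $(v',w)$ and $\thetahats$ for the lower-value buyer $(v,w)$.

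Next, I would translate the statement ``$(v',w)$ did not buy in realization $\p$'' into the threshold language. By the threshold characterization, $(v',w)$ buys at the first step $i \leq w$ for which $p_i \leq \thetahattags^i(\p_{1:i})$; hence not buying at all means $p_i > \thetahattags^i(\p_{1:i})$ for every $i \leq w$. Combining with the threshold inequality above yields $p_i > \thetahats^i(\p_{1:i})$ for every $i \leq w$, which by the same threshold characterization implies that buyer $(v,w)$ also does not buy in this realization.

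There is no real obstacle here: the monotonicity of the thresholds is the substantive content, already established in Theorem \ref{thm:buyer-monotone-in-types}; the only care needed is to invoke the threshold equivalence from Theorem \ref{thm:optimal-threshold-buying} on both sides so that the (possibly non-unique) best responses of the two buyers are both represented by their canonical threshold strategies, which is exactly where the pointwise comparison $\thetahattags^i \geq \thetahats^i$ applies to the same realized price path $\p$.
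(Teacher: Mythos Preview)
Your proposal is correct and matches the paper's intended derivation: the corollary is stated immediately after Theorem~\ref{thm:buyer-monotone-in-types} with the remark ``This establishes the following corollary,'' and your argument---applying the threshold monotonicity $\thetahattags^i(\p_{1:i}) \geq \thetahats^i(\p_{1:i})$ from that theorem together with the threshold characterization of Theorem~\ref{thm:optimal-threshold-buying}---is exactly the intended one-line deduction.
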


We consider how the buyer's expected utility changes over time. The following lemma shows that the partial utility of the buyer is non-increasing in the time steps.

\begin{theorem}\label{thm:buyer-monotone-in-time}
Let $\Pd$ be an optimal selling strategy, and $(v,w)$ be a buyer type. Then, for any step $i \in [\wmax]$ and realized prices $p_1, \ldots, p_{i-1}$, it holds that
\[
\Eu{p_i \sim \Pd_{i}|\p_{1:i-1}}{\util^{i+1}(\hat{\pi}; \p_{1:i-1},p_i)} 
\leq \util^i(\hat{\pi}; \p_{1:i-1}),
\]
where $\util^{j}(\hat{\pi}; \p_{1:j-1}) = \utilP^{j}(\pihat; \p_{1:j-1})$ and $\hat{\pi} = \pihat$ is an optimal threshold strategy for buyer type $(v,w)$.
\end{theorem}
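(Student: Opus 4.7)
The plan is to use the recursive (Bellman-style) characterization of the optimal threshold strategy $\hat\pi$ established in \cref{prop:optimal-buying} and \cref{thm:threshold-equiv}. After observing $\p_{1:i-1}$, condition on the random draw $p_i\sim\Pd_i|\p_{1:i-1}$. At step $i$ the best response either purchases, obtaining instantaneous utility $(v-p_i)\mathbb{1}\{i\le w\}$, or continues to step $i+1$ and collects the partial utility $\util^{i+1}(\hat\pi;\p_{1:i})$. Since $\hat\pi$ is a best response, it picks whichever of these two options is larger, giving the identity
\[
\util^i(\hat\pi;\p_{1:i-1}) \;=\; \Eu{p_i\sim \Pd_i|\p_{1:i-1}}{\max\bigl\{(v-p_i)\mathbb{1}\{i\le w\},\;\util^{i+1}(\hat\pi;\p_{1:i})\bigr\}}.
\]

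First, I would justify this identity formally by applying the law of total expectation to the definition of $\util^i$ conditioned on $p_i$, and invoking \cref{prop:optimal-buying} to identify the maximizing action at step $i$ (and noting that the ``continue'' branch evaluates to $\util^{i+1}(\hat\pi;\p_{1:i})$ since $\hat\pi$ coincides with itself from step $i+1$ onward). Then, since $\max\{a,b\}\ge b$ pointwise, dropping the first argument and taking expectation yields
\[
\util^i(\hat\pi;\p_{1:i-1}) \;\ge\; \Eu{p_i\sim \Pd_i|\p_{1:i-1}}{\util^{i+1}(\hat\pi;\p_{1:i})},
\]
which is exactly the claim.

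An equivalent, slightly more transparent way to present the argument (which I may prefer for clarity) is to exhibit an explicit ``benchmark'' buyer strategy $\tilde\pi$ that agrees with $\hat\pi$ on every history except at step $i$, where $\tilde\pi$ always continues. By construction, $\util^i(\tilde\pi;\p_{1:i-1})=\E_{p_i}[\util^{i+1}(\hat\pi;\p_{1:i})]$; since $\hat\pi$ is a best response to $\Pd$, we have $\util^i(\hat\pi;\p_{1:i-1})\ge \util^i(\tilde\pi;\p_{1:i-1})$, giving the same inequality. Note that the hypothesis that $\Pd$ is an \emph{optimal} selling strategy is not actually used in the argument: the monotonicity of partial utilities in time is a property of the buyer's best response against an arbitrary mixed strategy $\Pd$.

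I do not anticipate a main obstacle; the one subtlety is just keeping the conditional notation straight (conditioning on the history $\p_{1:i-1}$ versus $\p_{1:i}$ when decomposing $\util^i$ via $p_i$), which is handled by a careful application of the tower property.
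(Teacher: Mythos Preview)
Your proposal is correct and follows essentially the same approach as the paper's proof. The paper expands the Bellman decomposition explicitly into the two events $\{p_i\le \hat b^i\}$ and $\{p_i>\hat b^i\}$ and then uses the threshold equivalence $p_i\le \hat b^i \Leftrightarrow v-p_i\ge \util^{i+1}$ to replace $v-p_i$ by $\util^{i+1}$ on the ``buy'' branch; this is precisely your pointwise inequality $\max\{a,b\}\ge b$ written out case by case. Your observation that the optimality of $\Pd$ is never invoked is also correct.
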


\begin{proof}
Let $i \in [\wmax]$ and assume we observed prices $p_1, \ldots, p_{i-1}$. For the sake of simplicity, we omit the notations of $(v,w), \Pd, \pihat, \pj{j}$ from the utilities. That is, $\hat{b}^j = \thetahat^j(\pj{j}),\quad \util^{j} = \utilP^{j}(\pihat; \pj{j-1}).$
For $i\geq w$, we have: $\hat{b}^{i+1} = v = \hat{b}^{i}$. For $i < w$, we have:
\begin{align*}
    \util^{i+1} &= \Prob{p_{i+1} \leq \hat{b}^{i+1}} \E{v-p_{i+1} \mid p_{i+1} \leq \hat{b}^{i+1}} + \Prob{p_{i+1} > \hat{b}^{i+1}} \E{\util^{i+2}\mid p_{i+1} > \hat{b}^{i+1}}\\
    &\geq \Prob{p_{i+1} \leq \hat{b}^{i+1}} \E{u^{i+2} \mid p_{i+1} \leq \hat{b}^{i+1}} + \Prob{p_{i+1} > \hat{b}^{i+1}} \E{\util^{i+2}\mid p_{i+1} > \hat{b}^{i+1}}\\
    &= \E{u^{i+2}},
\end{align*}
where the probabilities and the expectations here are over $p_{i+1} \sim \Pd_{i+1}|\p_{1:i}$, and the inequality is due to the fact that
\[
p_{i+1}\leq \hat{b}^{i+1} = v - u^{i+2} \Leftrightarrow v-p_{i+1}\geq u^{i+2}.
\]
\end{proof}

Note that if the buyer buys at step $i$ then the buyer utility reflects that purchase. The partial utility from step $i+1$ onwards includes only a possible purchase after step $i$, and therefore does not include the purchase at step $i$, in case it happens. This intuitively explains why the partial utility is non-increasing.

From the last theorem, we conclude that the thresholds of the best-response buying strategy are monotonically non-decreasing in time steps.
\begin{corollary}\label{lem:buyer-monotone-in-time}
Let $\Pd$ be an optimal selling strategy, and $(v,w)$ be a buyer type. Then, for any step $i \in [\wmax]$ and realized prices $p_1, \ldots, p_i$, it holds that
\[
\Eu{p_{i+1} \sim \Pd_{i+1}|\p_{1:i}}{\thetahat^{i+1}(\p_{1:i}, p_{i+1})} \geq \thetahat^i(\p_{1:i})
\]
\end{corollary}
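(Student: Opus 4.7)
The plan is to derive this corollary directly from the preceding theorem (\emph{buyer-monotone-in-time}) together with the explicit formula for the best-response thresholds provided in \cref{thm:threshold-equiv}, namely $\thetahat^i(\p_{1:i}) = v - \utilP^{i+1}(\pihat; \p_{1:i})$. The key observation is that the threshold is an affine (order-reversing) function of the partial utility, so any upper bound on a conditional expectation of partial utilities translates immediately into a lower bound on the conditional expectation of thresholds.

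Concretely, I would proceed as follows. First, I would fix an arbitrary step $i\in[\wmax]$ and a realized history $\p_{1:i}=(p_1,\ldots,p_i)$, and substitute the threshold formula on both sides of the inequality to be proved. This turns the target inequality
\[
\Eu{p_{i+1} \sim \Pd_{i+1}|\p_{1:i}}{\thetahat^{i+1}(\p_{1:i}, p_{i+1})} \geq \thetahat^i(\p_{1:i})
\]
into
\[
\Eu{p_{i+1} \sim \Pd_{i+1}|\p_{1:i}}{v-\utilP^{i+2}(\pihat; \p_{1:i}, p_{i+1})} \geq v - \utilP^{i+1}(\pihat; \p_{1:i}),
\]
after which the value $v$ cancels and the claim reduces, upon multiplying by $-1$, to
\[
\Eu{p_{i+1} \sim \Pd_{i+1}|\p_{1:i}}{\utilP^{i+2}(\pihat; \p_{1:i}, p_{i+1})} \leq \utilP^{i+1}(\pihat; \p_{1:i}).
\]

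Second, I would invoke \cref{thm:buyer-monotone-in-time} with the step index shifted by one: applying it at step $i+1$ with history $\p_{1:i}$ yields precisely the inequality above, since the theorem states that the conditional expectation of the partial utility at the next step is no larger than the partial utility at the current step. This closes the argument. There is no real obstacle here; the whole content is already carried by \cref{thm:buyer-monotone-in-time}, and the role of this corollary is simply to repackage the monotonicity of partial utilities as monotonicity of best-response thresholds via the identity $\thetahat^i = v - \utilP^{i+1}$.
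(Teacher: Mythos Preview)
Your proposal is correct and matches the paper's own (implicit) approach: the paper states the corollary as an immediate consequence of \cref{thm:buyer-monotone-in-time} without writing out a proof, and the mechanism is exactly the one you describe, namely substituting $\thetahat^i = v - \utilP^{i+1}$ and applying the theorem with the step index shifted by one.
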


\subsubsection{Computing an optimal mixed selling strategy} \label{app:mixed-planning} 

In this section, we present an algorithm to find an optimal mixed selling strategy using prices from a given set $P$.

\paragraph{Overview of the algorithm.}
Our algorithm defines the optimal mixed strategy in a backward manner, starting step $\wmax$ going back to step $1$.
Given a price set $P$, for each step $i$, it computes for every possible price $p_i\in P$ and for any buyer vector $\bb^i \in \V^{\wmax}$, where $\bb^i$ will be define later, the optimal revenue $r^i(p_i, \bb^i)$ from step $i$ until the end, assuming the buyer types that reach step $i$ are according to the vector $\bb^i$, 
and requiring that the price at step $i$ is $p_i$. In addition, for each buyer of type $(v,w)$, it computes the buyer's utility $u^i_{v,w}(p_i, \bb^i)$ from step $i$ onwards given the selling strategy that we fixed from step $i$.

A \emph{buyer vector} $\bb = (b_1, \cdots, b_{\wmax})$ at step $i$ represents the set of buyer types $\set{(v,w) : v \leq b_w, w \geq i}$. 
Note that there are at most $|\V|^{\wmax}$ such vectors compared to $2^{|\V|\wmax}$ subsets of types. Due to the buyer's monotoncity, presented in \cref{crl:buyer-monotonicity}, it indeed suffices to encode the set of buyer types reaching a particular step with a vector of $\wmax$ values.

Given that the algorithm computed the results for step $i+1$, it computes the results for step $i$ as follows. For every price $p_i$,  buyer vector $\bb^i$, and  buyer vector $\bb^{i+1}$, we run a linear program that computes the probabilities on the prices at step $i+1$ which maximizes the expected revenue from step $i+1$ onward given that the price at step $i$ is $p_i$. We require two additional properties. The first is that buyers with patience $w>i$ and value at most $b^{i+1}_w$ will not buy at step $i$. The second is that for any buyer with patience $w \geq i$ and value $v\in(b^{i+1}_w,b^i_w]$ will buy at step $i$. The two conditions guarantee that if the buyer types according to $\bb^i$ reach step $i$ and observe price $p_i$, then the buyer types that continue to step $i+1$ are according to $\bb^{i+1}$. Once this is done, for each $p_i$ and $\bb^i$ we compute the maximum over expected revenue over $\bb^{i+1}$.

\paragraph{Outline of the algorithm}

In \cref{algo:mixed}, we compute the seller's revenues $r^i$ and the buyer's utilities $u^i_{v,w}$. The algorithm uses two subroutines \cref{algo:solvelp,algo:reconstruct}, which we will detail below, both of which has access to the variables of \cref{algo:mixed}, specifically, variables which are functions receiving particular price $p_i$ and buyer vector $\bb^i$.

First, we initialize all revenues $r^i$ and utilities $u^i_{v,w}$ to be zero. Next, we go over steps $i \in [\wmax]$ backwards, and compute $r^i(p_i, \bb^i)$ for each possible price $p_i \in P$ and buyer vector $\bb^i \in \V^{\wmax}$.

We iterate over each possible buyer vector $\bbf^{i+1} \in \V^{\wmax}$, satisfying $b_w^{i+1} \leq b^i_w$ for any $w \in [\wmax]$, and call Algorithm \ref{algo:solvelp} to solve a linear programming for finding the best distribution $\alphabf^{i+1}$ over the prices at step $i+1$. The objective function of this LP is the expectation over $\alphabf^{i+1}$ of the revenues from step $i+1$ onwards. \cref{cond:solvelp-1,cond:solvelp-2} ensure that $\alphabf^{i+1}$ is indeed a distribution. Due to the monotonicity of the best-response buying strategy, \cref{cond:solvelp-3} ensures that each buyer with patience $w > i$ and value at most $b^{i+1}_w$ would not prefer to buy at step $i$. \cref{cond:solvelp-4} ensures that each buyer with patience $w > i$ and value greater than $b^{i+1}_w$ would prefer to buy at step $i$. In fact, \cref{cond:solvelp-3,cond:solvelp-4} force us to consider only price distributions $\alphabf^{i+1}$ that guarantee that if the buyer vector at step $i$ is $\bb^i$ and the price is $p_i$, then the buyer vector at step $i+1$ is $\bb^{i+1}$. In Lines
(\ref{line:set-bbhat}) and (\ref{line:set-alphahat}), we choose the pair $(\alphahat^{i+1}, \bbhat^{i+1})$ that maximizes the excepted revenue from step $i+1$ onwards. Note that there is always such feasible pair.

Then, in Line (\ref{line:set-rev}), we set $r^i(p_i,\bb^i)$ to be the sum of three terms: (1) the excepted revenue from buyers whose patience is $i$ and can buy at step $i$; (2) the excepted revenue from buyers whose patience is greater than $i$ and prefer (and are able) to buy at step $i$ according to buyer vector $\bb^i$ and $\bbhat^{i+1}$; (3) the optimal future excepted revenue, $\hat{s}^{i+1}$.

In Line (\ref{line:update-utilities}), we compute $u^i_{v,w}(p_i,\bb^i)$ for each buyer type $(v,w)$. If he prefers to buy at step $i$, it is simply set to $v - p_i$. If he prefers to buy after step $i$, it is set to the expectation over $\alphahat^{i+1}$ of his utility from step $i+1$ onwards. 

Finally, in Line (\ref{line:reconstruct}), we call \cref{algo:reconstruct} to reconstruct the optimal mixed selling strategy $\Pdst$. The reconstruction works as follows. First, we find the buyer vector $\bbhat^1$ at step $1$ according to the distribution $\D$. We set the price $\hat{p_1}$ at step $1$ to be the price maximizing the revenue $r^1(p_1, \bb^1)$. Next, we iterate over any pure pricing $\p$ starting at price $\hat{p}_1$
and calculate its probability under $\Pdst$. To do so, we maintain three variables: $q$ is the accumulated probability, $\bb$ is the current buyer vector (initialized to $\bbhat^1$) and $\alphabf$ is the current price distribution. We go over steps $i$ from $1$ to $\wmax-1$, and use the optimal price distribution $\alphahat^{i+1}(p_i, \bb)$ and its corresponding buyer vector $\bbhat^{i+1}(p_i, \bb)$ to update the variables properly.

\begin{algorithm}
\caption{Optimal \textbf{mixed} selling strategy}\label{algo:mixed}
\textbf{Input:} Distribution $\D$ supported on a finite set $S_{\D} \subseteq \V \times \W \subseteq [0,1] \times [\wmax]$, and a finite set of prices $P$.
\vspace{2mm}
\\
\textbf{Declare:} for each step $i$ and buyer type $(v,w)$, let partial revenue and utility $r^i, u^i_{v,w}: P\times \V^{\wmax} \rightarrow [0,1]$, future revenues $s^{i+1}: \V^{\wmax} \rightarrow [0,1]$ and $\hat{s}^{i+1} \in [0,1]$, buyer vector $\bbhat^{i+1}: P\times \V^{\wmax} \rightarrow \V^{\wmax}$, and price distributions $\alphahat^{i+1}: P\times \V^{\wmax} \rightarrow \Delta(P)$ and $\alphabf: \V^{\wmax} \rightarrow \Delta(P)$.
\\
\textbf{Initialize:} for each step $i\in[\wmax+1]$, price $p_i \in P$ and buyer vector $\bb^i \in \V^{\wmax}$, set $r^i(p_i,\bb^i) \leftarrow 0$ and $u^i_{v,w}(p_i, \bb^i) \leftarrow 0$.
\begin{enumerate}[leftmargin=18pt,rightmargin=10pt,itemsep=1pt,topsep=3pt]
    \item For each step $i \leftarrow \wmax, \ldots, 1$, price $p_i\in P$, and buyer vector $\bb^i \in \V^{\wmax}$:
    \begin{enumerate}
        \vspace{1mm}
        \item[{\color{gray} \#}] {\color{gray} Compute the optimal distribution $\alphabf^{i+1}$ over prices for step $i+1$ and the optimal future revenue $s^{i+1}$ from step $i+1$ onwards for each buyer vector, pick the one that maximizes the revenue.}
        \item For each buyer vector $\bbf^{i+1} \in \V^{\wmax}$, satisfying $b_w^{i+1} \leq b^i_w$ for any $w \in [\wmax]$: \label{line:solve-lps}
        \\
         $\left(s^{i+1}(\bb^{i+1}),\; \alphabf^{i+1}(\bb^{i+1})\right) \leftarrow \text{SolveLP}(i, p_i, \bb^i, \bb^{i+1})$
        
        \item $\hat{s}^{i+1} \leftarrow \max_{\bb^{i+1}}\set{s^{i+1}(\bb^{i+1})}$ \label{line:set-shat}
        
        \item $\bbhat^{i+1}(p_i, \bb^i) \leftarrow \arg\max_{\bb^{i+1}}\set{s^{i+1}(\bb^{i+1})} $ \label{line:set-bbhat}
        
        \item $\alphahat^{i+1}(p_i, \bb^i) \leftarrow \alphabf^{i+1}(\bbhat^{i+1})$ \label{line:set-alphahat}
        \vspace{1mm}
        \item[{\color{gray} \#}] {\color{gray} Update the revenue for price $p_i$ and buyer vector $\bb^i$. The first term is the revenue from buyers with patience $i$, the second is the revenue from buyers with patience greater than $i$ that buy at step $i$, and the third term is the optimal future revenue.}
        \item $r^i(p_i,\bb^i) \leftarrow \Probb{(v,w) \sim \D}{w = i, p_i \leq v \leq b^i_w} \cdot p_i + \Probb{(v,w) \sim \D}{w \geq i+1, \hat{b}^{i+1}_w < v \leq b^i_w} \cdot p_i +  \hat{s}^{i+1}$ \label{line:set-rev}
        \vspace{1mm}
        \item[{\color{gray} \#}] {\color{gray} Update the utilities for price $p_i$ and buyer vector $\bb^i$.}
        \item For each buyer type $(v,w) \in S_{\D}$ such that $w\geq i$:
        \label{line:update-utilities}
        \\
        $u^i_{v,w}(p_i, \bb^i) \leftarrow \max \left\{ v-p_i, \sum_{p_{i+1}\in P} \hat{\alpha}^{i+1}_{p_{i+1}} \cdot u^{i+1}_{v, w} (p_{i+1}, \bbhat^{i+1}) \right\}$
    \end{enumerate}
    \vspace{2mm}
    \item $\Pdst \leftarrow \text{Reconstruct}(\;)$ \label{line:reconstruct}
\end{enumerate}
\textbf{Output:} Optimal mixed selling strategy, $\Pdst$.
\end{algorithm}

\begin{algorithm}
\caption{SolveLP $(i, p_i, \bb^i, \bb^{i+1})$}\label{algo:solvelp}
\textbf{Input:} step $i$, price $p_i$, buyer vector $\bb^i$, buyer vector $\bb^{i+1}$.
\begin{align}
  s^{\star}\leftarrow \max_{\alphabf \in \R^{|P|}} \quad
  &
  \sum_{p_{i+1}\in P} \alpha_{p_{i+1}} r_{i+1}(p_{i+1}, \bb^{i+1}) \nonumber
  \\
  \text{s.t.} \quad
  &
  \alpha_{p_{i+1}} \geq 0 \quad \forall p_{i+1} \in P \label{cond:solvelp-1}
  \\
  &
  \sum_{p_{i+1}\in P} \alpha_{p_{i+1}} = 1 \label{cond:solvelp-2} 
  \\
  &
  \sum_{p_{i+1}\in P} \alpha_{p_{i+1}} u^{i+1}_{v, w} (p_{i+1}, \bb^{i+1}) \geq v - p_i \quad \text{for any } w > i \text{ and } v = b^{i+1}_w \label{cond:solvelp-3}
  \\
  &
  \sum_{p_{i+1}\in P} \alpha_{p_{i+1}} u^{i+1}_{v, w} (p_{i+1}, \bb^{i+1}) \leq v - p_i \quad \text{for any } w > i \text{ and } v =\min\set{x>b^{i+1}_w : (x,w) \in S_{\D}} \label{cond:solvelp-4}
  &&
 \end{align}
\textbf{Output:} the pair $(s^{\star}, \alphabf^{\star})$, where $\alphabf^{\star}$ is the solution for the above linear programming. If there is no solution, set $s^{\star} \leftarrow 0$ and $\alphabf^{\star} \leftarrow \alphabf$, where $\alphabf$ is some arbitrary distribution.

\end{algorithm}

\begin{algorithm}
\caption{Reconstruct the optimal selling strategy}\label{algo:reconstruct}
\textbf{Declare:} let mixed selling strategy $\Pd \in \Delta(P^{\wmax})$, probability $q \in [0,1]$, buyer vector $\bb \in \V^{\wmax}$, and price distribution $\alphabf \in \Delta(P)$.
\\
\textbf{Initialize:} for each pure pricing $\p \in P^{\wmax}$,
set $\Pd[\p] \leftarrow 0$.
\\
\begin{enumerate}[leftmargin=18pt,rightmargin=10pt,itemsep=1pt,topsep=3pt]
    \item Set the buyer vector at step $1$ to be $\bbhat^1$ such that $\hat{b}^1_w = \max\set{v : (v,w)\in S_{\D}}$ for any $w$. 
    \label{rline:set-bb-one}
    
    \item $\hat{p}_1 = \argmax_{p_1} r^1(p_1, \bbhat^1)$. 
    \label{rline:set-p-one}
    
    \item For each pure pricing $\p \in \set{\hat{p_1}}\times P^{\wmax-1}$: 
    \label{rline:compute-probs}
    
    \begin{enumerate}
        \item $q \leftarrow 1$
        \item $\bb \leftarrow \bbhat^1$
        \item For each step $i \leftarrow 1, \ldots, \wmax-1$:
        \begin{enumerate}
            \item $\alphabf \leftarrow \alphahat^{i+1}(p_i, \bb)$
            \item $q \leftarrow q \cdot \alpha_{p_{i+1}}$
            \item $\bb \leftarrow \bbhat^{i+1}(p_i, \bb)$
        \end{enumerate}
        \item $\Pd[\p] \leftarrow q$
    \end{enumerate}
\end{enumerate}
\textbf{Output:} reconstructed mixed selling strategy $\Pd$.
\end{algorithm}

\thmmixedplanning* 

To prove the correctness of \cref{algo:mixed} we first show that  the total revenue of the resulting mixed strategy $\Pdst$ is $r^1(\hat{p}_1, \bbhat^1)$, computed by the algorithm. To that end, we prove by backward induction that for each step $i$, if the price is $p_i$ and the buyer vector is $\bb^i$, then the following holds: (1) the revenue from step $i$ onwards is $r^i(p_i, \bb^i)$, (2) the utility from step $i$ onwards for a buyer type $(v,w)$ is $u^i_{v,w}(p_i, \bb^i)$, (3) the buyer vector at step $i+1$ is $\bbhat^{i+1}(p_i, \bb^i)$. To complete the the correctness, we further show that total revenue of any other mixed selling strategy does not exceed that of $\Pdst$. We prove by backward induction that for any mixed selling strategy $\Ptilde$ with buyer vectors $\bbtilde^i(\pj{i-1})$, it holds for any history $\p_{1:i}=(p_1, \ldots, p_i)$ that $r^i(p_i, \bbtilde^i) \geq \rtilde^i(\pj{i}, \bbtilde^i)$, where $\bbtilde^i = \bbtilde^i(\pj{i-1})$, and $\rtilde^i(\pj{i}, \bb^i)$ is the revenue of $\Ptilde$ from step $i$ onwards, given the history $\pj{i}$ and the buyer vector $\bb^i$ at step $i$.

\begin{proof}[of \cref{thm:mixed-planning}]
We prove the correctness and running time of \cref{algo:mixed}.
\paragraph{Correctness.}

Assume the seller plays according to the algorithm's output selling strategy $\Pdst$, and the buyer plays according to his best-response threshold buying strategy, presented in $\cref{thm:threshold-equiv}$.

First, we show that the revenue of $\Pdst$ is $r^1(\hat{p}_1, \bbhat^1)$, computed by the algorithm. We prove by backward induction that for each step $i$, if the price is $p_i$ and the buyer vector is $\bb^i$, then the following holds: (1) the revenue from step $i$ onwards is $r^i(p_i, \bb^i)$, (2) the utility from step $i$ onwards for a buyer type $(v,w)$ is $u^i_{v,w}(p_i, \bb^i)$, (3) the buyer vector at step $i+1$ is $\bbhat^{i+1}(p_i, \bb^i)$. For $i=\wmax+1$, the only possible buyer vector is $\bb^{\wmax+1} = (0, \ldots, 0)$, and indeed by initialization, we have for any price $p \in P$ that $r_{\wmax+1}[p, (0,\ldots,0)] =u^{\wmax+1}_{v,w}[p, (0,\ldots,0)]= 0$. Assume the inductive hypothesis holds for any step $j > i$, and prove for $i$. Let $p_i$ and $\bb^i$ be the price and buyer vector at step $i$, respectively. Let $\bb^{i+1}$ be the buyer vector appeared after $(p_i, \bb^i)$ according to the strategy $\Pdst$. For simplicity, we denote $\alphahat^{i+1} = \alphahat^{i+1}(p_i, \bb^i), \bbhat^{i+1} = \bbhat^{i+1}(p_i, \bb^i)$. By the induction assumption (2), the excepted utility of a buyer of type $(v,w)$ from step $i+1$ onwards is
\[
\sum_{p_{i+1}\in P} \alphahat^{i+1}_{p_{i+1}} u^{i+1}_{v, w} (p_{i+1}, \bbhat^{i+1})
\]
Hence, we will see that $\bb^{i+1} = \bbhat^{i+1}$. Indeed, since according to the algorithm, the pair $(\alphahat^{i+1}, \bbhat^{i+1})$ is a feasible solution to the linear program (\cref{algo:solvelp}), so it holds \crefrange{cond:solvelp-1}{cond:solvelp-4}. In particular, by \cref{cond:solvelp-3}, a buyer with patience $w > i$ and value $v = \hat{b}^{i+1}_w$ would prefer not to buy at step $i$. Due to the monotonicity
of the buying strategy (\cref{thm:buyer-monotone-in-types}), any buyer with patience $w > i$ and value $v < \hat{b}^{i+1}_w$ would also prefer not to buy at step $i$. By \cref{cond:solvelp-4}, a buyer with patience $w > i$ and value $v =\min\{x\in P: x>\hat{b}^{i+1}_w\}$ would prefer to buy at step $i$. By monotonicity, any buyer with patience $w > i$ and value $v > \hat{b}^{i+1}_w$ would also prefer to buy at step $i$. Thus we get that $\bb^{i+1} = \bbhat^{i+1}$. Therefore, the excepted utility of a buyer of type $(v,w)$ from step $i$ onwards is
\[
u^i_{v,w}(p_i, \bb^i) = \max\set{v-p_i, \sum_{p_{i+1}\in P} \alphahat^{i+1}_{p_{i+1}} u^{i+1}_{v, w} (p_{i+1}, \bbhat^{i+1})}.
\] 
By the induction assumption (3), the excepted revenue from step $i+1$ onwards is
\[
\hat{s}^{i+1}(p_i, \bb^i) = \sum_{p_{i+1}\in P} \alphahat^{i+1}_{p_{i+1}} r^{i+1} (p_{i+1}, \bbhat^{i+1}),
\]
so the revenue from step $i$ onwards is
\[
r^i(p_i,\bb^i) = \Probb{(v,w) \sim \D}{w = i, p \leq v \leq b^i_w} \cdot p_i + \Probb{(v,w) \sim \D}{w \geq i+1, \hat{b}^{i+1}_w < v \leq b^i_w} \cdot p_i +  \hat{s}^{i+1}(p_i, \bb^i),
\]
as required.

Now, let $\Ptilde$ be some mixed selling strategy. Denote by $\bbtilde^i(\pj{i-1})$ the buyer vector at step $i$ after history $\pj{i-1}=(p_1, \ldots, p_{i-1})$, relative to strategy $\Ptilde$. Also, denote by $\rtilde^i(\pj{i}, \bb^i)$ the revenue of $\Ptilde$ from step $i$ onwards after history $\pj{i}=(p_1, \ldots, p_i)$ when the buyer vector at step $i$ is $\bb^i$. We show that for any step $i \in [\wmax]$ and history $\pj{i}$, it holds that $r^i(p_i, \bbtilde^i) \geq \rtilde^i(\pj{i}, \bbtilde^i)$, where $\bbtilde^i = \bbtilde^i(\pj{i-1})$. We prove it by backward induction on step $i$. For $i=\wmax+1$, by initialization, $r^{\wmax+1}(p_{\wmax+1}, (0, \ldots, 0)) = 0 = \rtilde^{\wmax+1}(\p, (0, \ldots, 0)))$ for any $\p \in P^{\wmax+1}$. Assume the inductive hypothesis holds for any step $j > i$, and prove for $i$. Let $\pj{i}=(p_1, \ldots, p_i)$ be a history until step $i$. Fix buyer vectors $\bbtilde^i = \bbtilde^i(\pj{i-1})$ and $\bbtilde^{i+1} = \bbtilde^i(\pj{i})$. The algorithm iterates over any buyer vector $\bb^{i+1}$ and picks the buyer vector $\bbhat^{i+1}$ that maximizes the future revenue from step $i+1$ onwards, $s^{i+1}(\bb^{i+1}) = \max_{\alphabf} \Eb{p_{i+1} \sim \alphabf}{r^{i+1}(p_{i+1}, \bb^{i+1})}$, where $\alphabf$ is subject to \crefrange{cond:solvelp-1}{cond:solvelp-4}. Let $\alphatilde^{i+1}$ be the mixture over prices at step $i+1$ according to $\Ptilde$ after history $\pj{i}$. The distribution $\alphatilde^{i+1}$ holds \cref{cond:solvelp-1,cond:solvelp-2}. In addition, since the buyer plays a best-response strategy, $\alphatilde^{i+1}$ and $\bbtilde^{i+1}$ hold \cref{cond:solvelp-3,cond:solvelp-4}. Hence,
\[
s^{i+1}(\bbhat^{i+1}) = \max_{\alphabf,\bb^{i+1}} \Eu{p_{i+1} \sim \alphabf}{r^{i+1}(p_{i+1}, \bb^{i+1})} \geq \Eu{p_{i+1} \sim \alphatilde}{r^{i+1}(p_{i+1}, \bbtilde^{i+1})} \geq \Eu{p_{i+1} \sim \alphatilde}{\rtilde^{i+1}(\pj{i+1}, \bbtilde^{i+1})},
\]
where the last inequality follows from the induction assumption. So we get that the future revenue from step $i+1$ onwards computed by the algorithm is greater than or equal to that under selling strategy $\Ptilde$. Finally, the revenue of both strategies obtained from buyers that buy at step $i$ is the same, since it is computed relative to the same price $p_i$ and buyer vector $\bbtilde^i$, so we get that $r^i(p_i, \bbtilde^i) \geq \rtilde^i(\pj{i}, \bbtilde^i)$, as required.

\paragraph{Running time.}
The initialization part takes $\O(|\V|^{\wmax} |P| \wmax)$. The SolveLP function (\cref{algo:solvelp}) involves solving linear programming model with $|P|$ variables and thus its running time is polynomial in $|P|$ (using the Ellipsoid method, for instance). So each iteration in (\ref{line:solve-lps}) takes $\poly(|P|)$, and in total (\ref{line:solve-lps}) takes $\O(\poly(|P|) |\V|^{\wmax})$. Lines (\ref{line:set-shat}) to (\ref{line:set-alphahat}) take $\O(|\V|^{\wmax})$. The revenue update in Line (\ref{line:set-rev}) takes $\O(1)$. The update of utilities in Line (\ref{line:update-utilities}) takes $\O(|\V||P| \wmax)$. So each outer iteration takes $\O(\poly(|P|)|V|^{\wmax}\wmax)$, and the whole loop takes $\O(\poly(|P|)|V|^{2\wmax}\wmax^2)$. 

For the reconstruction part which appears in \cref{algo:reconstruct}, Line (\ref{rline:set-bb-one}) takes $\O(|V|\wmax)$, Line (\ref{rline:set-p-one}) takes $\O(|P|)$, and the loop in Line (\ref{rline:compute-probs}) takes $\O(|P|^{\wmax}\wmax)$.
Hence, the reconstruction takes $\O(|V|\wmax + |P|^{\wmax}\wmax)$.

In total, \cref{algo:mixed} runs in $\O(\poly(|P|^{\wmax})|V|^{2\wmax}\wmax^2)$.
\end{proof}

\subsection*{Handling continuous values}

We would also like for mixed strategies to address the issue of continuous support. Assume the set of buyer's values $\V$ is $[0,1]$. We take again the possible prices to be the discretized set $P_\eps=\{0,\eps,2\eps, \ldots, 1\}$, and guarantee an $\eps\wmax$-optimal selling strategy for $\eps>0$.

The following lemma shows that the discretization gives a good approximation even in the mixed case.

\begin{lemma}\label{lem:eps-w-optimal-mixed}
Assume the buyer's value set is $[0,1]$.
Let $\hat{\Pd}$ be an optimal mixed selling strategy with respect to the mixed strategies over a discretization $P_\eps = \{0,\eps,2\eps, \ldots, 1\}$ of $[0,1]$, for any $\eps > 0$. Then, $\hat{\Pd}$ is an $\eps\wmax$-optimal with respect to the optimal mixed strategies over $[0,1]$.
\end{lemma}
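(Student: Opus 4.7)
The plan is to construct an explicit rounded strategy over $P_\eps^\wmax$ whose revenue deviates from $\rev(\Pd^{**};\D)$ by at most $\eps\wmax$, where $\Pd^{**}$ denotes any optimal mixed strategy over $[0,1]^\wmax$. Concretely, I would define the coordinate-wise floor map $f:[0,1]^\wmax\to P_\eps^\wmax$ by $f(\p)_i=\lfloor p_i/\eps\rfloor\eps$, and let $\tilde\Pd$ be the pushforward of $\Pd^{**}$ under $f$, coupled by $\tilde\p=f(\p)$ for $\p\sim\Pd^{**}$, so that $0\leq p_i-\tilde p_i<\eps$ at every coordinate. Since $\hat\Pd$ is optimal among mixed strategies supported on $P_\eps^\wmax$, we have $\rev(\hat\Pd;\D)\geq\rev(\tilde\Pd;\D)$, and it suffices to prove $\rev(\Pd^{**};\D)-\rev(\tilde\Pd;\D)\leq\eps\wmax$.

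I would apply \cref{lem:sufficient-conds-for-greater-rev} with $\Pd=\Pd^{**}$, $\Pd'=\tilde\Pd$, and constants $c_i=\eps(\wmax-i+1)$, so that $c_1=\eps\wmax$ yields the target bound. For condition~(2) of that lemma, namely the partial-utility bound $u^i_{v,w,\tilde\Pd}\leq u^i_{v,w,\Pd^{**}}+c_i$, I would use a backward induction on $i$ combined with the threshold characterization \cref{thm:threshold-equiv}. The base case $i=\wmax+1$ is trivial (both partial utilities vanish). For the inductive step, write each partial utility as an expectation of $\max\{v-p_i,u^{i+1}\}$; since $\tilde p_i\in[p_i-\eps,p_i]$ and the inductive hypothesis bounds the continuation-utility gap by $\eps(\wmax-i)$, applying the elementary inequality $\max\{A',B'\}-\max\{A,B\}\leq\max\{A'-A,B'-B\}$ inside the expectation propagates a gap of at most $\eps(\wmax-i+1)$ at step $i$.

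Condition~(1) of \cref{lem:sufficient-conds-for-greater-rev} — whenever the best response to $\Pd^{**}$ buys at step $i$, the best response to $\tilde\Pd$ also buys at step $i$ under the rounded history — I would handle by backward induction jointly with condition~(2), exploiting that lowering the current price shifts the ``buy vs.\ wait'' comparison in favor of buying, and using threshold monotonicity \cref{thm:buyer-monotone-in-types} to keep the buyer's decisions aligned under the coupling between $\p$ and $f(\p)$.

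The main obstacle will be condition~(1). Rounding simultaneously lowers the step-$i$ price (favoring buying) and raises the continuation value (favoring waiting), and these opposing effects could in principle flip the buyer's decision away from a purchase that occurred under $\Pd^{**}$. I would resolve this by coupling the buyer's entire decision process across the two strategies along a hybrid sequence that rounds one coordinate at a time, and arguing that any sale ``lost'' at step $i$ under $\tilde\Pd$ is recovered at a later step at a price no more than $\eps$ below $p_i$ in expectation. The resulting $\eps$-per-step accumulated loss is exactly the $\eps\wmax$ budget absorbed by the constants $c_i$, delivering the claimed $\eps\wmax$-optimality.
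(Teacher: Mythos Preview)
Your approach has a real gap at exactly the point you flag as ``the main obstacle.'' With the plain coordinate-wise floor $f(\p)_i=\lfloor p_i/\eps\rfloor\eps$, condition~(1) of \cref{lem:sufficient-conds-for-greater-rev} can genuinely fail: the current price $p_i$ may already lie on the grid and not move at all, while flooring the \emph{future} coordinates raises the continuation utility by up to $\eps$, so a buyer who purchased at step $i$ under $\Pd^{**}$ may strictly prefer to wait under $\tilde\Pd$. Your proposed rescue---a one-coordinate-at-a-time hybrid together with the claim that any ``lost'' sale is recovered later at a price within $\eps$ of $p_i$---is not justified: once the buyer defers, the step at which he eventually buys carries a price with no a~priori relation to $p_i$, and nothing you have written bounds the resulting revenue drop. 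Invoking \cref{thm:buyer-monotone-in-types} does not help here, since that lemma compares different buyer \emph{types} under a fixed selling strategy, not the same buyer under two different selling strategies.

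The paper closes this gap not by a hybrid argument but by changing the rounding map. It sets $[f(\p)]_i=\eps\lfloor (p_i-(\wmax-i)\eps)/\eps\rfloor$, so that the step-$i$ price drops by at least $(\wmax-i)\eps$ and at most $(\wmax-i+1)\eps$. The same backward induction you outline then shows the continuation utility from step $i+1$ rises by at most $(\wmax-i)\eps$; since the immediate-purchase utility $v-p_i$ rises by \emph{at least} that amount, any buyer who bought at step $i$ before still buys at step $i$ after. Thus condition~(1) holds exactly, and condition~(2) with $c_i=\eps(\wmax-i+1)$ feeds directly into \cref{lem:sufficient-conds-for-greater-rev} to give the $\eps\wmax$ bound. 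The step-dependent shift is the missing idea; your uniform floor does not supply it.
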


In the proof of \cref{lem:eps-w-optimal-mixed}, we modify an optimal mixed selling strategy ensuring that it offers prices only from $P_\eps$. We first decrease each price $p_i$, on which the strategy mixes at step $i$, by $(\wmax-i)\eps$, and then decrease the result down to the nearest price $i\eps\in P_\eps$. This modification guarantees that at any step $i$, the buyer would certainly get the maximum discount he may receive in the future, and even more. From this it follows that if the buyer preferred to buy at step $i$ before the change, he still prefers to buy at step $i$ or before. Moreover, under this change, the buyer’s utility may increase by at most $\eps\wmax$. Hence we obtain that the revenue of the resulting mixed strategy is lower than that of the original by at most $\eps\wmax$. The main issue that our mapping takes care is that a buyer that bought at step $i$ will not prefer to buy at a later step $j>i$. This is important since otherwise the revenue might decrease and we cannot bound this potential decrease.

\begin{proof}[of \cref{lem:eps-w-optimal-mixed}]
By \cref{lem:mixed-over-value-range}, there exists an optimal mixed selling strategy $\Pdst$ using prices only from $[0,1]$. We modify  $\Pdst$ to ensure that it offers prices only from $P_\eps$. Define the following transformation $f: [0,1]^{\wmax} \rightarrow [0,1]^{\wmax}$ over the set of pure strategies:
\[
[f(\p)]_i = \eps \cdot \left\lfloor\frac{p_i - (\wmax-i)\eps}{\eps}\right\rfloor
\]
Denote by $\Pd'$ the mixed strategy obtained from invoking $f$ on $\p \sim \Pdst$, i.e., for any $\z=f(\p)$ let $Z=\{\p:f(\p)=\z\}$ and we set $\Pd'(\z)=\Pdst(Z)$. To show that $r(\Pdst) - r(\Pd') \leq \eps\wmax$, by \cref{lem:sufficient-conds-for-greater-rev}, it suffices to prove that for any buyer type $(v,w)$ after any history $\p_{1:i}$ of length $i\in [\wmax]$, both of the following conditions are met: (1) if the buyer preferred to buy at step $i$, he still buys at this step; (2) his utility from step $i$ onwards may increase by at most $\eps(\wmax-i+1)$. We prove it by backward induction on the history length $i$.
For $i=\wmax$, since this is the final step, the buyer’s decision is based only on his value. According to the transformation $f$, the realized price $p_{\wmax}$ may only decrease and by at most $\eps$. Hence, if the buyer could buy at step $\wmax$, he can still buy at this step, and also his utility at step $\wmax$ may increase by at most $\eps = \eps(\wmax-\wmax+1)$. Assume the inductive hypothesis holds for any $j > i$ and prove for step $i$. Let $(v,w)$ be a buyer type with patience $w \geq i$. Note that the transformation $f$ insures that the realized price $p_i$ at 
step $i$ decreases by at least $\eps(\wmax-i)$ and by at most $\eps(\wmax-i+1)$, i.e., $\eps(\wmax-i) \leq p_i - p'_i \leq \eps(\wmax-i+i)$. Hence, the buyer's utility at step $i$ for making an immediate purchase would increase  by at least $\eps(\wmax-i)$ and by at most $\eps(\wmax-i+1)$. By induction assumption, the utility from step $i+1$ onwards may increase by at at most $\eps(\wmax-(i+1)+1) = \eps(\wmax-i)$. Thus, if the buyer preferred to buy at step $i$ before the change, he would still buy at this step as well. Moreover, we get that  his utility from step $i$ onwards may increase by at most $\eps(\wmax-i+1)$, as required. This completes the proof of the inductive hypothesis.

Therefore, by \cref{lem:sufficient-conds-for-greater-rev}, the excepted loss of this modification is at most $\eps\wmax$, that is, $\rev(\Pdst) - \rev(\Pd') \leq \eps\wmax$. Recall that $\Phat$ is the optimal mixed strategy with respect to mixed strategies over $P_\eps$. Clearly, $\rev(\Pdst)\geq\rev(\Phat)$ and $\rev(\Phat)\geq\rev(\Pd')$. Therefore,
\[
\rev(\Pdst) - \rev(\Phat) \leq \rev(\Pdst) - \rev(\Pd') \leq \eps\wmax.
\]

\end{proof}

Combining \cref{lem:eps-w-optimal-mixed,thm:mixed-planning}, we conclude,

\begin{lemma}\label{thm:eps-w-optimal-mixed-planning}
There exists an algorithm that for any $\eps>0$ and a distribution $\D$ over $[0,1] \times [\wmax]$, returns an $\eps\wmax$-optimal mixed strategy, and runs in time $\O(\poly(1/\eps^{\wmax})\wmax^2)$.
\end{lemma}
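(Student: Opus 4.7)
My plan is to combine \cref{lem:eps-w-optimal-mixed} and \cref{thm:mixed-planning} in a direct way. Given $\eps>0$, I would set the price grid $P_\eps=\{0,\eps,2\eps,\ldots,1\}$, so $|P_\eps|=O(1/\eps)$. By \cref{lem:eps-w-optimal-mixed}, an optimal mixed strategy restricted to prices in $P_\eps$ already achieves revenue within $\eps\wmax$ of the unrestricted optimum over $[0,1]$, so it suffices to produce such a strategy via \cref{thm:mixed-planning} with $P=P_\eps$.

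Since \cref{thm:mixed-planning} also demands a finite value set $\V$, I would reduce the continuous case by discretizing the value support: replace $\D$ by $\D_\eps$ obtained by rounding each value $v$ (say, downward) to the nearest grid point of $P_\eps$, yielding a finite support of size $|\V_\eps|=O(1/\eps)$. Running \cref{thm:mixed-planning} on $(\D_\eps,P_\eps)$ outputs a mixed strategy $\hat{\Pd}$ supported on $P_\eps^{\wmax}$ in time $\O(\poly(|P_\eps|^{\wmax})\,|\V_\eps|^{2\wmax}\wmax^2)=\O(\poly(1/\eps^{\wmax})\wmax^2)$, matching the claimed running time.

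To upgrade $\eps\wmax$-optimality against $\D_\eps$ into $\eps\wmax$-optimality against the original $\D$, I would add one stability step: show that for any mixed strategy supported on $P_\eps$, swapping $\D$ for $\D_\eps$ perturbs the revenue by at most $O(\eps)$ per buyer. The argument parallels the proof of \cref{lem:eps-w-optimal-mixed}: for each buyer type, compare the best-response under $\D$ with the best response under $\D_\eps$; since the per-step utility $v-p$ and the seller's per-step revenue are $1$-Lipschitz in $v$ and $|v-v'|<\eps$, the partial utilities at every step drift by at most $\eps$, so \cref{lem:sufficient-conds-for-greater-rev} applied to a natural coupling of buying strategies bounds the per-buyer revenue loss by $O(\eps)$. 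Combining this with the $\eps\wmax$ loss from \cref{lem:eps-w-optimal-mixed} and rescaling $\eps$ by a constant factor yields the stated $\eps\wmax$-optimality.

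The main obstacle is the value-discretization stability step. A naive pointwise Lipschitz bound gives $O(\eps)$ per buyer, but the threshold best-response in \cref{thm:optimal-threshold-buying} may flip its purchase step as $v$ moves by $\eps$, and one must make sure that such a flip does not degrade the expected revenue by more than $O(\eps)$. The cleanest route is to leverage the monotonicity of best-response thresholds in value and in time (\cref{thm:buyer-monotone-in-types,thm:buyer-monotone-in-time}) to control the effect of these switches: since thresholds are monotone in $v$, a downward shift of $\eps$ in the value can only weakly delay the purchase, and the expected revenue change at any flipped step is bounded by the $\eps$-variation of the prices in $P_\eps$. Once this quantitative stability is established, the remainder of the proof is a direct invocation of \cref{thm:mixed-planning}.
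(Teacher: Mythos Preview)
Your proposal follows essentially the same route as the paper: discretize to the grid $P_\eps=\V_\eps$ of size $O(1/\eps)$, run \cref{thm:mixed-planning} with that price set, and invoke \cref{lem:eps-w-optimal-mixed} for the $\eps\wmax$ approximation guarantee; the running time then reads off directly as $\O(\poly(1/\eps^{\wmax})\wmax^2)$. The paper's own proof is two sentences long and does exactly this, simply citing \cref{thm:mixed-planning} and \cref{lem:eps-w-optimal-mixed} after stating that the value set is discretized.

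The one substantive difference is that you explicitly flag and attempt to handle the value-discretization stability (passing from $\D$ to $\D_\eps$), whereas the paper's proof does not address it at all: it just asserts that the theorem ``follows'' once the value set is discretized. So your stability discussion via \cref{lem:sufficient-conds-for-greater-rev} and the monotonicity results is additional work beyond what the paper supplies, not a departure from its argument. In that sense your proposal is at least as complete as the paper's; the obstacle you single out is real, but the paper treats it as implicit rather than proving it.
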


\begin{proof}
Let $\eps > 0$. We discretize the buyer's value set $\V$ to obtain a finite set of values $\V_\eps$ with $\O(1/\eps)$ values and run \cref{algo:mixed} with the price set $P = \V_\eps$, so the theorem follows from \cref{thm:mixed-planning,lem:eps-w-optimal-mixed}.
\end{proof}

\section{Proofs for Section \ref{sec:learning}}\label{app:sec_learning}

\begin{subsection}{Proofs of \cref{subsec:background-leaning}}\label{app:background-learning}
The following two claims follow from the definition of fat-shattering dimension. We make use of it in this Section.
\begin{lemma}\label{lem:r-non-neg}
Let $\F$ be a  class of non-negative real-valued functions from input space $\Zc$ and $\gamma > 0$. If $\cb \in \R^m$ witness that $S = \{z_1, \ldots, z_m\} \subseteq \Zc$ is $\gamma$-shattered by $\F$, then it must hold that $\cb$ is also non-negative.
\end{lemma}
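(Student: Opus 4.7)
The plan is to derive the conclusion by a direct one-line application of the shattering definition, specialized to a sign pattern that places coordinate $i$ on the ``low'' side. So the argument is essentially immediate; the main thing is to verify that the obvious choice works and nothing subtle is being missed.

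Concretely, I would fix an arbitrary index $i \in [m]$ and consider the sign vector $\sigmab \in \{-1,+1\}^m$ whose $i$-th coordinate is $\sigma_i = -1$ (the other coordinates can be set arbitrarily). By the fat-shattering hypothesis, there exists a witness function $f_{\sigmab} \in \F$ satisfying $f_{\sigmab}(z_i) \leq c_i - \gamma$. Since every function in $\F$ is non-negative, I also have $f_{\sigmab}(z_i) \geq 0$. Chaining these two inequalities gives $0 \leq c_i - \gamma$, i.e., $c_i \geq \gamma$. As $\gamma > 0$, in particular $c_i \geq 0$, and since $i$ was arbitrary the entire witness vector $\cb$ is non-negative.

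There is no real obstacle here: the only thing to be careful about is ensuring that the existential quantifier in the definition of fat-shattering is being applied in the right direction (we need to \emph{choose} the sign pattern that exposes a lower bound on $c_i$, not the other way around). Note that the argument in fact yields the slightly stronger statement $c_i \geq \gamma$, which is more than the lemma requires; I would state the lemma as given and remark in passing that this strengthening is available should it be needed elsewhere in the proof of Theorem~\ref{uc-theorem} or the fat-shattering computation in Lemma~\ref{lem:fat-gain}.
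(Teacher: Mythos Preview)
Your proof is correct and uses essentially the same idea as the paper: pick a sign pattern with $\sigma_i = -1$ and combine the shattering inequality $f_{\sigmab}(z_i) \leq c_i - \gamma$ with the non-negativity of $f_{\sigmab}$. The only cosmetic difference is that the paper phrases it as a proof by contradiction (assuming some $c_j < 0$), whereas your direct argument immediately yields the slightly sharper bound $c_i \geq \gamma$.
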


\begin{proof}
If there is $j \in [m]$ with $c_j < 0$, then take $\sigmab \in \{-1, 1\}^m$ such that $\sigma_j = -1$, so we have for all $f \in \F$:
\[
f(z_j) \geq 0 > c_j \geq c_j - \gamma \Rightarrow f(z_j) > c_j - \gamma
\]
which contradicts that $\cb$ witness the $\gamma$-shattering.
\end{proof}

\begin{lemma} \label{lem:zero-fat}
Let $\F$ be a  class of functions $f: \Zc \rightarrow [a,b]$ and $\gamma > 0$. If $\gamma > \frac{b-a}{2}$, then $\fatgamma{\F} = 0$.
\end{lemma}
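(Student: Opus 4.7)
The plan is to show that no set of size one (and hence no set of any positive size) can be $\gamma$-shattered under the hypothesis $\gamma > (b-a)/2$. Suppose for contradiction that some singleton $\{z\} \subseteq \Zc$ is $\gamma$-shattered, witnessed by some $c \in \R$. By the shattering definition applied to the two sign patterns $\sigma = +1$ and $\sigma = -1$, there exist $f_+, f_- \in \F$ with $f_+(z) \geq c + \gamma$ and $f_-(z) \leq c - \gamma$.

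Subtracting these two inequalities yields $f_+(z) - f_-(z) \geq 2\gamma$. On the other hand, since both $f_+$ and $f_-$ take values in $[a,b]$, we have $f_+(z) - f_-(z) \leq b - a$. Combining these gives $2\gamma \leq b - a$, i.e., $\gamma \leq (b-a)/2$, contradicting the assumption $\gamma > (b-a)/2$. Hence no singleton is $\gamma$-shattered, so $\fatgamma{\F} = 0$.

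I expect no real obstacle here; the proof is a one-line pigeonhole on the range of $\F$. The only thing to be careful about is just making sure the shattering definition is applied correctly for a set of size one (both sign assignments must be realized by functions in $\F$), which immediately yields the range contradiction.
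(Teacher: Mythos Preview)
Your proof is correct and follows essentially the same approach as the paper: assume a singleton is $\gamma$-shattered, obtain $f_+,f_-$ from the two sign patterns, and derive a contradiction from the range constraint $[a,b]$. The only cosmetic difference is that the paper bounds the witness $c$ from both sides to get $c>(a+b)/2$ and $c<(a+b)/2$, whereas you subtract the two inequalities directly to get $2\gamma\le b-a$; both arguments are equivalent.
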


\begin{proof}
Let $S = \left\{z_1\right\} \subseteq \Zc$ and $\gamma > \frac{b-a}{2}$. Assume by contradiction that $S$ is $\gamma$-shattered by $\F$ with some witness $c_1 \in \R$. Hence, there are two functions $f_+, f_- \in \F$ such that
\[
f_+(z_1) \geq c_1 + \gamma, \quad f_-(z_1) \leq c_1 - \gamma
\]
Since $f_+(z_1), f_-(z_1) \in [a,b]$ and $\gamma > \frac{b-a}{2}$, we have:
\begin{align*}
    c_1 &\geq f_-(z_1) + \gamma > a + \frac{b-a}{2} = \frac{a+b}{2}\\
    c_1 &\leq f_+(z_1) - \gamma < b - \frac{b-a}{2} = \frac{a+b}{2}
\end{align*}
so we get a contradiction.
\end{proof}

The following is a well known uniform convergence theorem for classes with finite fat-shattering for all $\gamma>0$.

\uctheorem*

\begin{proof}
The proof follows from some classic results and nicely summarized in \citet[Section 8]{vershynin2018high}.
First,
\[
\sup_{f \in \F} \abs{\frac{1}{m} \sum_{i=1}^m f(z_i) - \Eu{z \sim \D}{f(z)}} \lesssim 
\cR_{m}(\F) + \sqrt{\frac{\log \frac{1}{\delta}}{m}},
\]
where $\cR_{m}(\F)$ is the Rademacher Complexity of the class $\F$. See, e,g, \citet[Section 3]{mohri2018foundations}.

By chaining and Dudley’s Entropy Integral \citep{dudley1967sizes}, we have
\[
\cR_{m}(\F) 
\lesssim
\int_{0}^{\infty} \sqrt{\log\mathcal{N}(\gamma,\F,\mathcal{L}_2(\D_m))} \,d\gamma,
\]
where $\mathcal{N}(\gamma,\F,\mathcal{L}_2(\D_m))$ is the covering number of $\F$ at scale $\gamma$ with respect to $\mathcal{L}_2$ metric.
 \citet{mendelson2003entropy} proved that $\mathcal{N}(\cdot)\lesssim \paren{\frac{1}{\gamma}}^{\fat_{\gamma}(\F)}$, this will incur a superfluous $\log$ factor. \citet{rudelson2006combinatorics} proved
\[
\int_{0}^{\infty} \sqrt{\log\mathcal{N}(\gamma,\F,\mathcal{L}_2(\D_m))}
\lesssim
\int_{0}^{\infty} \sqrt{\fatgamma{\F}} \,d\gamma,
\]
and this ends up the proof.

\citet{block2021majorizing} proved a similar result for a sequential version of Rademacher Complexity. 
\end{proof}

\end{subsection}

\subsection{Proofs for \cref{subsec:sample-complexity-pure}}\label{app:sample-complexity-pure}

\subsection*{Sample complexity: upper bounds}

Denote the class of pure selling strategies with non-increasing prices by
\[
\npureset = \set{\p=(p_1,\ldots,p_{\wmax}) \in [0,1]^{\wmax}:p_1 \geq \cdots \geq p_{\wmax}}.
\]
Recall the class of revenue functions associated to $\npureset$,
\[
\Rpn = \set{z \mapsto r(\p,z):\; \p \in \npureset},
\]
where for non-increasing pricing $\p$ we have $r(\p, (v,w)) = \ind{v \geq p_w} \cdot p_w$.

\lemfatgain*

We prove a more refined bound.
\begin{lemma}
For every $\gamma>0$,
\[
\fatgamma{\Rpn} \leq
\begin{cases}
    2\wmax, & \gamma \in (0, \frac{1}{4}) \\
    \wmax, & \gamma \in [\frac{1}{4}, \frac{1}{2}] \\
    0, & \gamma > \frac{1}{2}.
\end{cases}
\]
For every $0 < \alpha < \frac{1}{\wmax-1}$ and $0 < \gamma < \frac{1}{2} - \frac{\wmax-1}{2} \alpha$,
\[
\fatgamma{\Rpn} \geq \wmax,
\]
we conclude that
the pseudo-dimension of the function class $\Rpn$ is $\Theta(\wmax)$. 
\end{lemma}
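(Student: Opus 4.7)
The plan is to \emph{project} $\Rpn$ onto each patience level and analyze the resulting single-parameter class. For $w\in[\wmax]$ let $\F_w=\Lrset{v\mapsto p\cdot\ind{v\geq p}:p\in[0,1]}$. Since $r(\p,(v,w))=f_{p_w}(v)$ depends on $\p$ only through $p_w$, any set $S$ $\gamma$-shattered by $\Rpn$ partitions by patience as $S=\bigsqcup_w S_w$ with each $S_w$ $\gamma$-shattered by $\F_w$: given a sign pattern on $S_w$, extend it arbitrarily to $S$, invoke the shattering non-increasing pricing, and restrict to coordinates in $S_w$. Therefore $\fatgamma{\Rpn}\le\wmax\cdot\max_w\fatgamma{\F_w}$, reducing the upper bounds to a per-patience computation.

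For $\gamma>1/2$, Lemma~\ref{lem:zero-fat} gives $\fatgamma{\F_w}=0$ since all functions lie in $[0,1]$. For finer scales the key structural observation is that whenever $f_p(v_1)>0$ we must have $p\le v_1$, so $f_p(v_i)=p$ for every $v_i>v_1$. To rule out $2$-shattering of $v_1<v_2$ with witness $(c_1,c_2)$: pattern $(+1,-1)$ forces $c_1+\gamma\le p\le c_2-\gamma$, yielding $c_2\ge c_1+2\gamma$; pattern $(-1,+1)$ must then take the subcase $p>v_1$ (the alternative $p\le v_1$ forces $c_1\ge c_2+2\gamma$, contradicting the first), so $f_p(v_1)=0$ and $c_1\ge\gamma$; pattern $(+1,+1)$ gives $v_1\ge c_2+\gamma\ge c_1+3\gamma\ge 4\gamma$, incompatible with $v_1<v_2\le 1$ unless $\gamma<1/4$. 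Hence $\fatgamma{\F_w}\le 1$ on $[1/4,1/2]$. For $\gamma\in(0,1/4)$, applying the same propagation to $(+1,+1,-1)$ yields $c_3\ge c_2+2\gamma$, while $(+1,-1,+1)$ yields $c_2\ge c_3+2\gamma$, an immediate contradiction, so $\fatgamma{\F_w}\le 2$. The main obstacle is keeping this sign-pattern case analysis clean; the propagation observation is what makes both sub-cases short.

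For the matching lower bound, set $v_i=2\gamma+(\wmax-i)\alpha$ and $c_i=v_i-\gamma$ for $i\in[\wmax]$. The hypotheses $\alpha<1/(\wmax-1)$ and $\gamma<1/2-(\wmax-1)\alpha/2$ guarantee $v_{\wmax}=2\gamma>0$ and $v_1<1$. For any $\sigma\in\{-1,+1\}^{\wmax}$ define
\[
p_i=\begin{cases}v_i, & \sigma_i=+1,\\ v_{i-1}, & \sigma_i=-1,\ i>1,\\ 1, & \sigma_i=-1,\ i=1.\end{cases}
\]
A short case check on consecutive pairs $(\sigma_{i-1},\sigma_i)$, using $v_{i-1}>v_i$ and $1>v_1$, shows $\p$ is non-increasing with entries in $[0,1]$. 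If $\sigma_i=+1$ then $r(\p,(v_i,i))=v_i=c_i+\gamma$; if $\sigma_i=-1$ then $p_i>v_i$, so $r(\p,(v_i,i))=0\le v_i-2\gamma=c_i-\gamma$ (using $v_i\ge 2\gamma$). This exhibits $\wmax$-shattering. Finally, the pseudo-dimension statement follows by letting $\gamma\searrow 0$ with any fixed admissible $\alpha$: the lower bound gives $\pdim(\Rpn)\ge\wmax$ and the upper bound on $(0,1/4)$ gives $\pdim(\Rpn)\le 2\wmax$, so $\pdim(\Rpn)=\Theta(\wmax)$.
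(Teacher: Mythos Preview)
Your proof is correct and follows the same strategy as the paper: project onto per-patience classes $\F_w$ and bound each via sign-pattern analysis for the upper bound, and exhibit an explicit shattered set $\{(2\gamma+(\wmax-i)\alpha,\,i)\}_{i\in[\wmax]}$ with witnesses $c_i=v_i-\gamma$ for the lower bound. Your case analyses are in fact a bit cleaner than the paper's---the three-point argument via patterns $(+,+,-)$ and $(+,-,+)$ avoids the paper's case split on $c_2$ versus $c_3$, and your lower-bound pricing $p_i\in\{v_i,v_{i-1},1\}$ is simpler than the paper's $\ell_i$-based construction---but note that your two- and three-point upper-bound steps implicitly use $c_1\geq 0$ (otherwise a $+1$ in the first coordinate does not force $p\le v_1$), which you should justify by invoking Lemma~\ref{lem:r-non-neg}.
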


We prove the following claims.

\begin{lemma}\label{lemma:gain-fat-lower-bound}
For every $0 < \alpha < \frac{1}{{\wmax}-1}$ and $0 < \gamma < \frac{1}{2} - \frac{{\wmax}-1}{2} \alpha$,
\[
\fatgamma{\Rpn} \geq {\wmax}.
\]
\end{lemma}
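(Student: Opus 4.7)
The goal is to exhibit an explicit set of $\wmax$ buyer types together with a witness $\mathbf{c} \in \mathbb{R}^{\wmax}$ such that every sign pattern $\sigma \in \{-1,+1\}^{\wmax}$ is realized by some non-increasing pricing $\p^\sigma \in \npureset$. The key observation that drives the construction is the formula $r(\p,(v,w)) = \ind{v \geq p_w}\, p_w$ for non-increasing $\p$ (used elsewhere in the paper): the revenue on buyer $(v_i,i)$ depends only on $p_i$. If we take $v_i$ strictly decreasing in $i$ and the witness $c_i = v_i - \gamma$, then realizing $\sigma_i = +1$ forces $p^\sigma_i = v_i$ exactly (since $\ind{v_i \geq p_i}\, p_i \geq v_i$ requires $p_i \geq v_i$ and $v_i \geq p_i$), while realizing $\sigma_i = -1$ can be done by pricing \emph{above} $v_i$ so that the buyer abstains and the revenue is $0$.

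Concretely, I would set $v_i := 2\gamma + (\wmax - i)\alpha$ for $i \in [\wmax]$, so $v_i$ is strictly decreasing with spacing $\alpha$, satisfies $v_{\wmax} = 2\gamma > 0$, and satisfies $v_1 = 2\gamma + (\wmax-1)\alpha < 1$ precisely under the hypothesis $\gamma < \tfrac{1}{2} - \tfrac{\wmax-1}{2}\alpha$. Take $S = \{(v_i,i) : i \in [\wmax]\}$ and witness $c_i = v_i - \gamma$. For each $\sigma \in \{-1,+1\}^{\wmax}$ define
\[
p^\sigma_i = \begin{cases} v_i, & \sigma_i = +1, \\ v_{i-1}, & \sigma_i = -1 \text{ and } i \geq 2, \\ 1, & \sigma_i = -1 \text{ and } i = 1. \end{cases}
\]
The choice $p^\sigma_i = v_{i-1}$ when $\sigma_i = -1$ is engineered so that $p^\sigma_i > v_i$ (ensuring abstention) while matching the largest possible value a predecessor can take, which is what preserves monotonicity.

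Verification has two parts. First, $\p^\sigma$ is non-increasing: a four-case check on $(\sigma_i,\sigma_{i+1})$ reduces each case to an inequality of the form $v_j \geq v_k$ with $j \leq k$, or $1 \geq v_j$, all of which follow from the monotonicity of $(v_i)$ and $v_1 < 1$. Second, the shattering condition holds: on $(v_i,i)$ we get $r(\p^\sigma,(v_i,i)) = v_i \geq c_i + \gamma$ when $\sigma_i = +1$ (since $p^\sigma_i = v_i$) and $r(\p^\sigma,(v_i,i)) = 0 \leq c_i - \gamma$ when $\sigma_i = -1$, where the latter inequality uses $c_i - \gamma = v_i - 2\gamma \geq v_{\wmax} - 2\gamma = 0$.

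I do not anticipate any hard step; the proof is essentially an explicit construction followed by a case analysis. The only subtlety is that the two hypotheses $\alpha < 1/(\wmax-1)$ and $\gamma < \tfrac{1}{2} - \tfrac{\wmax-1}{2}\alpha$ are used together in exactly one place each: the gap $\alpha > 0$ between consecutive $v_i$ is what lets the abstention price $v_{i-1}$ strictly exceed $v_i$ while staying at most $v_{i-2}$, and the upper bound on $\gamma$ is precisely what guarantees $v_1 < 1$ so that the boundary case $\sigma_1 = -1$ admits a legal price at most $1$.
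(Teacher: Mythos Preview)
Your proof is correct and follows essentially the same approach as the paper: the same shattered set $S = \{(2\gamma + (\wmax-i)\alpha,\, i) : i \in [\wmax]\}$, the same witness $c_i = v_i - \gamma$, and the same mechanism of setting $p^\sigma_i = v_i$ on $+1$ positions while pricing strictly above $v_i$ on $-1$ positions to force abstention. Your realization of the $-1$ prices is in fact cleaner than the paper's: you use the uniform rule $p^\sigma_i = v_{i-1}$ (or $1$ at $i=1$), whereas the paper splits into three regimes (price $1$ on the maximal $-1$-prefix, price $v_{\ell_i}$ with $\ell_i$ the last preceding $+1$ index in the middle, and price $0$ on the maximal $-1$-suffix), which requires tracking the auxiliary indices $j,k,\ell_i$; both constructions work for the same reason, but yours avoids the extra bookkeeping.
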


\begin{lemma}\label{lemma:gain-fat-upper-bound}
For every $\gamma > 0$,
\[
\fatgamma{\Rpn} \leq
\begin{cases}
    2{\wmax}, & \gamma \in (0, \frac{1}{4}) \\
    {\wmax}, & \gamma \in [\frac{1}{4}, \frac{1}{2}] \\
    0, & \gamma > \frac{1}{2}.
\end{cases}
\]
\end{lemma}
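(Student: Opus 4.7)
The plan is to prove the lower and upper bounds separately, following the sketch in \cref{subsec:sample-complexity-pure}.

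\emph{Lower bound.} I would exhibit an explicit shattered set of size $\wmax$. Take $z_i = (v_i, i)$ with $v_i = 2\gamma + (\wmax - i)\alpha$ for $i \in [\wmax]$, so that the values are strictly decreasing with $v_{\wmax} = 2\gamma > 0$ and $v_1 < 1$ (using the hypothesis $\gamma < \tfrac{1}{2} - \tfrac{\wmax-1}{2}\alpha$). Use the witness $c_i = v_i - \gamma$, so the shattering targets are $v_i$ (upper band) and $v_i - 2\gamma \geq 0$ (lower band). Given $\sigma \in \{-1,+1\}^{\wmax}$, set $j(i) = \max\{j \leq i : \sigma_j = +1\}$ and define $p^\sigma_i = v_{j(i)}$ when this set is nonempty and $p^\sigma_i = 1$ otherwise. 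Because $j(i)$ is non-decreasing in $i$ and $v_\cdot$ is strictly decreasing, $\p^\sigma$ is non-increasing; hence by \cref{lem:non-increasing-pure} each buyer $(v_i,i)$ buys at step $i$ iff $v_i \geq p^\sigma_i$. A direct check then yields revenue $v_i$ (upper band) when $\sigma_i = +1$ and revenue $0 \leq v_i - 2\gamma$ (lower band) when $\sigma_i = -1$.

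\emph{Upper bound via projection.} The case $\gamma > \tfrac12$ is immediate from \cref{lem:zero-fat} since revenues lie in $[0,1]$. For $\gamma \leq \tfrac12$, for each patience $w \in [\wmax]$ define the one-dimensional projected class $\Rpn^w = \{v \mapsto \ind{v \geq p}\cdot p : p \in [0,1]\}$, and note that every $p \in [0,1]$ can be realized as the $w$-th coordinate of some non-increasing pricing (e.g., $(1,\ldots,1,p,0,\ldots,0)$). Given any set $S$ that is $\gamma$-shattered by $\Rpn$ with witness $\cb$, partition $S = \bigsqcup_w S_w$ by patience. I would argue that $S_w$ is $\gamma$-shattered by $\Rpn^w$ with the inherited witness $\cb|_{S_w}$: every sub-pattern on $S_w$ extends arbitrarily to a sign pattern on $S$, the global shattering produces a non-increasing pricing realizing that extension, and reading off its $w$-th coordinate realizes the original sub-pattern in $\Rpn^w$. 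This yields $|S| = \sum_w |S_w| \leq \wmax \cdot \max_w \fatgamma{\Rpn^w}$.

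It remains to bound $\fatgamma{\Rpn^w}$. In this class each function takes values in $\{0,p\}$, with value $p$ exactly when $v \geq p$, which severely restricts the achievable sign patterns on sorted values. For any $\gamma > 0$, I would rule out shattering three points $v_1 < v_2 < v_3$ by examining the patterns $(+,-,+)$ and $(-,+,-)$: in each case the offered price must sit below all three values, producing the same revenue on all three, which forces both $c_3 \leq c_2 - 2\gamma$ and $c_3 \geq c_2 + 2\gamma$, a contradiction; hence $\fatgamma{\Rpn^w} \leq 2$. To sharpen this to $\leq 1$ once $\gamma \geq \tfrac14$, I would chain the constraints imposed by $(+,+), (+,-), (-,+)$ on two points to derive $v_1 \geq c_2 + \gamma \geq c_1 + 3\gamma \geq 4\gamma$, which is incompatible with $v_1 \leq 1$ for $\gamma > \tfrac14$ and degenerates (forcing $v_1 = v_2$) at $\gamma = \tfrac14$. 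Combining these projected bounds with the projection step yields the three cases. The hardest part will be the projection step: one must verify that the inherited sub-witness certifies shattering in $\Rpn^w$ for \emph{every} sub-pattern, which reduces to the observation that the non-increasing constraint places no restriction on a single coordinate once the others are free.
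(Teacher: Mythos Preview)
The statement you are asked to prove is only the \emph{upper} bound on $\fatgamma{\Rpn}$; your ``Lower bound'' paragraph (shattering an explicit set of size $\wmax$) belongs to the companion lower-bound lemma and is extraneous here.

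Your upper-bound argument is essentially the paper's proof. You project onto each patience $w$ to get the one-dimensional class $\{v \mapsto \ind{v\geq p}\cdot p : p\in[0,1]\}$, observe that any $\gamma$-shattered set $S$ partitions as $S=\bigsqcup_w S_w$ with each $S_w$ $\gamma$-shattered by the projected class (the paper phrases this as pigeonhole on a set of size $k\wmax+1$, you as the sum $|S|=\sum_w|S_w|$, but the content is identical), and then bound the projected fat-shattering by $2$ for all $\gamma>0$ and by $1$ for $\gamma\geq 1/4$. The sign patterns you invoke --- $(+,-,+)$ and $(-,+,-)$ for the bound $\leq 2$, and $(+,+)$, $(+,-)$, $(-,+)$ chained to force $v_1\geq 4\gamma$ for the bound $\leq 1$ --- are exactly the ones the paper uses in its \cref{lem:fatd}, with only cosmetic differences (the paper splits the $\leq 2$ argument into cases on whether $c_2\geq c_3$, and in the $\leq 1$ argument derives $v_2>4\gamma$ rather than $v_1\geq 4\gamma$).

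One small slip to fix when you write it up: for the pattern $(-,+,-)$ the price need not lie below $v_1$ (only below $v_2$ and hence $v_3$), so ``the offered price must sit below all three values'' is false in that case. Your conclusion $c_3\geq c_2+2\gamma$ only needs the revenues at $v_2$ and $v_3$ to coincide, which they do, so the argument is unaffected.
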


In order to prove \cref{lemma:gain-fat-upper-bound} we use the following.
Define the projection of $\Rpn$ on patience $w$ by
\[
\Rc_w = \{ v \mapsto f(v,w) \mid f \in \Rpn \}.
\]
\begin{lemma} \label{lem:fatd}
For every $\gamma > 0$ and $w \in [{\wmax}]$,
\[
\fatgamma{\Rc_w} = 
\begin{cases}
    2, & \gamma \in (0, \frac{1}{4}) \\
    1, & \gamma \in [\frac{1}{4}, \frac{1}{2}] \\
    0, & \gamma > \frac{1}{2}.
\end{cases}
\]
\end{lemma}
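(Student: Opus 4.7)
The approach is built on the observation that for any $w\in[\wmax]$, $\Rc_w$ is the one-parameter family of threshold-revenue functions
$
v\mapsto \ind{v\geq p}\cdot p
$
for $p\in[0,1]$ (any $p_w\in[0,1]$ is realizable as the $w$-th entry of a non-increasing sequence, e.g.\ by setting $p_1=\cdots=p_w=p$ and $p_{w+1}=\cdots=p_{\wmax}=0$). The structural fact I exploit throughout is that for $v_1<\cdots<v_k$ every achievable vector $(f(v_1),\ldots,f(v_k))$ has the shape $(0,\ldots,0,p,p,\ldots,p)$; in particular, once a coordinate is positive, all later coordinates equal that same value. Combined with \cref{lem:r-non-neg}, which forces any witness $c_i\geq 0$, every ``$+$'' sign in coordinate $i$ forces $f(v_i)>0$ and hence ties $f(v_i)=f(v_{i+1})=\cdots=f(v_k)$.

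For the upper bounds, the case $\gamma>\tfrac{1}{2}$ follows immediately from \cref{lem:zero-fat}. Next I argue that no three points can be $\gamma$-shattered for any $\gamma>0$: for $v_1<v_2<v_3$ with witness $(c_1,c_2,c_3)$, the pattern $(+,-,+)$ forces a single $p$ with $c_3+\gamma\leq p\leq c_2-\gamma$, while $(+,+,-)$ forces a single $p'$ with $c_2+\gamma\leq p'\leq c_3-\gamma$; these yield $c_2\geq c_3+2\gamma$ and $c_3\geq c_2+2\gamma$ simultaneously, a contradiction. This already gives $\fatgamma{\Rc_w}\leq 2$ for every $\gamma>0$. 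For $\gamma\geq\tfrac{1}{4}$ I then rule out two-point shattering: for $v_1<v_2$ the patterns $(-,+)$, $(+,-)$, $(+,+)$ successively force $c_1\geq\gamma$, $c_2\geq c_1+2\gamma$, and $v_1\geq c_2+\gamma$, so $v_1\geq 4\gamma\geq 1$; the interval $v_2\in(v_1,1]$ is then empty.

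For the lower bounds, in the range $\gamma\in(0,\tfrac{1}{4})$ I exhibit $v_1=4\gamma,\ v_2=1$ with witness $(c_1,c_2)=(\gamma,3\gamma)$: the prices $p=4\gamma,\ 2\gamma,\ 1,\ 0$ realize the four sign patterns $(+,+),(+,-),(-,+),(-,-)$ respectively (directly verified using $\gamma<\tfrac{1}{4}$, so that $4\gamma<1$ and the choice $p=1$ indeed produces $f(v_1)=0$). For $\gamma\in[\tfrac{1}{4},\tfrac{1}{2}]$ the singleton $\{1\}$ with witness $c=\tfrac{1}{2}$ is shattered by $p=1$ and $p=0$.

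The main bookkeeping obstacle I anticipate is the two-point upper bound in the regime $\gamma\geq\tfrac{1}{4}$: chaining the non-negativity of witnesses, the structural monotonicity of achievable vectors, and three separate sign patterns to extract the inequality $v_1\geq 4\gamma$ is the most subtle step, since one must carefully argue that the ``$-$'' signs really are realized by $f=0$ and that the ``$+$'' signs are realized by the common positive value. The three-point impossibility, by contrast, collapses cleanly to a single pair of incompatible inequalities on $c_2,c_3$, and the lower bounds are routine verifications once the witnesses and prices are written down.
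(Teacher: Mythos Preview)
Your proof is correct and follows essentially the same route as the paper: reduce to the one-parameter family $v\mapsto\ind{v\geq p}\cdot p$, invoke \cref{lem:zero-fat} for $\gamma>\tfrac12$, rule out three-point shattering via sign-pattern analysis, rule out two-point shattering for $\gamma\geq\tfrac14$ by chaining $(-,+)$, $(+,-)$, $(+,+)$ to force $v_1\geq 4\gamma$, and verify the same lower-bound witnesses (your $\{4\gamma,1\}$ with $(\gamma,3\gamma)$ is identical; your singleton $\{1\}$ with $c=\tfrac12$ differs cosmetically from the paper's $\{2\gamma\}$ with $c=\gamma$).

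The one noteworthy difference is your three-point argument. The paper splits into cases $c_2\geq c_3$ and $c_2<c_3$, using the patterns $(-,+,-)$ and $(+,-,+)$ respectively. You instead use $(+,-,+)$ and $(+,+,-)$ together, extracting $c_2\geq c_3+2\gamma$ and $c_3\geq c_2+2\gamma$ directly without a case split. This works precisely because, via \cref{lem:r-non-neg} and your structural observation, a leading ``$+$'' forces \emph{all} coordinates to equal the common value $p$, so both patterns reduce to inequalities on $p$ alone. It is a clean streamlining of the paper's argument.
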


Combining \cref{lemma:gain-fat-lower-bound} and \cref{lemma:gain-fat-upper-bound}, we can conclude \cref{lem:fat-gain}.

\begin{proof}[of \cref{lemma:gain-fat-lower-bound}]
Let $0 < \alpha < \frac{1}{{\wmax}-1}$ and $0 < \gamma < \frac{1}{2} - \frac{{\wmax}-1}{2} \alpha$. Take the set:
\[
S = \set{(v_i, w_i):i \in [{\wmax}]} = \set{(2 \gamma + ({\wmax}-i) \alpha, i):i \in [{\wmax}]} \subseteq [0,1] \times [{\wmax}]
\]
and the witness $\cb \in \R^{\wmax}$ defined by $c_i = \gamma + ({\wmax}-i) \alpha$ for every $i \in [{\wmax}]$. Let $\sigmab \in \{-1, +1\}^{\wmax}$. Let $j$ be the maximal index such that $\sigma_i = -1$ for all $i \leq j$ or 0 if none; and let $k$ be the minimal index such that $\sigma_i = -1$ for all $i \geq k$ or ${\wmax}+1$ if none. In other words, $j$ is the ending index of the maximal prefix of $-1$'s in $\sigmab$, and $k$ is the starting index of the maximal suffix of $-1$'s in $\sigmab$. Note that the only case where $j > k$ is when $\sigmab = \{-1\}^{\wmax}$; in this case we have $j = {\wmax} > 1 = k$. If $j < k$, for every $i \in (j, k)$, denote by $\ell_i$ the maximal index $\leq i$ such that $\sigma_{\ell_i} = +1$. Define pricing $\p^{\sigmab} \in [0,1]^{\wmax}$ as follows:
\[
p^{\sigmab}_i = 
\begin{cases}
1, & 1 \leq i \leq j\\
c_{\ell_i} + \gamma, & j < i < k\\ 
0, & 1 < k \leq i \leq {\wmax}. \\
\end{cases}
\]
Since $1 > c_1 + \gamma > \ldots > c_{\wmax} +\gamma> 0$, $\p^{\sigmab}$ is monotonically non-increasing pricing, and thus $\p^{\sigmab} \in \npureset$. Let $i \in [{\wmax}]$. Consider the following cases:
\bi

\item $1 \leq i \leq j$: we have $p^{\sigmab}_i = 1$ and $\sigma_i = -1$. Hence,
\[
\ind{v_i \geq p^{\sigmab}_i} \cdot p^{\sigmab}_i = \ind{2 \gamma + ({\wmax}-i) \alpha \geq 1} \cdot 1 = 0 \leq ({\wmax}-i)\alpha = \gamma + ({\wmax}-i)\alpha - \gamma = c_i - \gamma.
\]

\item $1 < k \leq i \leq \wmax$: we have $p^{\sigmab}_i = 0$ and $\sigma_i = -1$. Hence,
\[
\ind{v_i \geq p^{\sigmab}_i} \cdot p^{\sigmab}_i = \ind{2 \gamma + ({\wmax}-i) \alpha \geq 0} \cdot 1 = 1 \geq 2\gamma + ({\wmax}-i)\alpha = \gamma + ({\wmax}-i)\alpha + \gamma = c_i + \gamma.
\]

\item $j < i < k$: we have $p^{\sigmab}_i = c_{\ell_i} + \gamma$. If $\sigma_i = +1$, it hold that $\ell_i = i$ and thus
\begin{align*}
    \ind{v_i \geq p^{\sigmab}_i} \cdot p^{\sigmab}_i &= \ind{2 \gamma + ({\wmax}-i) \alpha \geq c_i + \gamma} \cdot (c_i + \gamma)
    = \ind{2 \gamma + ({\wmax}-i) \alpha \geq 2 \gamma + ({\wmax}-i) \alpha} \cdot (c_i + \gamma)\\
    &= c_i + \gamma \geq c_i + \gamma.
\end{align*}
However, if $\sigma_i = -1$, it holds that $\ell_i < i$ and thus
\begin{align*}
    \ind{v_i \geq p^{\sigmab}_i} \cdot p^{\sigmab}_i &= \ind{2 \gamma + ({\wmax}-i) \alpha \geq c_{\ell_i} + \gamma} \cdot (c_{\ell_i} + \gamma)
    = \ind{2 \gamma + ({\wmax}-i) \alpha \geq 2 \gamma + ({\wmax}-\ell_i) \alpha} \cdot (c_{\ell_i} + \gamma) \\
    &= 0 \leq ({\wmax}-i)\alpha = \gamma + ({\wmax}-i)\alpha - \gamma = c_i - \gamma.
\end{align*}
\ei
\end{proof}

\begin{proof}[of \cref{lemma:gain-fat-upper-bound}]
Since the image of each function in $\Rpn$ is $[0,1]$, by \cref{lem:zero-fat} we get that for every $\gamma > \frac{1}{2}$, it holds that $\fatgamma{\Rpn} = 0$.

Now, let $\gamma \leq \frac{1}{2}$. We show that $\fatgamma{\Rpn} \leq k{\wmax}$, where $k=2$ if $\gamma\in(0,1/4)$, or $k=1$ if $\gamma \in [1/4,1/2]$. To do so, we use the pigeonhole principle to argue that if there exists a set of size $k\wmax+1$ shattered by $\Rpn$, we can shatter a set of size $k+1$ by some projected class $\Rc_w$.

Denote $m = k{\wmax}+1$.  Assume by contradiction that there exists a set $S = \{(v_1, w_1), \ldots, (v_m, w_m) \} \subseteq [0,1] \times [{\wmax}]$, which is $\gamma$-shattered by $\Rpn$ with some witness $\cb \in \R^m$.
For each $w \in [{\wmax}]$, denote:
\[
I_w = \set{i \in [m]:w_i = w}, \quad S_w = \set{v_i: i \in I_w}, \quad m_w = \abs{S_w}.
\]
By the pigeonhole principle, there exists a patience $w$ such that $m_w \geq k+1$. To see that $(c_i)_{i \in I_w}$ witness the $\gamma$-shattering of $S_w$ by $\Rc_w$, let ${\sigmab}^w \in \{-1,1\}^{m_w}$. Define the position of $i \in I_w$ in $I_w$ by $a(i) = \abs{\set{j \in I_w:j \leq i}}$. Then, define ${\sigmab} \in \{-1,1\}^m$ by $\sigma_i = \sigma^w_{a(i)}$ if $i \in I_w$ and 1 otherwise. Since $S$ is $\gamma$-shattered by $\Rpn$ with witness $\cb$, there exists $f^{\sigmab} \in \Rpn$ such that
\[
\forall i \in [m]. \; 
\begin{cases}
	f^{\sigmab}(v_i, w_i) \geq c_i + \gamma, & \sigma_i = 1\\
    f^{\sigmab}(v_i, w_i) \leq c_i - \gamma, & \sigma_i = -1.
 \end{cases}
\]
Define $f_w^{\sigmab}(v) = f^{\sigmab}(v, w)$. Clearly, $f_w^{\sigmab} \in \Rc_w$. Moreover, we have:
\[
\forall i \in I_w. \; 
\begin{cases}
	f_w^{\sigmab}(v_i) = f^{\sigmab}(v_i, w) = f^{\sigmab}(v_i, w_i) \geq c_i + \gamma, & \sigma_i = 1\\
    f_w^{\sigmab}(v_i) = f^{\sigmab}(v_i, w) = f^{\sigmab}(v_i, w_i) \leq c_i - \gamma, & \sigma_i = -1,
 \end{cases}
\]
so we get that $S_w$ is $\gamma$-shattered by $\Rc_w$. By \cref{lem:fatd}, there does not exist a set of size $k+1$ shattered by, so we obtain a contradiction, as required.
\end{proof}

\begin{proof} [of \cref{lem:fatd}]
Let $w \in [\wmax]$ be some patience and $\gamma > 0$. We have the following cases.
\begin{enumerate}
    \item $\gamma > \frac{1}{2}$. \\
        Since the image of each function in $\Rc_w$ is $[0,1]$, by \cref{lem:zero-fat}, for any $\gamma > \frac{1}{2}$, it holds that $\fatgamma{\Rc_w} = 0$.
        
    \item $\gamma \in [\frac{1}{4}, \frac{1}{2}]$. \\
    \begin{enumerate}
        \item $\fatgamma{\Rc_w} \geq 1$ for $\gamma \in (0, \frac{1}{2}]$. Take the set $S = \{v_1\} = \left\{ 2\gamma \right\} \subseteq [0,1]$ and the witness $c_1 = \gamma$. Consider each $\sigma \in \{-1, 1\}$:
            \be
                \item $\sigma = 1$: Take $\p^\sigma \in \npureset$ such that $p^\sigma_w = 2\gamma$, so we have:
                \begin{align*}
                    \ind{v_1 \geq p^{\sigmab}_w} \cdot p^{\sigmab}_w &= \ind{2\gamma \geq 2\gamma} \cdot 2\gamma = 2\gamma = \gamma + \gamma = c_1 + \gamma \geq c_1 + \gamma.
                \end{align*}
        
                \item $\sigma = -1$: Take $\p^{\sigma} \in \npureset$ such that $p^\sigma_w =0$, so we have:
                    \begin{align*}
                        \ind{v_1 \geq p^{\sigmab}_w} \cdot p^{\sigmab}_w &= \ind{2\gamma \geq 0} \cdot 0 = 0 = \gamma - \gamma = c_1 - \gamma \leq c_1 - \gamma,
                    \end{align*}
            \ee
        as required.
        \item $\fatgamma{\Rc_w} \leq 1$ for $\gamma \geq \frac{1}{4}$. Assume by contradiction that there exists a set $S = \{v_1, v_2\} \subseteq [0,1]$, where $v_1 < v_2$, which is $\gamma$-shattered by $\Rc_w$ with some witness $\cb \in \R^2$. Using \cref{lem:r-non-neg}, we get that $\cb$ is non-negative.
        By the $\gamma$-shattering assumption, there exists $\p^{++}, \p^{+-}, \p^{-+} \in \npureset$ such that
        \begin{align}
            \ind{v_1 \geq p^{++}_w} \cdot p^{++}_w &\geq c_1 + \gamma \label{v1++}, \\
            \ind{v_2 \geq p^{++}_w} \cdot p^{++}_w &\geq c_2 + \gamma \label{v2++}, \\
            \ind{v_1 \geq p^{+-}_w} \cdot p^{+-}_w &\geq c_1 + \gamma \label{v1+-}, \\
            \ind{v_2 \geq p^{+-}_w} \cdot p^{+-}_w &\leq c_2 - \gamma \label{v2+-}, \\
            \ind{v_1 \geq p^{-+}_w} \cdot p^{-+}_w &\leq c_1 - \gamma \label{v1-+}, \\
            \ind{v_2 \geq p^{-+}_w} \cdot p^{-+}_w &\geq c_2 + \gamma. \nonumber
        \end{align}
        Since $c_1, c_2 \geq 0$ and $\gamma > 0$, we get from (\ref{v1++}) and (\ref{v2++}) that $\ind{v_1 \geq p^{++}_w} = \ind{v_2 \geq p^{++}_w} = 1$, so we have:
        \begin{align*}
            c_1 + \gamma &\leq p^{++}_w \leq v_1,\\
            c_2 + \gamma &\leq p^{++}_w \leq v_2.
        \end{align*}
        Hence,
        \begin{align}
            c_2 + \gamma \leq v_1 < v_2. \label{ineq-1}
        \end{align}
        Now, note that the revenue function $r$ is monotonic in the buyer's value. Thus, $r(\p^{+-}, (v_1, w)) \leq r(\p^{+-}, (v_2,w))$, and from (\ref{v1+-}) and (\ref{v2+-}) we get:
        \begin{align}
            c_1 + \gamma \leq c_2 - \gamma \Rightarrow c_2 \geq c_1 + 2\gamma. \label{ineq-2}
        \end{align}
        Finally, since the revenue function $r$ is non-negative, from (\ref{v1-+}) we get that $c_1 \geq \gamma$. Combining this with inequalities (\ref{ineq-1}) and (\ref{ineq-2}) we obtain:
        \[
        v_2 > c_2 + \gamma \geq c_1 + 2\gamma + \gamma = c_1 + 3\gamma \geq \gamma + 3\gamma = 4\gamma.
        \]
        But since $v_2 \in [0,1]$, we get that $\gamma < \frac{1}{4}$ which is a contradiction.
    \end{enumerate}
    
    \item $\gamma \in (0, \frac{1}{4})$.
    \begin{enumerate}
        \item $\fatgamma{\Rc_w} \geq 2$ for $\gamma \in (0, \frac{1}{4})$.
        Take the set $S = \{v_1, v_2\} = \left\{ 4 \gamma, 1 \right\} \subseteq [0,1]$ and the witness $\cb = (c_1, c_2) = \left( \gamma, 3\gamma \right)$. Consider each ${\sigmab} \in \{-1, 1\}^2$:
        \be
        \item ${\sigmab} = (1, 1)$: Take $\p^{\sigmab} \in \npureset$ such that $p^{\sigmab}_w = 4\gamma$, so we have:
        \begin{align*}
            \ind{v_1 \geq p^{\sigmab}_w} \cdot p^{\sigmab}_w &= \ind{4\gamma \geq 4\gamma} \cdot 4 \gamma = 4\gamma \geq 2\gamma = \gamma + \gamma = c_1 + \gamma, \\
            \ind{v_2 \geq p^{\sigmab}_w} \cdot p^{\sigmab}_w &= \ind{1 \geq 4\gamma} \cdot 4\gamma = 4\gamma = 3\gamma + \gamma =  c_2 + \gamma \geq c_2 + \gamma.
        \end{align*}
        
        \item ${\sigmab} = (1, -1)$: Take $\p^{\sigmab} \in \npureset$ such that $p^{\sigmab}_w = 2\gamma$, so we have:
        \begin{align*}
            \ind{v_1 \geq p^{\sigmab}_w} \cdot p^{\sigmab}_w &= \ind{4\gamma \geq 2\gamma} \cdot 2\gamma = 2\gamma = \gamma + \gamma = c_1 + \gamma \geq c_1 + \gamma, \\
            \ind{v_2 \geq p^{\sigmab}_w} \cdot p^{\sigmab}_w &= \ind{1 \geq 2\gamma} \cdot 2\gamma = 2\gamma = 3\gamma - \gamma = c_2 - \gamma \leq c_2 - \gamma.
        \end{align*}
        
        \item ${\sigmab} = (-1, 1)$: Take $\p^{\sigmab} \in \npureset$ such that $p^{\sigmab}_w = 1$, so we have:
        \begin{align*}
            \ind{v_1 \geq p^{\sigmab}_w} \cdot p^{\sigmab}_w &= \ind{4\gamma \geq 1} \cdot 4\gamma = 0 = \gamma - \gamma = c_1 - \gamma  \leq c_1 - \gamma,\\
            \ind{v_2 \geq p^{\sigmab}_w} \cdot p^{\sigmab}_w &= \ind{1 \geq 1} \cdot 1 = 1 \geq 4\gamma = 3\gamma + \gamma = c_2 + \gamma \geq c_2 + \gamma.
        \end{align*}
        
        \item ${\sigmab} = (-1, -1)$: Take $\p^{\sigmab} \in \npureset$ such that $p^{\sigmab}_w =0$, so we have:
        \begin{align*}
            \ind{v_1 \geq p^{\sigmab}_w} \cdot p^{\sigmab}_w &= \ind{4\gamma \geq 0} \cdot 0 = 0 = \gamma - \gamma = c_1 - \gamma \leq c_1 - \gamma, \\
            \ind{v_2 \geq p^{\sigmab}_w} \cdot p^{\sigmab}_w &= \ind{1 \geq 0} \cdot 0 = 0 \leq 2\gamma = 3\gamma - \gamma = c_2 - \gamma \leq c_2 - \gamma,
        \end{align*}
        \ee
        as required.
        
    \item $\fatgamma{\Rc_w} \leq 2$ for $\gamma > 0$. Assume by contradiction that there exists a set $S = \{v_1, v_2, v_3 \} \subseteq [0,1]$, where $v_1 < v_2 < v_3$, which is $\gamma$-shattered by $\Rc_w$ with some witness $\cb \in \R^3$. Using \cref{lem:r-non-neg}, we get that $\cb$ is non-negative. 
    
    First, we assume that $c_2 \geq c_3$. Take ${\sigmab} = (-1, 1, -1)$. By the $\gamma$-shattering assumption, there exists $\p^{\sigmab} \in \npureset$ such that
    \begin{align}
        \ind{v_1 \geq p^{\sigmab}_w} \cdot p^{\sigmab}_w &\leq c_1 - \gamma, \nonumber\\
        \ind{v_2 \geq p^{\sigmab}_w} \cdot p^{\sigmab}_w &\geq c_2 + \gamma, \label{v2+}\\
        \ind{v_3 \geq p^{\sigmab}_w} \cdot p^{\sigmab}_w &\leq c_3 - \gamma. \label{v3-}
    \end{align}
    Since $c_2 \geq 0$ and $\gamma > 0$, we get from (\ref{v2+}) that $\ind{v_2 \geq p^{\sigmab}_w} = 1$, so we have:
    \[
    c_2 + \gamma \leq p^{\sigmab}_w \leq v_2 < v_3.
    \]
    In particular, $v_3 > p^{\sigmab}_w$ and thus $\ind{v_3 \geq p^{\sigmab}_w} = 1$, so from (\ref{v3-}) we get that $p^{\sigmab}_w \leq c_3 - \gamma$. Since $c_2 \geq c_3$, we obtain that
    \[
    c_2 + \gamma \leq p^{\sigmab}_w \leq c_3 - \gamma \leq c_2 - \gamma \Rightarrow \gamma \leq 0,
    \]
    which is a contradiction.
    
    Now, assume that $c_2 < c_3$. Take ${\sigmab} = (1, -1, 1)$. By the $\gamma$-shattering assumption, there exists $\p^{\sigmab} \in \npureset$ such that
    \begin{align}
        \ind{v_1 \geq p^{\sigmab}_w} \cdot p^{\sigmab}_w &\geq c_1 + \gamma, \label{v1+}\\
        \ind{v_2 \geq p^{\sigmab}_w} \cdot p^{\sigmab}_w &\leq c_2 - \gamma, \label{v2-}\\
        \ind{v_3 \geq p^{\sigmab}_w} \cdot p^{\sigmab}_w &\geq c_3 + \gamma. \label{v3+}
    \end{align}
    Since $c_1, c_3 \geq 0$ and $\gamma > 0$, we get from (\ref{v1+}) and (\ref{v3+}) that $\ind{v_1 \geq p^{\sigmab}_w} = \ind{v_3 \geq p^{\sigmab}_w} = 1$, so we have:
    \begin{align*}
        c_1 + \gamma &\leq p^{\sigmab}_w \leq v_1 < v_2,\\
        c_3 + \gamma &\leq p^{\sigmab}_w \leq v_3.
    \end{align*}
    In particular, $v_2 > p^{\sigmab}_w$ and thus $\ind{v_2 \geq p^{\sigmab}_w} = 1$, so from (\ref{v2-}) we get that $p^{\sigmab}_w \leq c_2 - \gamma$. Since $c_2 < c_3$, we obtain:
    \[
    c_2 + \gamma < c_3 + \gamma \leq p^{\sigmab}_w \leq c_2 - \gamma \Rightarrow \gamma \leq 0,
    \]
    which is a contradiction.
    \end{enumerate}
\end{enumerate}

\end{proof}

\thmfatupper*

\begin{proof}
By \cref{lem:fat-gain} we obtain, \[
\int_{0}^{\infty} \sqrt{\fatgamma{\Rpn}} \,d\gamma \leq \int_{0}^{\frac{1}{2}} \sqrt{2\wmax} \,d\gamma = \frac{\sqrt{2\wmax}}{2} = \sqrt{\frac{\wmax}{2}}.
\]
Let $S =\set{z_i}_{i=1}^m$ be a sample of examples drawn i.i.d. according to $\D$. By \cref{uc-theorem} we conclude the claim,

\begin{align*}
    \sup_{f \in \Rpn} \abs{\frac{1}{m} \sum_{i=1}^m f(z_i) - \Eu{z \sim \D}{f(z)}} 
    \lesssim \sqrt{\frac{\wmax +\log \frac{1}{\delta}}{m}}.
\end{align*}
\end{proof}

\thmpuresecondupper*

\begin{proof}
Approximate any decreasing prices with only $k$ different prices (we will set later $k = 1/\eps$). Decrease each price down to the nearest $i/k$, for $0 \leq i \leq k$. The new prices are still decreasing, any buyer that originally buys, still buys. The loss of this approximation is at most $1/k$, due to the fact that we decrease the price by at most $1/k$.

We remain with a limited policy class. There are at most $N=\wmax^k$ different sequences of decreasing prices with only $k$ prices. Indeed, by combinatorial argument (unorder sampling with replacement), the class size is
$\binom{k+\wmax-1}{k}\lesssim \wmax^k$. For learning this class, we can set the approximation error to be the learning error, $k=1/\eps$.
We need a sample of size 
$$m=\frac{\log N}{\eps^2}+\frac{1}{\eps^2}\log\frac{1}{\delta}=\frac{\log(\wmax)}{\eps^3}+\frac{1}{\eps^2}\log\frac{1}{\delta},$$
in order to learn a finite class of size $N$.
\end{proof}

\subsection*{Sample complexity: lower bounds}

\thmlowerboundone*

\begin{proof}
We assume that all buyers have the same patience $w$.
Let $\eps \leq \frac{1}{4}, \delta$ and $m \lesssim \frac{1}{\eps^2} \log \frac{1}{\delta}$. For each $\sigma \in \{-1, +1\}$, define a distribution $\D_{\sigma}$ as follows:
\[
\D_{\sigma}[v,w] = 
\begin{cases}
\frac{1}{2} +\sigma \eps, & v=1 \\
\frac{1}{2} - \sigma \eps, & v=\frac{1}{2}.
\end{cases}
\]

By \cref{lem:pure-over-values}, the seller would prefer to offer either price $1/2$ or price $1$. The seller's excepted revenue for each price $p \in \set{1/2,1}$ and $\sigma \in \{-1, +1\}$  is

\begin{center}
\begin{tabular}{ |c||c|c| } 
 \hline
  & $\sigma = +1$ & $\sigma = -1$\\ [0.5ex]
 \hline\hline
 $p = 1$  & $\frac{1}{2} +\eps$ & $\frac{1}{2} -\eps$\\ [1ex]
 \hline
 $p = \frac{1}{2}$  & $\frac{1}{2}$ & $\frac{1}{2}$\\ [1ex]
 \hline
\end{tabular}
\end{center}

Hence, the optimal price for $\D_{\sigma}$ is
\[
p_{\sigma}^{\star} = 
\begin{cases}
1 & \sigma = +1 \\
\frac{1}{2} & \sigma = -1
\end{cases}
\]
Note that if the seller would offer a price other than $p_{\sigma}^{\star}$ to a buyer from $\D_{\sigma}$, his error is at least $\eps$ in expectation. 

Denote by $p_S$ the price returned by a learning algorithm $\mathcal{A}$ after receiving a sample of size $m$ from distribution $\D_{\sigma}$, where $\sigma$ is sampled uniformly over $\set{-1,+1}$. Now, we can treat our sampling process as an experiment of flipping $m$ times one of two coins with biases $\frac{1}{2} +\eps$ and $\frac{1}{2} -\eps$. 
In order to distinguish which coin was flipped, whether $\D_+$ or $\D_-$, with failure probability of at most $\delta$, a sample size of at least $\approx\frac{1}{\eps^2} \log \frac{1}{\delta}$ is required. Since $m$ is lower than that, with probability of at least $\delta$, algorithm $\mathcal{A}$ would not pick the optimal price, so the seller's error would be at least $\eps$, as required.
\end{proof}

\begin{lemma}\label{lem:prices-when-decreasing-values}
Assume that any two buyer types $(v,w), (v',w')$ in the support $S_{\D}$ of the joint distribution $\D$ hold that if $w' > w$ then $v' < v$. Denote the set of values of buyer types with patience $w \in [\wmax]$ by $\cU_w$. Then, in each step $i$, the seller prefers to offer prices only from $\cU_i$.
\end{lemma}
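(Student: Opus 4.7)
The plan is to combine the structural results for non-increasing optimal pure strategies with a complete per-step decomposition of the revenue. By \cref{thm:optimal-pure} there exists an optimal non-increasing pure strategy $\p = (p_1,\ldots,p_{\wmax})$, and by \cref{lem:non-increasing-pure} under any non-increasing strategy each buyer of type $(v,w)$ buys exactly at step $w$ if $v \geq p_w$ (and does not buy otherwise). Hence the revenue decomposes as
\[
r(\p) \;=\; \sum_{i=1}^{\wmax} p_i \cdot \Pr_{(v,w)\sim\D}\big[w=i,\; v \geq p_i\big],
\]
where each summand depends only on $p_i$ and not on the other coordinates.

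Next I would exploit the structural hypothesis. The assumption that $w' > w$ implies $v' < v$ for all types in $S_\D$ means that the sets $\cU_w$ are pairwise disjoint and strictly separated: $\max \cU_{w+1} < \min \cU_w$ for every $w$ (treating $\min \emptyset = +\infty$ and $\max \emptyset = -\infty$). In particular, the event $\{w=i\}$ under $\D$ coincides with $\{v \in \cU_i\}$, so $\Pr[w=i,\, v \geq p_i] = \Pr[v \in \cU_i,\, v \geq p_i]$ depends only on how $p_i$ compares to the elements of $\cU_i$.

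I would then optimize each summand independently over $p_i \in [0,1]$. On $[0,\min\cU_i]$ the probability is constant and equal to $\Pr[v \in \cU_i]$, so the summand is linear increasing in $p_i$ and maximized at $p_i = \min\cU_i \in \cU_i$. On any interval $(v',v'']$ between consecutive elements $v' < v''$ of $\cU_i$, the probability is likewise constant, so the summand is again linear in $p_i$ and maximized at $p_i = v'' \in \cU_i$. For $p_i > \max\cU_i$ the summand is zero. Hence the independent optimum $p_i^\star$ for step $i$ can be taken in $\cU_i$.

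To conclude, I would check that the resulting vector $(p_1^\star,\ldots,p_{\wmax}^\star)$ is automatically non-increasing, so that the independent optimization is consistent with the global constraint: for $i < j$ we have $p_i^\star \in \cU_i$ and $p_j^\star \in \cU_j$, and strict separation gives $p_i^\star \geq \min\cU_i > \max\cU_j \geq p_j^\star$. Thus an optimal pure strategy can be taken with $p_i \in \cU_i$ at every step $i$. The main conceptual point — and the only nontrivial one — is that the strict value–patience separation decouples the problem entirely by ensuring the non-increasing constraint is automatically enforced; once this is observed, no further obstacle remains.
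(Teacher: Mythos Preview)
Your argument is correct and takes a genuinely different route from the paper. The paper argues by contradiction via a local-improvement scheme: it takes an optimal non-increasing strategy whose first ``violating'' step $s$ (where $p_s\notin\cU_s$) is as late as possible, splits into the cases $p_s>\max\cU_s$ and $p_s<\min\cU_s$, and in each case replaces the offending prices by values in the appropriate $\cU_i$, checking that the modified vector stays non-increasing and earns at least as much revenue, thereby pushing the first violation later and reaching a contradiction.

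You instead decompose the revenue of a non-increasing strategy as $\sum_i p_i\Pr[w=i,\,v\geq p_i]$, optimize each term independently over $p_i\in[0,1]$, and then use the strict separation $\min\cU_i>\max\cU_j$ for $i<j$ to argue that the unconstrained per-step optima automatically satisfy the non-increasing constraint. This is cleaner: it pinpoints directly what the optimal prices are and avoids the case analysis and the minimal-counterexample bookkeeping. The paper's local argument, on the other hand, is closer in spirit to the other structural lemmas in the paper (e.g., \cref{lem:non-increasing-pure}) and would adapt more readily to situations where the $\cU_i$ are not strictly separated, so that the per-step decoupling fails. Both proofs implicitly assume each $\cU_i$ is nonempty (and, for your ``consecutive elements'' step, finite), which holds in the application to the lower bound.
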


\begin{proof}
Denote $\cU = \bigcup_{w=1}^T \cU_w$. First, by \cref{lem:pure-over-values}, the set of possible prices $P \subseteq \cU$. We say that step $i$ is a $\textit{violating step}$ in strategy $\p$ if $p_i \notin \cU_i$. Denote the set of all pure selling strategies with no violating steps by $\puresethat$.

We show that for every pure selling strategy $\p \in \pureset \setminus \puresethat$ there exists a pure selling strategy $\hat{\p} \in \puresethat$, whose excepted revenue is greater than or equal to that of $\p$. Assume by contradiction that it does not hold. I.e., there exists a strategy in $\pureset \setminus \puresethat$ that is better than all the strategies in $\puresethat$. Of all the \textit{optimal} strategies, all of which are necessarily in $\pureset \setminus \puresethat$, consider a strategy $\p$ where the first violating step $s$ in $\p$ is the latest. We have $p_s \in P \setminus{\cU_s}$. By \cref{lem:non-increasing-pure}, we can assume without loss of generality that $\p$ is a non-increasing pricing, so the buyer would buy exactly in the step in which his patience expires. For every $i \in [\wmax]$, denote $\ov_i = \max_{v \in \cU_i} v, \uv_i = \min_{v \in \cU_i} v$. Consider the following cases:

\bi

\item $p_s > \ov_s$: Let $s' \geq s$ be the \textit{last} violating step such that for all $i \in [s, s']$, $p_i > \ov_i$. Every buyer type with patience $\in [s, s']$ cannot buy the item. Thus, the revenues of $\p$ in those steps are zero. Define strategy $\p'$ to be the same as $\p$ except that we replace $p_i$ with $\ov_i$ for all $i \in [s, s']$. Since $s$  is the first violating step, $p_{s-1} \in \cU_{s-1}$. Hence, since the values decrease over time, we have:
\begin{align*}
    &p'_{s-1} = p_{s-1} > \ov_s = p'_s,\\
    &p'_i = \ov_i > \ov_{i+1} = p'_{i+1} \quad \forall i \in [s, s'),\\
    &p'_{s'} = \ov_{s'} > \ov_{s'+1} \geq p_{s'+1} = p'_{s'+1}.
\end{align*}
Therefore, $\p'$ is also non-increasing strategy and its excepted revenue is greater than or equal to that of $\p$.

\item $p_s < \uv_s$: There may be buyer types with patience $s$ who buy the item in this step. Define strategy $\p'$ to be the same as $\p$ except that we replace $p_s$ with $\uv_s$. Since this is their final step, every buyer type with patience $s$ buys the item at this step at price $p'_s = \uv_s > p_s$. Since $s$ is the first violating step, $p_{s-1} \in \cU_{s-1}$. Hence, since the values decrease over time and $\p$ is a non-increasing pricing, we have:
\begin{align*}
    &p'_{s-1} > \uv_s = p'_s,\\
    &p'_s = \uv_s > p_s \geq p_{s+1} = p'_{s+1}.
\end{align*}
Therefore, $\p'$ is also non-increasing strategy and its excepted revenue is greater than or equal to that of $\p$.
\ei

In both cases, we define a non-increasing strategy $\p'$ whose excepted revenue is at least as much as $\p$'s. In particular, $\p'$ is an optimal strategy. However, if $\p' \in \puresethat$, we get a contradiction to the assumption that there are no optimal strategies in $\puresethat$; otherwise, since the first violating step of $\p'$ is later than that of $\p$, we get a contradiction to the assumption that of all the optimal strategies in $\pureset \setminus \puresethat$, the first violating step in $\p$ is the latest.
\end{proof}

\thmlowerboundtwo*

\begin{proof}
Formally, we show that for any learning algorithm $\cA:\Zc^{*}\rightarrow [0,1]^{\wmax}$, there exists a distribution $\D$ over $[0,1]\times[\wmax]$ such that for a random sample $S\sim \D^m$, for $m\leq c_1\frac{\wmax}{\eps}$, $\eps\leq c_2\frac{1}{\wmax}$ and $\delta \leq c_3$, it holds that
$$
\Probu{S\sim \D^m}{ \rev(\cA(S);\D) < \max_{\p\in [0,1]^{\wmax}}\rev(\p;\D)-\eps}\geq \delta,
$$
where $c_1,c_2,c_3 > 0$ are universal constants.
For each patience $w \in [\wmax]$ where $\wmax\geq 2$, define the set
\[
\V_w = \left\{ v_1^w, v_2^w \right\} = \left\{ \left( \frac{\wmax-1}{\wmax} \right)^{2w-1}, \; \left( \frac{\wmax-1}{\wmax} \right)^{2w} \right\}.
\]

Define a set $\Zc = \bigcup_{w=1}^{\wmax} \V_w \times \{w\} \subseteq [0,1] \times [\wmax]$. Denote $\alpha = \frac{\eps}{16}$. For any $\sigmab \in \{-1, +1\}^{\wmax}$, define a distribution $\D_{\sigmab}$ over $\Zc$ as follows:
\[
\forall w \in [\wmax]. \quad \D_{\sigmab}[v, w] =
\begin{cases}
\frac{1}{\wmax} \cdot q (1 + \sigma_{w}\alpha), & v=v^w_1 \\
\frac{1}{\wmax} \cdot \paren{1- q (1 +\sigma_{w}\alpha)}, & v=v^w_2,
\end{cases}
\]

where $q = v_2^w/v_1^w=(\wmax-1)/\wmax$ which is in $[0,1]$. 
Sampling according to $\D_{\sigmab}$ can be described as follows. First we sample $w \in [\wmax]$ uniformly at random, and then set the value to be $v_1^w$ with probability $q (1 + \sigma_{w}\alpha)$ or $v_2^w$ with probability $1- q (1 +\sigma_{w}\alpha)$. 

Next, we find the Bayes optimal hypothesis for $\D_{\sigmab}$. 
Any distribution $\D_{\sigmab}$ satisfies the conditions of  \cref{lem:prices-when-decreasing-values}, and we can deduce that the Bayes optimal offers prices only from $\V_i$ at step $i \in [\wmax]$. Moreover, for any $i\in[\wmax-1]$, the values at step $i$ are greater than the values at step $i+1$, that is, the Bayes optimal is a non-increasing pure selling strategy.

We show that the optimal price at step $i$ is determined by the values of $\sigma_i\in\set{-1,1}$.
The seller's excepted revenue at step $i \in [\wmax]$, for each price $p_i \in \V_i$ and $\sigma_i \in \{-1, +1\}$  is

\begin{center}
\begin{tabular}{ |c||c|c| } 
 \hline
  & $\sigma_i = +1$ & $\sigma_i = -1$\\ [0.5ex]
 \hline\hline
 $p_i = v_1^i$  & $v_1^i \cdot q (1 + \alpha)$ & $v_1^i \cdot q (1 -  \alpha)$\\ [1ex]
 \hline
 $p_i = v_2^i$  & $v_2^i$ & $v_2^i$\\ [1ex]
 \hline
\end{tabular}
\end{center}
Note that,
\begin{align} 
    \label{eq:loss-in-round}
    v_1^i q (1 + \alpha) - v_2^i =  v_1^i \cdot \frac{v_2^i}{v_1^i} (1 +\alpha) - v_2^i = v_2^i \alpha,
    \\
    \label{eq:loss-in-round2}
    v_2^i - v_1^i q (1 - \alpha) =  v_2^i - v_1^i \cdot \frac{v_2^i}{v_1^i} (1 - \alpha) = v_2^i \alpha.
\end{align}
Therefore, the Bayes optimal hypothesis for $\D_{\sigmab}$ is a non-increasing selling strategy and defined as, 
\[
\pds(i) = 
\begin{cases}
v_1^i & \sigma_i = +1 \\
v_2^i & \sigma_i = -1.
\end{cases}
\]
It follows from \cref{eq:loss-in-round,eq:loss-in-round2} that a selling strategy offering a different price than the Bayes optimal at step $i$, resulting in a lower revenue from buyers at step $i$ by $v_2^i\alpha$, for $i\in[\wmax]$. We claim that for any $i$, it holds that $v_2^i\geq\frac{1}{16}$, and as a result we have $v_2^i \alpha \geq \frac{1}{16} \alpha = \eps$.

The fact that $v_2^i\geq\frac{1}{16}$ is derived from the following. $v^{\wmax}_2$ is the smallest element in the set $\set{v^1_2,v^2_2,\ldots,v^{\wmax}_2}$, and $v^{\wmax}_2\geq \frac{1}{16}$ for $\wmax\geq 2$, since the sequence, $(v^{\wmax}_2)^{\infty}_{\wmax=2}$ is monotonic increasing.

Denote by $\pbf_S$ the hypothesis returned by the learning algorithm $\mathcal{A}$ upon receiving a sample $S$ of size $m$, drawn according to $S\sim\D^m_{\sigmab}$, where $\sigmab$ is sampled uniformly over $\set{-1,+1}^{\wmax}$. Since we have a sample $S$ of size $m \lesssim \wmax/\eps$, there exists at least a constant fraction of patience windows $w$ such that the number of buyer types with patience $w$ in $S$ is less than $1/\eps$. However, for such $w$, the probability of $v_2^w$ is about $\eps$, so with non-negligible probability, $\sigma_w = -1$ and $v_2^w$ is not observed, which leads to a loss of at least $\eps$ for each such $w$.
\end{proof}

\subsection{Proofs for \cref{subsec:sample-complexity-mix}}\label{app:sample-complexity-mix}

\thmsamplecomplexitymixone*
\begin{proof}
The sample complexity of learning discrete distributions over a known domain of size $k\wmax$, with respect to the total variation distance, is $\Theta\left(\frac{k\wmax+\log\frac{1}{\delta}}{\eps^2} \right)$. See for example, \citet[Theorem 1]{canonne2020short},
this is also known as Bretagnolle Huber-Carol inequality. Denote the approximated distribution by $\hat{\D}$. Since the revenue of any strategy is at most $1$, we conclude,
\begin{align*}
r(\Pd;\D)-r(\Pd;\hat{\D})
&=
\frac{1}{2}
\sum_{(v,w)}\left(
\D[v,w]-\hat{\D}[v,w]\right)r(\Pd,(v,w)) \\
&\leq
\frac{1}{2}
\sum_{(v,w)}\left(
\D[v,w]-\hat{\D}[v,w]\right)\\
&\leq \eps.
\end{align*}

\end{proof}

\thmsamplecomplexitymixtwo*
\begin{proof}
From \cref{lem:eps-w-optimal-mixed}, we get an $\eps\wmax$-optimal strategy. By plugging it in the aforementioned proof, the claim follows.
\end{proof}

\subsection{Proofs for \cref{subsec:regret}}\label{app:regret}

\thmregretpure*
\begin{proof}
We describe an algorithm $\cA_{ERM}$ and its guarantees. At times $2^i$ for $1\leq i \leq  \log T$, we call an $\ERM$ on entire sequence $z_1,\ldots,z_t$, assuming that $T$ is a power of $2$. 
Each time we invoke the $\ERM$, we have an error of at most $\eps$ with probability $1-\delta'$. We make $\log T$ calls, and the failure probability grows by factor $\log T$, by taking $\delta'=\delta/\log T$ we get the following regret bound with probability $1-\delta$. We calculate the regret of $\cA_{ERM}$ for playing strategies $\p_1, \ldots, \p_T$,

\begin{align*}
\regret_T^{\pureset} \big(\cA_{\ERM};\D\big)
&=
\max_{\p^{\star}\in \pureset}\sum_{t=1}^T 
\Eu{z_t\sim \D}{r(\p^{\star};z_t) - r(\p_t;z_t)}  
\\
&=
\sum_{t=1}^{T} 
\left[ \min \left\{
\frac{\log^{1/3}(\wmax)}{t^{1/3}}\;,\;
\frac{\sqrt{\wmax}}{t^{1/2}}\right\}+\frac{1}{t^{1/2}}\sqrt{\log\frac{\log T}{\delta}} \right]
\\
&\lesssim
\sum_{i=0}^{\log T} 2^i
\left[\min \left\{
\frac{\log^{1/3}(\wmax)}{2^{i/3}}\;,\;
\frac{\sqrt{\wmax}}{2^{i/2}}\right\}+\frac{1}{2^{i/2}}\sqrt{\log\frac{\log T}{\delta}}\right]
\\
&=
\sum_{i=0}^{\log T} 2^i
\cdot \min \left\{
\frac{\log^{1/3}(\wmax)}{2^{i/3}}\;,\;
\frac{\sqrt{\wmax}}{2^{i/2}}\right\}
+
\sum_{i=0}^{\log T} 2^{i/2}\sqrt{\log\frac{\log T}{\delta}}
\\
&\lesssim
\min \left\{ T^{2/3}\log^{1/3}(\wmax),\sqrt{T}\sqrt{\wmax} \right\}
+\sqrt{T}\sqrt{\log\frac{\log T}{\delta}}.
\end{align*}
\end{proof}

\thmregretmixone*
\begin{proof}
We describe an algorithm $\cA_{ERM}$ and its guarantees. At times $2^i$ for $1\leq i \leq  \log T$, we call an $\ERM$ on entire sequence $z_1,\ldots,z_t$, assuming that $T$ is a power of $2$. 
Each time we invoke the $\ERM$, we have an error of at most $\eps$ with probability $1-\delta'$. We make $\log T$ calls, and the failure probability grows by factor $\log T$, by taking $\delta'=\delta/\log T$ we get the following regret bound with probability $1-\delta$. We calculate the regret of $\cA_{ERM}$ for playing strategies $\Pd_1, \ldots, \Pd_T$,

\begin{align*}
\regret_T^{\mixedsetV} \big(\cA_{\ERM};\D\big)
&=
\max_{\Pdst\in\mixedsetV}\sum_{t=1}^T 
\Eu{z_t\sim \D}{r(\Pdst;z_t) - r(\Pd_t;z_t)}  
\\
&=
\sum_{t=1}^{T} 
\frac{\sqrt{|\V|\cdot\wmax}}{t^{1/2}}+\frac{1}{t^{1/2}}\sqrt{\log\frac{\log T}{\delta}}
\\
&\lesssim
\sum_{i=0}^{\log T} 2^i
\left[
\frac{\sqrt{|\V|\cdot\wmax}}{2^{i/2}}+\frac{1}{2^{i/2}}\sqrt{\log\frac{\log T}{\delta}}\right]
\\
&=
\sum_{i=0}^{\log T} 2^{i/2}
\sqrt{|\V|\cdot\wmax}
+
\sum_{i=0}^{\log T} 2^{i/2}\sqrt{\log\frac{\log T}{\delta}}
\\
&\lesssim
\sqrt{T}\sqrt{|\V|\cdot\wmax}
+\sqrt{T}\sqrt{\log\frac{\log T}{\delta}}.
\end{align*}
\end{proof}

\thmregretmixtwo*

\begin{proof}
We describe an algorithm $\cA_{ERM}$ and its guarantees. At times $2^i$ for $1\leq i \leq  \log T$, we call an $\ERM$ on entire sequence $z_1,\ldots,z_t$, assume that $T$ is a power of $2$. 
Each time we invoke the $\ERM$, we have an error of at most $\eps$ with probability $1-\delta'$. We make $\log T$ calls, and the failure probability grows by factor $\log T$, by taking $\delta'=\delta/\log T$ we get the following regret bound with probability $1-\delta$. We calculate the regret of $\cA_{ERM}$ for playing strategies $\Pd_1, \ldots, \Pd_T$,

\begin{align*}
\regret_T^{\mixedset} \big(\cA_{\ERM};\D\big)
&=
\max_{\Pdst\in\mixedset}\sum_{t=1}^T 
\Eu{z_t\sim \D}{r(\Pdst;z_t) - r(\Pd_t;z_t)}  
\\
&=
\sum_{t=1}^{T} 
\left[\frac{\wmax^{4/3}}{t^{1/3}}+\frac{\wmax}{t^{1/2}}\sqrt{\log\frac{\log T}{\delta}}\right]
\\
&\lesssim
\sum_{i=0}^{\log T} 2^i
\left[
\frac{\wmax^{4/3}}{2^{i/3}}+\frac{\wmax}{2^{i/2}}\sqrt{\log\frac{\log T}{\delta}}\right]
\\
&=
\sum_{i=0}^{\log T} \left[2^{2i/3}
\wmax^{4/3}
+2^{i/2}\wmax\sqrt{\log\frac{\log T}{\delta}}\right]
\\
&\lesssim
T^{2/3}\wmax^{4/3}
+\sqrt{T}\wmax\sqrt{\log\frac{\log T}{\delta}}.
\end{align*}
\end{proof}

\end{document}